\newtheorem{thm}{\protect\theoremname}
\theoremstyle{plain}
\newtheorem{lem}[thm]{\protect\lemmaname}
\theoremstyle{plain}
\newtheorem{rem}[thm]{\protect\remarkname}
\theoremstyle{plain}
\newtheorem*{lem*}{\protect\lemmaname}
\theoremstyle{plain}
\newtheorem{prop}[thm]{\protect\propositionname}
\theoremstyle{plain}
\newtheorem{assumption}[thm]{Assumption}
\newtheorem{defn}[thm]{Definition}
\providecommand{\corollaryname}{Corollary}
\providecommand{\lemmaname}{Lemma}
\providecommand{\propositionname}{Proposition}
\providecommand{\remarkname}{Remark}
\providecommand{\theoremname}{Theorem}
\newcommand{\mc}[1]{\mathcal{#1}}
\newcommand{\eps}{\varepsilon}
\newcommand{\abs}[1]{\left\lvert#1\right\rvert}
\newcommand{\norm}[1]{\left\lVert#1\right\rVert}
\newcommand{\tnorm}[1]{{\left\vert\kern-0.25ex\left\vert\kern-0.25ex\left\vert#1\right\vert\kern-0.25ex\right\vert\kern-0.25ex\right\vert}}
\newcommand{\ud}{\,\mathrm{d}}
\newcommand{\Or}{\mathcal{O}}
\newcommand{\RR}{\mathbb{R}}
\newcommand{\CC}{\mathbb{C}}
\newcommand{\wt}{\widetilde}
\newcommand{\Tr}{\mathrm{Tr}}
\renewcommand{\ket}[1]{\ensuremath{\left|#1\right\rangle}}
\renewcommand{\bra}[1]{\ensuremath{\left\langle#1\right|}}
\newcommand{\diag}{\operatorname{diag}}
\newcommand{\bvec}[1]{\mathbf{#1}}
\newcommand{\vb}{\bvec{b}}
\newcommand{\rev}[1]{{#1}}
\newcommand{\revv}[1]{{#1}}
\newcommand{\revvv}[1]{{#1}}
\newcommand{\ketbra}[2]{|#1\rangle\!\langle #2 |}
\begin{document}

\setstcolor{red}
\title{Rapid quantum ground state preparation via dissipative dynamics}
\author{Yongtao Zhan}
\thanks{These authors contributed equally to this work.}
\affiliation{Institute for Quantum Information and Matter, California Institute of Technology}
\affiliation{Division of Physics, Mathematics, and Astronomy, California Institute of Technology}
\author{Zhiyan Ding}
\thanks{These authors contributed equally to this work.}
\affiliation{Department of Mathematics, University of California, Berkeley}
\author{Jakob Huhn}
\affiliation{Department of Physics and Arnold Sommerfeld Center for Theoretical Physics, Ludwig-Maximilians-Universit{\"a}t M{\"u}nchen}
\affiliation{Department of Mathematics, University of California, Berkeley}
\author{Johnnie Gray}
\affiliation{Division of Chemistry and Chemical Engineering, California Institute of Technology}
\author{John Preskill}
\affiliation{Institute for Quantum Information and Matter, California Institute of Technology}
\affiliation{Division of Physics, Mathematics, and Astronomy, California Institute of Technology}
\affiliation{AWS Center for Quantum Computing}
\author{Garnet Kin-Lic Chan}
\affiliation{Institute for Quantum Information and Matter, California Institute of Technology}
\affiliation{Division of Chemistry and Chemical Engineering, California Institute of Technology}
\author{Lin Lin\orcidlink{0000-0001-6860-9566}}
\thanks{linlin@math.berkeley.edu}
\affiliation{Department of Mathematics, University of California, Berkeley}
\affiliation{Applied Mathematics and Computational Research Division, Lawrence Berkeley National Laboratory}

\begin{abstract}
Inspired by natural cooling processes, dissipation has become a promising approach for preparing low-energy states of quantum systems. However, the potential of dissipative protocols remains unclear beyond certain commuting Hamiltonians. This work provides significant analytical and numerical insights into the power of dissipation for preparing the ground state of noncommuting Hamiltonians.  For quasi-free dissipative dynamics, including certain 1D spin systems with boundary dissipation, our results reveal a new connection between the mixing time in trace distance and the spectral properties of a non-Hermitian Hamiltonian, leading to an explicit and sharp bound on the mixing time that scales polynomially with system size.  For more general spin systems, we develop a tensor network-based algorithm for constructing the Lindblad jump operator and for simulating the dynamics. Using this algorithm, we demonstrate numerically that dissipative ground state preparation protocols can achieve rapid mixing for certain 1D local Hamiltonians under bulk dissipation, with a mixing time that scales logarithmically with the system size. We then prove the rapid mixing result for certain weakly interacting spin and fermionic systems in arbitrary dimensions, extending recent results for high-temperature quantum Gibbs samplers to the zero-temperature regime.  Together, these results show that dissipation can be a powerful tool for ground state preparation, with potential applications across condensed matter physics, quantum materials science, and beyond.

\end{abstract}

\maketitle


\section{Introduction}\label{sec:intro}

Ground state preparation is one of the most important challenges in quantum many-body physics, quantum chemistry, and materials science. Quantum algorithms, such as quantum phase estimation (QPE), quantum singular value transformation (QSVT), adiabatic state preparation (ASP) and their variants \cite{FarhiGoldstoneGutmannEtAl2000,AlbashLidar2018,OBrienTarasinskiTerhal2019,GilyenSuLowEtAl2019,GeTuraCirac2019,LinTong2020a,LinTong2022,WanBertaCampbell2022},  offer a pathway to tackle challenging ground state preparation problems beyond the capabilities of classical computers. Dissipative dynamics,  such as Lindblad dynamics, provides a distinct approach to state preparation. This approach evolves the system density matrix under engineered dissipation and Hamiltonian dynamics and encodes the target state as the stationary-state solution of the Lindblad master equation.

Dissipative techniques, and state preparation methods employing midcircuit measurements in general, have been widely applied to prepare matrix product states, ground states of stabilizer codes, spin systems, and other states exhibiting long-range entanglement \cite{TerhalDiVincenzo2000,KrausBuchlerDiehlEtAl2008,VerstraeteWolfIgnacioCirac2009,RoyChalkerGornyiEtAl2020,zhou2021symmetry,Cubitt2023,WangSnizhkoRomitoEtAl2023,LuLessaKimEtAl2022,Foss-FeigTikkuLuEtAl2023,KalinowskiMaskaraLukin2023,langbehn2024dilute}. Compared to traditional unitary quantum algorithms as well as adiabatic algorithms, dissipative approaches offer certain inherent robustness to noise, and may bypass the need for complex initialization procedures, making them attractive for implementation on early fault-tolerant quantum devices~\cite{CubittLuciaMichalakisEtAl2015,TrivediFrancoRubioCirac2024,PhysRevA.95.042302}. However, many existing dissipative protocols are tailored for highly structured and frustration-free Hamiltonians. For instance, the ground state of the parent Hamiltonian of a stabilizer code can be efficiently prepared using either quantum error correction protocols or dissipative dynamics \cite{VerstraeteWolfIgnacioCirac2009}, but most physical Hamiltonians \rev{(i.e., Hamiltonians that are actually relevant for scientific applications)} lack these favorable structures.

Encouragingly, recent years have seen significant advances in developing new dissipative protocols for Gibbs state preparation~\cite{Temme_2011,MozgunovLidar2020,ChenBrandao2021,shtanko2021preparing,RallWangWocjan2023,ChenKastoryanoBrandaoEtAl2023,ChenKastoryanoGilyen2023,DingLiLin_KMS,gilyen2024quantum,JiangIrani2024}, \rev{as well as in understanding their effectiveness by analyzing the mixing time~\cite{TemmeKastoryanoRuskaiEtAl2010,KastoryanoTemme2013,BardetCapelGaoEtAl2023,rouz2024,DingLiLinZhang2024,kochanowski2024rapid,rouze2024optimal,gamarnik2024slow,tong2024fast}, which quantifies the time required to drive any initial state to the steady state of the dissipative dynamics (see definition in~\cref{sec:Lindalgo}).} Several dissipative protocols~\cite{Cubitt2023,MiMichailidisShabaniEtAl2024,ChenHuangPreskillEtAl2024,DingChenLin2024,li2024dissipative,LambertCirioLinEtAl2024,eder2024quantum,motlagh2024ground,LloydMichailidisMiEtAl2025} have also been designed to prepare the ground state of noncommuting Hamiltonians. Such protocols will still encounter the Quantum Merlin-Arthur hardness of ground state preparation in the worst-case scenario, where the challenge can manifest as exponentially long mixing times. Nonetheless, these methods more closely resemble  cooling processes in nature, and offer the potential for significantly shorter mixing times in certain physical Hamiltonians.

The theoretical characterization of efficient ground state preparation protocols is however much more challenging than that for thermal states. A key distinction lies in the invertibility of thermal states, which is essential for the concept of quantum detailed balance conditions (DBC)~\cite{Alicki1976,FagnolaUmanita2010,CarlenMaas2017,ChenKastoryanoBrandaoEtAl2023,DingLiLin_KMS}. In contrast, \rev{the density matrix of a pure ground state has rank one and is therefore non-invertible, which makes most existing theoretical tools inapplicable in this setting}. Numerically, these protocols can also be difficult to simulate for systems beyond the reach of exact diagonalization methods, as constructing the corresponding Lindblad jump operators is significantly more complex than that in typical Lindblad dynamics.

In this work, we make significant progress in understanding the capabilities of dissipative ground state protocols through both analytical and numerical investigations.~\rev{A concise overview of the numerical and theoretical results is provided in~\cref{sec:summary}.}

\rev{The rest of the paper is organized as follows. In \cref{sec:Lindalgo}, we review the Lindblad-based ground-state preparation algorithm, introducing the notion of mixing time and discussing considerations for estimating the resource requirements of dissipative protocols. Before the full discussion, \cref{sec:summary} provides an overview of the main results. We develop a tensor-network method for simulating general Lindblad dynamics on classical computers in \cref{sec:TNmethod}.  In \cref{sec:numer_quasifree}, we present the performance of the ground-state preparation protocol for a variety of systems governed by quasi-free Lindblad dynamics. In \cref{sec:numer_TNmethod}, we report the numerical performance of the tensor-network method for simulating the ground state preparation process beyond quasi-free systems. \cref{sec:compare_adiabatic} provides a concrete example comparing dissipative protocols with adiabatic state preparation methods for preparing ground states. On the theoretical side, in \cref{sec:mixingtime_quasifree} we estimate the convergence rate in trace distance for quasi-free systems, confirming the numerical results of \cref{sec:numer_quasifree}. Our rigorous analysis of rapid mixing for ground-state preparation is presented in \cref{sec:provable_rapid}. Background material, detailed proofs, and additional numerical results are collected in the Appendices.}

\section{Lindblad-based ground state preparation algorithm}\label{sec:Lindalgo}

\rev{The main goal of this work is to examine the performance of the Lindblad-based ground state preparation algorithm introduced in Ref.~\cite{DingChenLin2024}. This dissipative algorithm, inspired by gradient descent dynamics in classical systems, employs carefully designed jump operators to iteratively reduce the system's energy and can prepare ground states for noncommuting Hamiltonians.}

\paragraph*{\rev{Algorithmic construction---}}

The Lindblad master equation for ground state preparation proposed in\rev{~\cite[Eq. (1)]{DingChenLin2024}} takes the form
\begin{equation}\label{eq:lindblad}
\frac{\mathrm{d} \rho}{\mathrm{d}t}=\mathcal{L}[\rho]=-i[H, \rho]+\sum_a K_a \rho K_a^{\dagger}-\frac{1}{2}\left\{K_a^{\dagger} K_a, \rho\right\}.
\end{equation}
We refer to $K_a$ as a jump operator, $-i[H, \rho]$ as the coherent part of the dynamics,  and $\sum_a K_a \rho K_a^{\dagger}-\frac{1}{2}\left\{K_a^{\dagger} K_a, \rho\right\}$ as the dissipative part of the dynamics, respectively.

Starting from a set of  coupling operators $\{A_a\}$, whose selection will be discussed in detail later, the corresponding jump operator $K_a$ is engineered to ``shovel'' high energy components in the density matrix towards lower energy ones  (\cref{fig:main_illustrate}a). \rev{The success of the ground state preparation algorithm relies on the assumption that, starting from a simple initial state \rev{(e.g. the all-zero state or the maximally mixed state)}, which contains contributions from many high energy states such as $\ket{\psi_{j_1}}$ for some $j_1>0$, there exist efficient transition pathways  $\psi_{j_1} \to \psi_{j_2} \to \cdots \to \psi_0$. }


In the energy eigenbasis, the jump operator takes the form
\begin{equation}\label{eq:jump_operators}
K_a =\sum_{i, j} \hat{f}\left(\lambda_i-\lambda_j\right)\left|\psi_i\right\rangle \left\langle\psi_i|A_a| \psi_j\right\rangle\left\langle\psi_j\right|.
\end{equation}
Here $\{\lambda_i,\ket{\psi_i}\}$ represent eigenpairs of the system Hamiltonian $H$ ordered such that $\lambda_0<\lambda_0+\Delta=\lambda_1\leq \cdots $, and $\hat{f}\left(\omega\right)$ is a filter function in the frequency domain. \rev{The filter function $\hat{f}(\omega)$ is supported only on the negative axis $(-\omega_{\max}, 0)$ for some $\omega_{\max}$ to be specified later. As a result, in the energy eigenbasis, only transitions from $\ket{\psi_j}$ to $\ket{\psi_i}$ satisfying $-\omega_{\max} \le \lambda_i - \lambda_j \le 0$ are allowed. The parameter $\omega_{\max}$ therefore characterizes the maximal energy change per application of the jump operator. Moreover, for any choice of $A_a$, we have $K_a \ket{\psi_0} = 0$, since there is no eigenstate with energy lower than $\lambda_0$. Hence, the ground state $\sigma = |\psi_0\rangle\langle\psi_0|$ is always a fixed point, or stationary state of the dynamics.}

\rev{\cref{eq:jump_operators} expresses the jump operator $K_a$ using the eigendecomposition of $H$. It can be equivalently represented in the time domain as follows. By expressing $\hat{f}(\omega)$ as a filter function in the time domain via the Fourier transform
\begin{equation}
f(s) := \frac{1}{2\pi} \int_{\RR} \hat{f}(\omega) e^{-i\omega s} \, \mathrm{d}\omega,
\end{equation}
and using the spectral decomposition of $H$, we obtain
}
\begin{equation}\label{eqn:jump_time}
K_a =\int_{-\infty}^{\infty} f(s) e^{iHs} A_a e^{-iHs} \mathrm{d}s.
\end{equation}
\rev{Although the construction may appear relatively complicated, we can represent the jump operator coherently on a quantum computer using a block encoding~\cite{GilyenSuLowEtAl2019}. The resulting linear combination of Heisenberg evolutions of $A_a$ involves only queries to Hamiltonian simulation and does not require diagonalizing the Hamiltonian $H$. This, in turn, requires efficiently approximating the integral in \eqref{eqn:jump_time} through an appropriate numerical quadrature scheme. }

\rev{Let $\norm{H}$ and $\Delta$ denote the spectral radius and the spectral gap of the Hamiltonian $H$, respectively. To construct this quadrature, $f(s)$ should decay rapidly as $\abs{s} \to \infty$ so that the integration range can be truncated. By the duality between the real-space and frequency-space representations of a function, $\hat{f}(\omega)$ should be as smooth as possible in the frequency domain, while still allowing the jump operator to efficiently induce a transition from $\ket{\psi_1}$ to $\ket{\psi_0}$. This implies that $\hat{f}(\lambda_0-\lambda_1)=\hat{f}(-\Delta)$ should have a non-negligible value. Together with $\hat{f}(0)=0$, the function $\hat{f}$ must make a sharp transition within an energy window of size $\Delta$. This implies that in the time domain, $f(s)$ is approximately supported on an interval whose size is proportional to $\Delta^{-1}$.}

\rev{For efficient discretization of the integral, we note that $f(s)$ oscillates in the time domain with a wavelength on the order of $\omega_{\max}^{-1}$, which implies that $\omega_{\max}$ should not be chosen excessively large.
Naturally, we choose $\omega_{\max} \le 2\norm{H}$, since no energy transition beyond this range can occur. For simplicity of the analysis, in this work we always choose $\omega_{\max} = 2\norm{H}$. In practice, it is often sufficient to choose $\omega_{\max}$ to be much smaller and independent of the system size. The behavior of $\hat{f}(\omega)$ and $f(s)$ is illustrated in \cref{fig:main_illustrate}~(b),(c), respectively. Based on the discussion above, the construction of this filter function requires only a lower-bound estimate for $\Delta$, and optionally, an upper-bound estimate for $\norm{H}$.}

\vspace{1em}
\paragraph*{\rev{Quasilocality of the jump operator---}}

A fundamental question in dissipative state preparation is as follows: Given a target quantum many-body state $\sigma$, under what conditions must the jump operators be chosen so that $\sigma$ is a fixed point of the dynamics? For pure state preparation $\sigma = \ketbra{\psi}{\psi}$, several necessary conditions on the jump operators are known~\cite{KrausBuchlerDiehlEtAl2008,TicozziViola2012,TicozziViola2014}. In particular, the target state must be annihilated by each jump operator (up to a constant shift)~\cite[Proposition 1]{TicozziViola2014}.
\begin{equation}\label{eqn:annihilation_cond}
K_a \ket{\psi} = 0, \quad \forall a.
\end{equation}
This requirement places strong restrictions on the class of pure states (for example, ground states) that can be prepared using strictly \emph{local} dissipative protocols, where each $K_a$ acts nontrivially only on a fixed number of sites~\cite[Corollary 2]{TicozziViola2014}. A key observation in Ref. \cite{DingChenLin2024} and in this work is that if one allows the jump operators to be \emph{quasilocal} (i.e., nonlocal operators with exponentially decaying tails, see \cref{sec:notation} for the definition), then dissipative protocols still satisfy \cref{eqn:annihilation_cond}, and can provably prepare the ground states of a much broader family of Hamiltonians, including noncommuting ones. We note that in some experimental contexts, the term \emph{quasilocal} has been used to describe few-qubit operators that may still be challenging to realize in practice \cite{KrausBuchlerDiehlEtAl2008}. In contrast, throughout this work we adopt the convention common in mathematics and computer science, where such operators are regarded as local, and reserve the term quasilocal for operators with exponentially decaying support.

\vspace{1em}
\paragraph*{\rev{Efficient quantum simulation of Lindblad dynamics---}}

There are two main strategies for simulating the Lindblad dynamics in \eqref{eq:lindblad}. The standard approach is to employ simulation algorithms that are applicable to general forms of Lindblad dynamics. For example, high-order algorithms~\cite{CleveWang2017,LiWang2023,DingLiLin2024} can achieve near-optimal simulation cost per unit time.
Both the construction of $K_a$ and the simulation algorithms~\cite{CleveWang2017,LiWang2023,DingLiLin2024} can use multiple ancilla qubits, complex control logic and are suitable only on full fault-tolerant quantum computers.

The second, and simpler, strategy for simulating the Lindblad dynamics in \eqref{eq:lindblad} is to exploit the specific form of the jump operator in \cref{eqn:jump_time}. In particular, this approach does not explicitly construct $K_a$ but instead embeds it directly into the simulation algorithm.
This in turn can lead to algorithms that overall use only a single ancilla qubit and can be much more suitable for early fault-tolerant quantum devices. We will not discuss such algorithms in detail and refer the reader to \cite[Section III]{DingChenLin2024} for descriptions of such algorithms. We note that while the simulation algorithm proposed in~\cite[Section III]{DingChenLin2024} is designed for simulating a single jump operator using a single ancilla, when multiple jump operators are present, operator splitting can be applied to handle each jump operator separately.


\vspace{1em}
\paragraph*{\rev{Mixing time---}}

Dissipative protocols \revvv{prepare} the ground state as the stationary state of the dynamics. The total simulation time can be characterized by the \emph{mixing time}, which denotes the minimal time required for the system to  drive \textit{any} initial state to one that is close to the ground state.
This closeness can be characterized by the trace distance.
Let $\lVert A\rVert_{1}=\operatorname{Tr}\!\big(\sqrt{A^{\dagger}A}\big)$ denote the trace norm. The trace distance between density matrices $\rho_{1}, \rho_{2}$ is $D(\rho_{1},\rho_{2})=\tfrac12 \lVert \rho_{1}-\rho_{2}\rVert_{1}$.
This metric has a direct operational meaning, since for any bounded observable $O$, $\bigl|\operatorname{Tr}\!\bigl[O\bigl(\rho_1-\rho_2\bigr)\bigr]\bigr| \le \norm{O} \norm{\rho_1-\rho_2}_1 \le 2\norm{O} D(\rho_1,\rho_2)$.
Let $\sigma$ be a stationary state of the dynamics generated by $\mathcal{L}$. The mixing time with respect to the trace distance is defined as
\begin{equation}\label{eqn:mixing_tracedistance_rho_0} \tau_{\operatorname{mix}}(\eta)=\min\left\{\, t \mid D\!\big(e^{t\mathcal{L}}(\rho_{0}),\sigma\big)\le \eta,\,\forall\rho_0\right\}. \end{equation}

This definition of mixing time is widely used in theoretical analysis~\cite{TemmeKastoryanoRuskaiEtAl2010,KastoryanoTemme2013,BardetCapelGaoEtAl2023,rouz2024,DingLiLinZhang2024,kochanowski2024rapid,rouze2024optimal,gamarnik2024slow,tong2024fast}. However, in practice, the trace distance can be difficult to evaluate, and one usually cares about the mixing time for a given initial state $\rho_0$, so one may use surrogate notions of mixing  defined through physically meaningful quantities with respect to an initial state $\rho_0$. For instance, the energy-based mixing time is
\begin{equation}\label{eqn:tau_E}
\tau^{E}_{\operatorname{mix}}(\eta;\rho_0)=\min\left\{\, t \mid \bigl|\Tr[H\,e^{t\mathcal{L}}(\rho_{0})]-\lambda_{0}\bigr|\le \eta\right\}.
\end{equation}
For two density matrices $\rho,\sigma$, the fidelity $F(\rho, \sigma)= \operatorname{Tr}\left[\sqrt{\rho^{\frac{1}{2}} \sigma \rho^{\frac{1}{2}}}\right]$. When $\sigma=\lvert\psi_{0}\rangle\langle\psi_{0}\rvert$ is a pure state ($\rho$ can be a pure or mixed state), the fidelity simplifies to $F(\rho,\sigma)=\sqrt{\Tr[\braket{\psi_0|\rho|\psi_0}]}$. The fidelity-based mixing time is
\begin{equation}\label{eqn:tau_F}
\tau^{F}_{\operatorname{mix}}(\eta;\rho_0)=\min\left\{\, t \mid 1-F^2(e^{t\mathcal{L}}(\rho_{0}),\sigma)\le \eta\right\}.
\end{equation}

Even though $\tau^{E}_{\operatorname{mix}}$ and $\tau^{F}_{\operatorname{mix}}$ measure convergence through specific variables, when maximizing over all initial states $\rho_0$, they can provide both an upper bound and a lower bound of the mixing time defined via trace distance. These relations are derived in \cref{eqn:mixingtime_relations} in \cref{sec:compare_mixing_time}.

We will specify the notion used in the numerical results below. The theoretical justification will be provided directly for the mixing time in terms of the trace distance.

\vspace{1em}
\paragraph*{\rev{Resource estimate---}}

The total cost for dissipative ground-state preparation using Lindblad dynamics can be decomposed into three components: the cost associated with constructing the Lindbladian and in particular jump operators $K_a$, denoted by $C_{\mc{L}}$; the cost of simulating Lindblad dynamics per unit time denoted by $C_S$, using a Lindblad simulation algorithm as discussed earlier; and $\tau_{\operatorname{mix}}$, an upper bound on the total simulation time. Given these factors, the end-to-end resource cost is
\begin{equation}
\text{End-to-end cost} = C_{\mc{L}} \times C_S \times \tau_{\operatorname{mix}}.
\end{equation}

Using the standard approach for simulating the Lindblad dynamics, the cost for constructing $K_a$ to precision $\epsilon$ is $\wt{\Or}(\omega_{\max}\Delta^{-1}\log(1/\epsilon))$~\cite{DingChenLin2024} (see \cref{sec:notation} for the meaning of the notation $\widetilde{\Or}$).
Let $\|\mathcal{L}\|_{\mathrm{be}} := \|H\| + \frac{1}{2}\sum_{a}\|K_a\|^2$. Using the algorithm in \cite{LiWang2023} for simulating the Lindblad dynamics \eqref{eq:lindblad} up to time $t$ with precision $\epsilon$ the cost is $\mathcal{O}\left(t\|\mathcal{L}\|_{\mathrm{be}}\log\left(\frac{t\|\mathcal{L}\|_{\mathrm{be}}}{\epsilon}\right)\right)$.
For a typical physical Hamiltonian (spin, fermion, etc.) defined on $N$ sites, $\|\mathcal{L}\|_{\mathrm{be}}=\poly(N)$ and
the end-to-end cost is
\begin{equation}
\wt{\Or}(\tau_{\operatorname{mix}} \Delta^{-1} \poly(N) \polylog(1/\epsilon)).
\end{equation}
The simplified algorithm in~\cite[Section III]{DingChenLin2024} combines the step of generating $K_a$ and simulating the dynamics. It can be viewed as a first-order algorithm for simulating the continuous-time Lindblad dynamics. The end-to-end cost is~\cite[Theorem 1]{DingChenLin2024}
\begin{equation}
\wt{\Or}(\tau^2_{\operatorname{mix}} \Delta^{-1} \poly(N)/\epsilon).
\end{equation}
\cite[Theorem 2]{DingChenLin2024} further presents a discrete-time algorithm that reduces the cost from quadratic to nearly linear in $\tau_{\operatorname{mix}}$, which we do not discuss here.

Thus, for the end-to-end cost to scale polynomially with the system size $N$, the most important and challenging task is to estimate the mixing time and establish that $\tau_{\operatorname{mix}} = \poly(N)$, a property often referred to as \emph{fast mixing}. In some cases, an even stronger bound $\tau_{\operatorname{mix}} = \polylog(N)$ can be proved, which is known as \emph{rapid mixing}.

The remainder of this manuscript focuses on characterizing the effectiveness of the Lindblad dynamics for ground-state preparation quantified by mixing times.

\begin{figure}
\begin{center}
\includegraphics[width=0.48\textwidth]{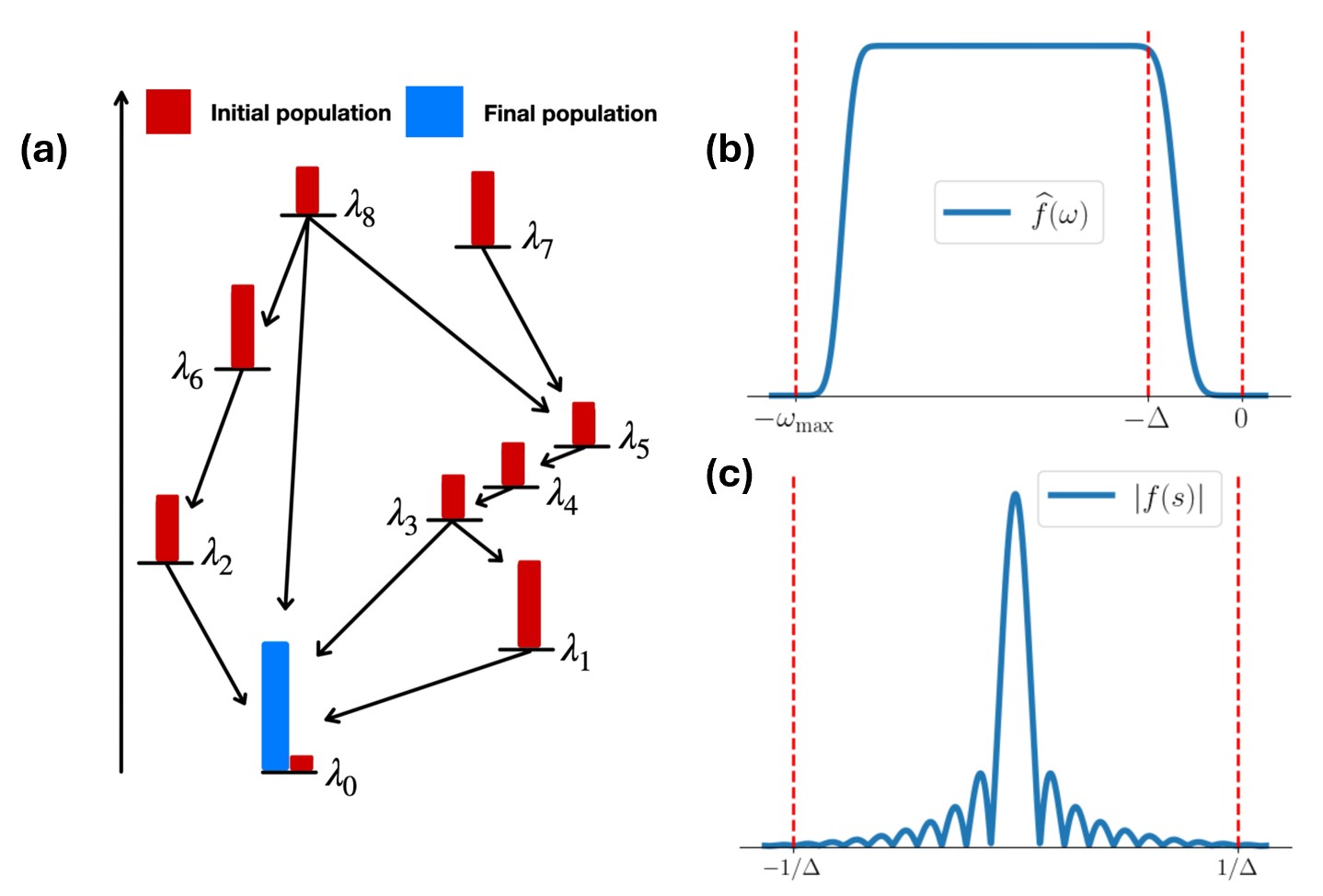}
\end{center}
\caption{\rev{(a) Schematic representation of the ground state preparation algorithm, in which high-energy components are systematically dissipated into lower-energy states until convergence to the ground state is achieved. (b) and (c) depict the associated filter function in the frequency and time domains, respectively.}}
\label{fig:main_illustrate}
\end{figure}

\section{Summary of results}\label{sec:summary}

\subsection{Numerical results}

\rev{We perform numerical simulations for both quasi-free and general dissipative dynamics. For general dissipative dynamics, we develop a new numerical simulation algorithm based on tensor networks, which is described in \cref{sec:TNmethod}.}

\vspace{1em}
\noindent\rev{\emph{Quasi-free dynamics (\cref{sec:numer_quasifree}):}}

We begin by exploring \emph{quasi-free} dissipative dynamics~\cite{Prosen2008,BarthelZhang2022}. \rev{Lindblad dynamics is called quasi-free if the Hamiltonian is quadratic in Majorana operators and jump operators are linear in Majorana operators.}
A hallmark of such systems is that physical observables, such as covariance matrices, form a closed set of equations. This enables efficient simulations of these observables for large systems. Utilizing this framework, we demonstrate numerically that the ground state of a translationally invariant 1D transverse field Ising model (TFIM) chain can be efficiently prepared, even when cooling is applied only at the boundaries of the chain. We observe that, \rev{with boundary dissipation, the mixing time as defined by physical observables scales approximately cubically with the system size, which is consistent with findings from prior numerical studies using different dissipative protocols \cite{PhysRevA.104.012414,Meghana_2020,LloydMichailidisMiEtAl2025}.}



We also observe that \rev{boundary dissipation} efficiently prepares the ground state of a cluster state Hamiltonian with a symmetry-protected topological (SPT) ground state phase. Starting from a trivial topological phase, we find that the protocol allows crossing the phase boundary, as indicated by changes in string order parameters (SOP). In all \rev{the examples we have studied of systems subjected to boundary dissipation,  we observed that} the coherent term in the Lindblad dynamics is essential for achieving convergence, even though it vanishes when applied to the ground state.

\vspace{1em}
\noindent\rev{\emph{General Lindblad dynamics (\cref{sec:TNmethod},~\cref{sec:numer_TNmethod}):}}


For general dissipative dynamics that are not quasi-free, we propose a new numerical algorithm based on the \rev{tensor network methods} to efficiently represent \revvv{jump operators} and the Lindbladian~\cite{sander2025large,PhysRevE.103.L040102}. Using this algorithm, we study the mixing time required to prepare the ground state of 1D anisotropic Heisenberg models in a magnetic field, which includes the TFIM as a special case. Dissipation is applied to each spin site (referred to as bulk dissipation), and the resulting dynamics is not quasi-free even for the integrable TFIM Hamiltonian. \rev{Our numerical results show that the Hamiltonian with bulk dissipation exhibits rapid mixing};
this mixing time scaling also applies to spin systems with weak random perturbations in their on-site interactions, whose ground states cannot be efficiently prepared by boundary dissipation alone due to the obstruction caused by Anderson-type localization. We further verify the robustness of our approach using a nonintegrable cluster-state Hamiltonian, which has a ground state in a symmetry-protected topological (SPT) phase.

\subsection{Theoretical results}

To gain an analytical understanding of the convergence behavior of our dissipative protocol, we provide theoretical guarantees that rigorously establish upper bounds on the mixing time for several physically relevant Hamiltonians. As in our numerical studies, we begin with quasi-free systems and present a general theorem upper bounding the mixing time in this case. We then move beyond quasi-free systems and demonstrate rapid mixing for weakly interacting spin and fermionic systems.

\vspace{1em}
\noindent\rev{\emph{Quasi-free dynamics (\cref{sec:mixingtime_quasifree}):}}

\rev{First, we note that rigorously establishing the mixing time in terms of trace distance poses significant challenges, even for quasi-free systems.} Previous analyses of mixing time estimates, including those for quasi-free systems, typically relied on the assumption that the stationary state is invertible, making it difficult to extend these results to ground state preparations \cite{TemmePastawskiKastoryano2014,li2024dissipative}.
We develop a new method that can overcome this difficulty by examining the spectral properties of a non-Hermitian Hamiltonian. Specifically, in the absence of dissipation, the eigenvalues of the Lindblad dynamics lie entirely on the imaginary axis. With dissipation, we \rev{show} that the mixing time measured by the trace distance is determined by the gap between the eigenvalues of this non-Hermitian Hamiltonian and the imaginary axis.

This new approach enables explicit estimates of the convergence rate in trace distance, with or without a coherent term. In particular, it \rev{proves} that the mixing time of the 1D translationally invariant TFIM with boundary dissipation scales as $\mathcal{O}(N^3 \log N)$, which is consistent with our numerical results.
The cubic scaling of the mixing time is mainly due to the long-wavelength modes, which perturb the eigenvalues of the aforementioned non-Hermitian Hamiltonian away from the imaginary axis by an amount proportional to $N^{-3}$.

\vspace{1em}
\noindent\rev{\emph{General Lindblad dynamics for weakly interacting systems (\cref{sec:provable_rapid},~\cref{sec:provable_rapid_fermion}):}}

\rev{Beyond quasi-free systems,} we consider the preparation of the ground state of a weakly interacting spin Hamiltonian in an arbitrary finite dimension. Specifically, the Hamiltonian is expressed as $H = H_0 + \varepsilon H_1$, where $H_0$ is a gapped Hamiltonian composed of noninteracting terms, and the interaction strength $\varepsilon$ is assumed to be smaller than a constant that is independent of the system size. We establish the convergence of the density matrix by analyzing the convergence of observables in the Heisenberg picture, measured through a quantity known as the oscillator norm, which measures the deviation of an observable from the identity under Heisenberg evolution with the Lindbladian. This strategy was recently utilized in the analysis of mixing times of quantum Gibbs samplers in the high-temperature regime (small inverse temperature $\beta$)~\cite{rouze2024optimal}.

Our first observation is that the definition of the oscillator norm does not rely on an invertible stationary state, and thus serves as a plausible candidate for characterizing convergence to the ground state. However, we need to modify the definition of the oscillator norm of an observable $O$ to track separately the deviation of $O$ from the identity operator along on-diagonal and off-diagonal directions.
\rev{In the presence of perturbations, our proof employs a Lieb-Robinson bound adapted to the ground state setting.}
By integrating these elements, we establish a new stability result for the convergence rate of the oscillator norm, which provides the first rigorous proof of ground state preparation protocols for noncommuting Hamiltonians.

Finally, we extend the result of weakly interacting spin systems to weakly interacting fermionic systems. The fermionic creation and annihilation operators are nonlocal in the spin basis. Therefore we need to employ a fermionic version of the partial trace to define the oscillator norm. We then prove that this modified definition of the oscillator norm can effectively characterize the rapid convergence of observables in the fermionic setting. We conclude that, for bulk dissipation, both weakly interacting spin systems and weakly interacting fermionic systems exhibit rapid mixing.

\section{Classical simulation algorithm of Lindblad dynamics}\label{sec:TNmethod}

For general Lindblad dynamics, we need to simulate the dynamics in \cref{eq:lindblad} directly to estimate the mixing time. For system sizes beyond the reach of exact diagonalization (ED), we propose an algorithm that constructs the jump operators and propagates the Lindblad dynamics using a matrix product operator (MPO) formulation. Recall that an MPO on an $N$-site system (each site of local dimension $d$) with bond dimension $D$ can be written as
\begin{equation}
\begin{aligned}
M=\sum_{s_1, s_1^{\prime}, \ldots, s_N, s_N^{\prime}=1}^d &
\Bigl(M_1^{s_1, s_1^{\prime}} \cdots M_N^{s_N, s_N^{\prime}}\Bigr)\cdot
\\
&\ket{s_1, \ldots, s_N}\!\bra{s_1^{\prime}, \ldots, s_N^{\prime}},
\end{aligned}
\end{equation}
where $M_1^{s_1, s_1^{\prime}} \cdots M_N^{s_N, s_N^{\prime}}$ defines the corresponding matrix product, with $M_1^{s_1, s_1^{\prime}}\in \CC^{1\times D}, M_N^{s_N, s_N^{\prime}}\in \CC^{D\times 1}$, and  $M_i^{s_i, s_i^{\prime}}\in \CC^{D\times D}$ for $2\le i\le N-1$. The cost of storing the matrix products is $\Or(d^2 D^2 N)$, which scales linearly with the system size $N$.

To construct the jump operators $K_a$, we start by representing both the coupling operators $A_a$ and the Hamiltonian $H$ as MPOs. We then compute the Heisenberg evolution $e^{i H s}A_a e^{-i H s}$ using the time-evolving block-decimation (TEBD) algorithm~\cite{PhysRevLett.93.040502,PhysRevLett.93.207204}. Next, we approximate the integral in \cref{eqn:jump_time} by a quadrature rule,
\begin{equation}
K_a \approx \sum_i p_i\, f(s_i)\, e^{i H s_i} A_a\, e^{-i H s_i},
\end{equation}
and compress the resulting sum to maintain a manageable bond dimension. This yields the MPO representation of the jump operator $K_a$ (see \cref{fig:TNdiagramjump} for an illustration).

\begin{figure}
\centering
\includegraphics[width=1.0\linewidth]{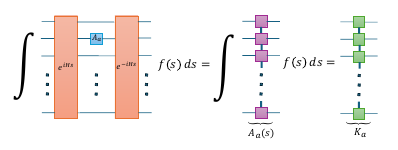}
\caption{%
\label{fig:TNdiagramjump}%
Tensor network representation of the jump operator $K_a$ in the MPO form associated from a local coupling operator $A_a$. First, the MPO representation of the operator $e^{iH s_j} A_a e^{-iH s_j}$ is constructed for a set of time steps $\{s_j\}$. Next, a weighted summation from discretizing the integral
$\sum_j p_j f(s_j) e^{iH s_j} A_a e^{-iH s_j}$ is performed to combine these operators. Finally, the resulting MPO is compressed to reduce the bond dimension, yielding an efficient representation of $K_a$.
}
\end{figure}
In practice, since $A_a$ is an operator rather than a state, the TEBD algorithm is implemented by vectorizing $A_a$ into an matrix produc state (MPS) (often referred to as the Choi isomorphism) \cite{weimer2021simulation}. Concretely, we reshape each site's row and column indices into a single combined index of dimension $d^2$, as illustrated in \cref{fig:ep1}:

\begin{equation}
\ket{M}_{\sharp} = \sum_{i_1=0}^{d^2 - 1} \cdots \sum_{i_N=0}^{d^2 - 1}
c_{i_1,\ldots,i_N} \,\ket{i_1}_{\sharp} \otimes \cdots \otimes \ket{i_N}_{\sharp}.
\end{equation}
In a tensor network diagram, this corresponds to ``gluing'' the row and column indices together on each site. Converting back from an MPS to an MPO is achieved by splitting each combined index back into two separate indices.
Under vectorization, the Heisenberg evolution can be written as

\begin{equation}
\ket{\,e^{iHs} A_a \, e^{-iHs}}_{\sharp} \;=\; e^{-i\,(I \otimes H^T \,-\, H \otimes I)\,s}\,\ket{A_a}_{\sharp},
\end{equation}
so that the standard TEBD algorithm can be applied directly. To compute the jump operator $K_a$, we proceed as follows: (i) convert $A_a$ from its MPO form into an MPS using the Choi isomorphism, (ii) apply TEBD to evolve the vectorized operator over time, and (iii) perform a summation over the discrete time steps, followed by bond dimension compression. The resulting MPS is then converted back into an MPO, yielding $K_a$.

\begin{figure}
\centering
\includegraphics[width=0.6\linewidth]{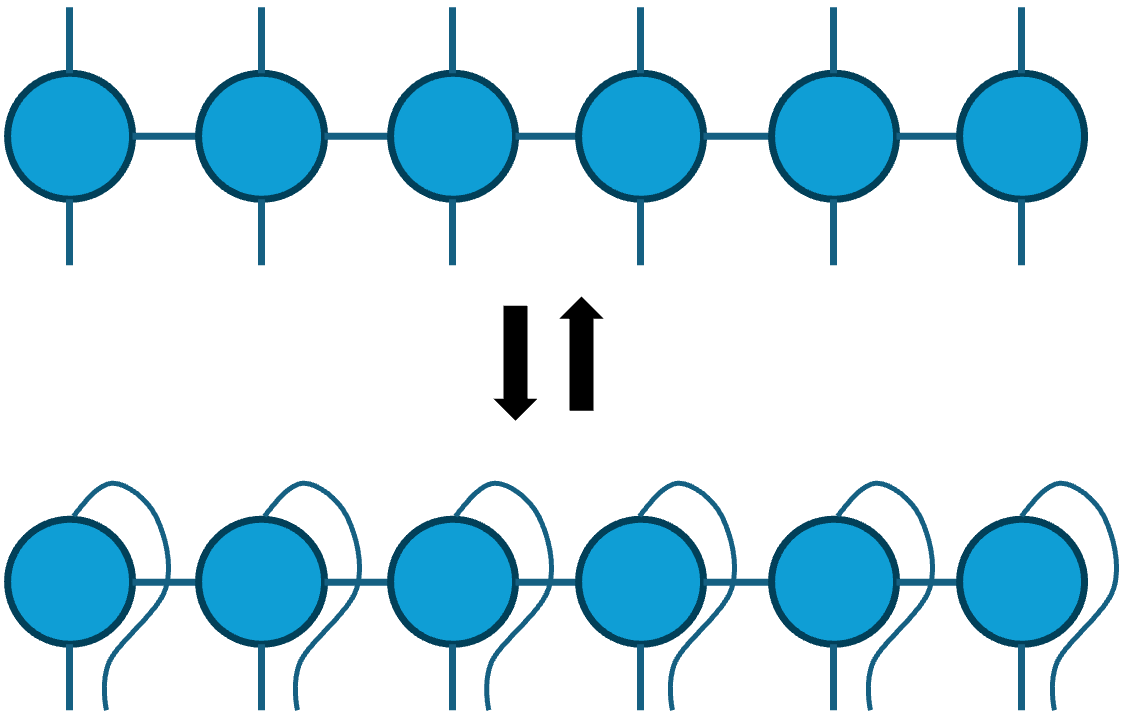}
\caption{%
\label{fig:ep1}%
Illustration of vectorizing an MPO into an MPS using the Choi isomorphism and then converting it back. Each pair of local site indices in the MPO is reshaped into a single combined index, allowing standard MPS techniques, such as TEBD, to be applied. The reverse process restores the original MPO from the MPS by reshaping the combined indices back into pairs.}
\end{figure}

Once each jump operator is expressed as an MPO, and given that $\rho(t)$ is also stored as an MPO, we need to evaluate the right-hand side of \cref{eq:lindblad}, i.e., $\mathcal{L}(\rho(t))$. This \revvv{is illustrated} using tensor network diagrams in \cref{fig:TNdiagram}. However, direct multiplication and addition of MPOs \revvv{tend to increase} the bond dimension quickly. For example, in the absence of a compression step, multiplying two MPOs with bond dimension $D$ results in an MPO with bond dimension $D^2$, while adding two MPOs yields an MPO with bond dimension $2D$. If we choose $\{A_a\}$ to be the set of all Pauli matrices, and assume every operator in the Lindbladian has bond dimension $D$, the bond dimension of the MPO representation for $\mathcal{L}(\rho)$ would become $\Or(ND^3)$.

Forming such an MPO and then compressing it would have an onerous cost of $\mathcal{O}(D^9)$.
Instead, we directly fit an MPO of bond dimension $D$ to the
uncontracted sum of triple MPO products as depicted in Fig.~\ref{fig:TNdiagram}, adapting the method of~\cite{VerstraeteCirac2004}.
This only requires computing the overlap of the ansatz with these terms (see Appendix~\ref{sec:additional_numer_tensor}).
The initial guess is chosen as the ``zip-up'' compression~\cite{StoudenmireWhite2010} of the first term.
Both this and the subsequent fitting iterations have a cost of $\mathcal{O}(D^5)$.

After obtaining the compressed MPO representation of $\mathcal{L}(\rho)$, we may employ any suitable numerical integrator to propagate $\rho(t)$ forward in time. For large systems, each evaluation of $\mathcal{L}(\rho)$ is expensive, so it is beneficial to minimize the number of function evaluations. For bulk dissipation, the cost of evaluating $\mathcal{L}(\rho)$ is large, but the mixing time can be very short. Therefore we adopt a simple forward-Euler method. For boundary dissipation, the mixing time can be much longer, and there we employ a more accurate 4th-order Runge-Kutta method instead. More advanced solvers can be explored in future studies to improve the accuracy or efficiency.

\begin{figure}
\centering
\includegraphics[width=1\linewidth]{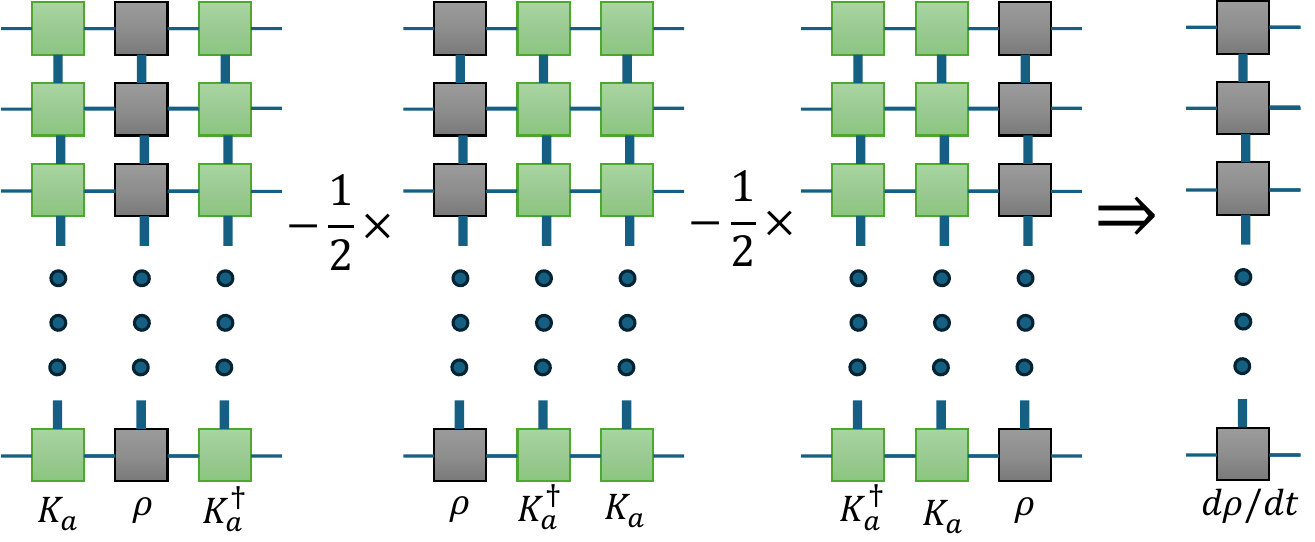}
\caption{%
\label{fig:TNdiagram}%
Illustration of the tensor network computation associated with a single jump operator $K_a$ in $\mc{L}(\rho)$. }
\end{figure}

\section{Numerical Results}
\label{sec:numerics}
\subsection{Quasi-free dissipative dynamics}\label{sec:numer_quasifree}

\paragraph*{Mapping spin systems to quasi-free dynamics---}

The Hamiltonian of a 1D translationally invariant Transverse Field Ising Model (TFIM) with open boundary conditions is
\begin{equation}\label{eqn:H_TFIM}
    H=-g \sum_{i=1}^N Z_i-J \sum_{i=1}^{N-1} X_i X_{i+1}.
\end{equation}
Using the Jordan-Wigner transformation, the Hamiltonian can be written as a quadratic Majorana operator with $2N$ modes (\cref{sec:prelim_majorana})
\begin{equation}\label{eqn:H_TFIM_majorana}
        H =2i J \sum_{j=1}^{N-1} w_{j+N}w_{j+1} +2i g \sum_{j=1}^N  w_{j}w_{j+N}.
\end{equation}
We choose the coupling operators to be Pauli matrices  $X_1$ and $Y_1$ on the boundary of the chain, which are linear in the Majorana operators: $X_1=\sqrt{2}w_1, Y_1=\sqrt{2}w_{1+N}$.
By Thouless's theorem, the Heisenberg evolution of a single Majorana operator under a quadratic Hamiltonian is still linear in Majorana operators.
Therefore the jump operator in \cref{eqn:jump_time} can be expressed as
\begin{equation}\label{eqn:K_a}
K_a=\sum_{j=1}^{2N} \zeta_{ja} w_{j}, \quad A_a\in\{X_1,Y_1\},
\end{equation}
for some coefficients $\zeta_{ja}\in\CC$.

Lindblad dynamics with a Hamiltonian term that is quadratic in Majorana operators, and jump operators linear in Majorana operators is called \emph{quasi-free}.
Using a vectorization process known as ``third quantization''~\cite{Prosen2008,BarthelZhang2022}, each term in the vectorized Lindbladian becomes quadratic in an enlarged set of Majorana operators. Physical observables of a quasi-free dynamics, such as the covariance matrix
\begin{equation}
\Gamma_{pq}=i\braket{w_p w_q}-\frac{i}{2} \delta_{pq},
\end{equation}
 form a closed set of equations, which involves only a matrix of size $2N\times 2N$ in time (\cref{sec:prelim_majorana}). This in turn can be used to evaluate other physical quantities such as the energy. Specifically, for a Hamiltonian quadratic in Majorana operators,
$H=\sum_{p,q=1}^{2N} h_{pq} w_p w_q$,
where $h$ is a Hermitian and purely imaginary (and thus traceless) coefficient matrix, the energy is given by
\begin{equation}\label{eqn:energy_quadratic}
E = \sum_{pq}  h_{pq} \braket{w_p w_q}=i\Tr[ h^T \Gamma].
\end{equation}
For quasi-free systems with
Gaussian initial states, higher order covariance matrices are determined by the covariance matrix $\Gamma$ according to  Wick's theorem~\cite{SuraceTagliacozzo2022}.

\vspace{1em}
\paragraph*{TFIM with boundary dissipation---}

Similar to the derivation in \cref{sec:prelim_majorana}, if we choose the Pauli operators $X_N,Y_N$ on the other end of the boundary, the resulting Lindblad dynamics is also quasi-free. Hence we choose $\{X_1, Y_1, X_N, Y_N\}$ to be the coupling operators, and construct the corresponding jump operators according to \cref{eqn:jump_time}.

\begin{figure}[ht]
  \centering
    \begin{subfigure}[b]{0.35\textwidth}
        \includegraphics[width=\textwidth]{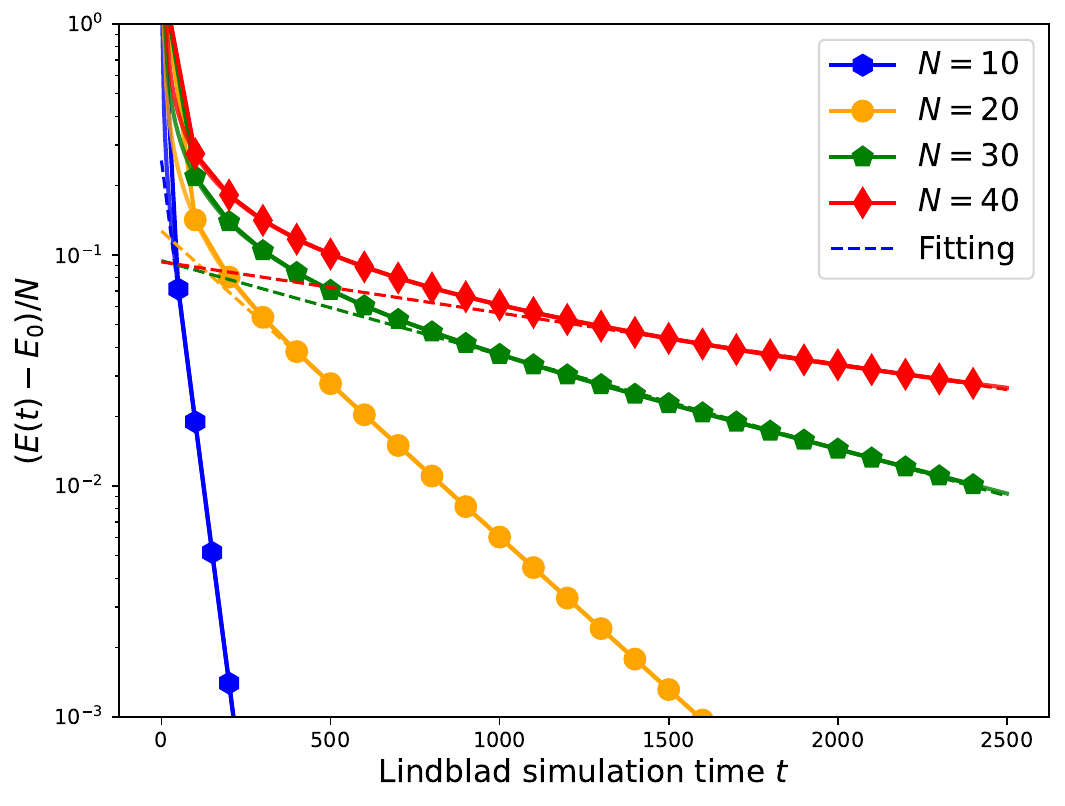}
        \caption{ \centering }
    \end{subfigure}
    \begin{subfigure}[b]{0.35\textwidth}
        \includegraphics[width=\textwidth]{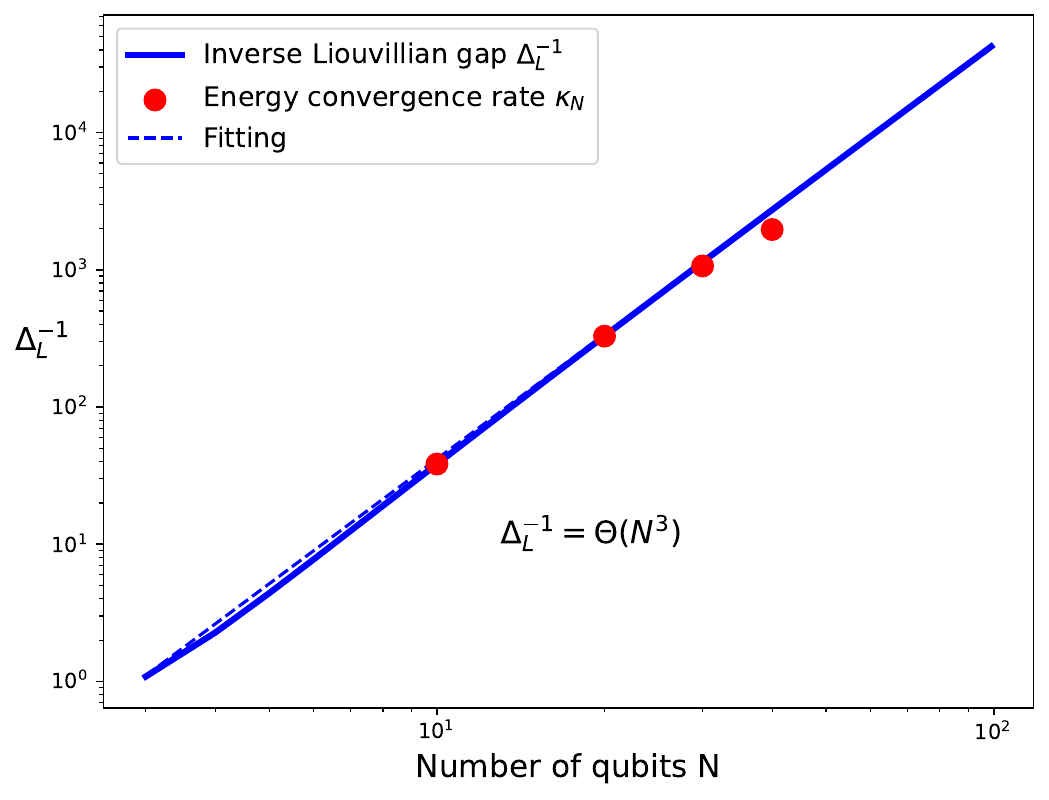}
        \caption{ \centering }
    \end{subfigure}
  \caption{
  Numerical results of 1D TFIM \eqref{eqn:H_TFIM} with $J=1,g=1.5$, using $\{A_a\}=\{X_1, Y_1, X_N, Y_N\}$ as coupling operators. (a) Convergence of energy starting from the maximally mixed state. The dashed lines are exponential fits of the asymptotic behaviour of energy decay, meaning $(E(t)-\rev{E_0})/N=\Theta(\exp(-\kappa_N t))$ for constants $\kappa_N$ when $t$ is sufficiently large. (b) The scaling of the inverse of the Liouvillian gap $\Delta^{-1}_{\mathcal{L}}$ with respect to the system size $N$. Red points are the energy convergence rate $\kappa_N$ calculated from fitting the data in (a).}
  \label{sec:Results: Fig: tfimdynamics}
\end{figure}

Using the covariance matrix $\Gamma(t)$, we can evaluate the energy $E(t)$ via \cref{eqn:energy_quadratic}.  We note that the convergence of the many-body density matrix cannot always be inferred from the covariance matrix $\Gamma(t)$. As a surrogate, in this section, we characterize the mixing time \rev{in terms of how rapidly the energy per site converges to its value in the ground state.
Specifically, we define the mixing time as in \cref{eqn:tau_E}, except that we measure the convergence in terms of the energy per site, and start from a specific initial state, namely the maximally mixed state $\rho_0 = \mathbb{I} / 2^N$.}


\cref{sec:Results: Fig: tfimdynamics}a demonstrates the energy decay of the boundary-dissipated TFIM under Lindbladian dynamics.
 Numerical simulations show that the energy rapidly converges towards the ground state energy initially, and then enters an asymptotic exponentially decaying regime $\propto e^{-\Delta_{\mc{L}} t}$. The convergence rate $\Delta_{\mc{L}}$ is the gap of the Lindbladian (also called the Liouvillian gap). We may extract $\Delta_{\mathcal{L}}$ using an exponential fit of the dynamics, and can also directly compute $\Delta_{\mc{L}}$ by means of the rapidity spectrum for quasi-free systems~\cite{Prosen2010}. In \cref{sec:Results: Fig: tfimdynamics}b, we show how the Liouvillian gap scales with the system size. \rev{The estimates for $\Delta_{\mathcal{L}}$ from the slopes in \cref{sec:Results: Fig: tfimdynamics}a yield excellent agreement with the spectrum calculations in \cref{sec:tfim_boundary_proof}. Using a log-log scaling for the axes, we find that $\Delta_{\mc{L}}=\Theta(N^{-3})$, which matches the scaling of the energy-based mixing time $\tau^E_{\operatorname{mix}}=\Theta(N^3)$ for fixed $\eta$.}

\vspace{1em}
\paragraph*{Cluster state Hamiltonian with boundary dissipation---}

The 1D cluster state Hamiltonian on $N$ sites takes the form
\begin{equation}\label{eqn:H_SPT}
    H=-J\sum_{j=1}^{N-2} X_j Z_{j+1} X_{j+2}-h_1 \sum_{j=1}^N Z_j.
\end{equation}
This system plays a role in measurement-based quantum computing \cite{briegel2009measurement,raussendorf2003measurement}, and  exhibits an interesting symmetry protected topological (SPT) phase, with a fourfold degenerate ground state in the thermodynamic limit \cite{SonAmicoFazioEtAl2011,CongChoiLukin2019}.
The field strength $h_1/J$ drives a phase transition between a simple paramagnetic phase and the SPT phase. Using the Jordan-Wigner transformation, the Hamiltonian can be expressed as a quadratic Majorana operator
\begin{equation}\label{eqn:H_cluster_majorana}
H =2i J \sum_{j=1}^{N-\revvv{2}} w_{j+N}w_{j+2} +2i h_1 \sum_{j=1}^N  w_{j}w_{j+N},
\end{equation}
which is similar to a Kitaev chain Hamiltonian with next-nearest-neighbor (NNN) couplings. The string order parameter (SOP) is a nonlocal order parameter which can be used to distinguish between the SPT phase and the paramagnetic phase. It is defined  as 
\begin{equation}
\begin{aligned}
    S_{a b}&=\revvv{s_m} X_a Z_{a+1} Z_{a+3} \ldots Z_{b-3} Z_{b-1} X_b\\
    &= \revvv{(2i)^{m}}\prod _{i=1} ^{m} w_{a+2i}\prod _{i=0} ^{m-1} w_{a+\revvv{N}+2i},
\end{aligned}
\end{equation}
where $a,b$ are arbitrary starting and ending points in the bulk with $b-a$ being an even number, \revvv{$m=(b-a)/2$, and $s_m=(-1)^{m(m-1)/2}$ is a sign coming from the Jordan--Wigner transformation.} The SOP can be computed using Wick's theorem \cite{SuraceTagliacozzo2022}
\begin{equation}
    \left\langle S_{a b}\right\rangle=\revvv{(2i)^{m}}\left\langle \prod _{i=1} ^{m} w_{a+2i}\prod _{i=0} ^{m-1} w_{a+N+2i} \right\rangle = \operatorname{Pf}\left(2\Gamma_{\left.\right|\{q \}} \right),
\end{equation}
where $\Gamma_{\left.\right|\{q \}}$ is the covariance matrix restricted to the indices $\{ q\}= \{a+2i, a+\revvv{N}+2(i-1) | i=1,\ldots \revvv{(b-a)/2}\}$ and $\operatorname{Pf}$ denotes the Pfaffian.


We choose the coupling operators to be two single Pauli operators on two ends of the boundary $A_1=Y_1, A_2=Y_N$.
The  corresponding jump operators are linear in Majorana operators, and the dissipative dynamics is thus quasi-free. The presence of zero energy edge modes in the SPT phase of the system lead\revvv{s} to nearly degenerate ground states with energy gaps closing exponentially rapidly as the system size increases.  However, the closing of the energy gap is entirely due to the presence of the edge modes which is irrelevant for bulk properties such as the SOP.
Therefore we may define an effective gap, denoted by $\Delta_{\mathcal{L},\rm{eff}}$, by excluding the eigenvalues in the Liouvillian exponentially clustering near $0$, and choose the parameters in the filter function $\hat{f}(\omega)$ based on this effective gap. When choosing effective gap $\Delta_{f}=0.1$ in $f$, the resulting dissipative dynamics converges to a statistical mixture of the nearly degenerate states with the same SOP value in the thermodynamic limit (\cref{fig:SOP}). The SOP is evaluated by setting $a,b$ to be the two ends of the chain. We find again that  boundary dissipation alone is sufficient to drive the system from a paramagnetic phase towards SPT phase.

\begin{figure}[ht]
  \centering
    \begin{subfigure}[b]{0.35\textwidth}
        \includegraphics[width=\textwidth]{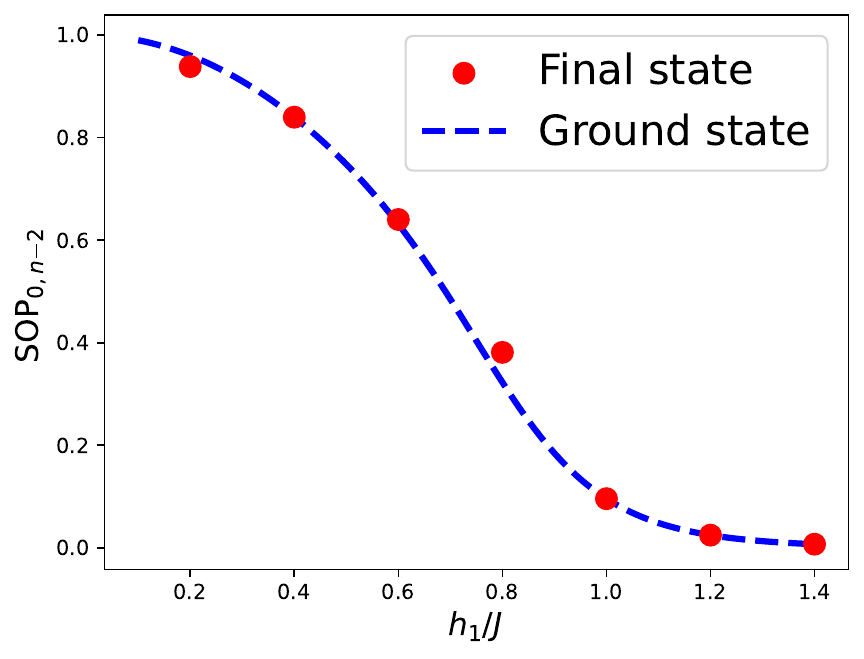}
        \caption{\centering}
    \end{subfigure}
    \begin{subfigure}[b]{0.35\textwidth}
        \includegraphics[width=\textwidth]{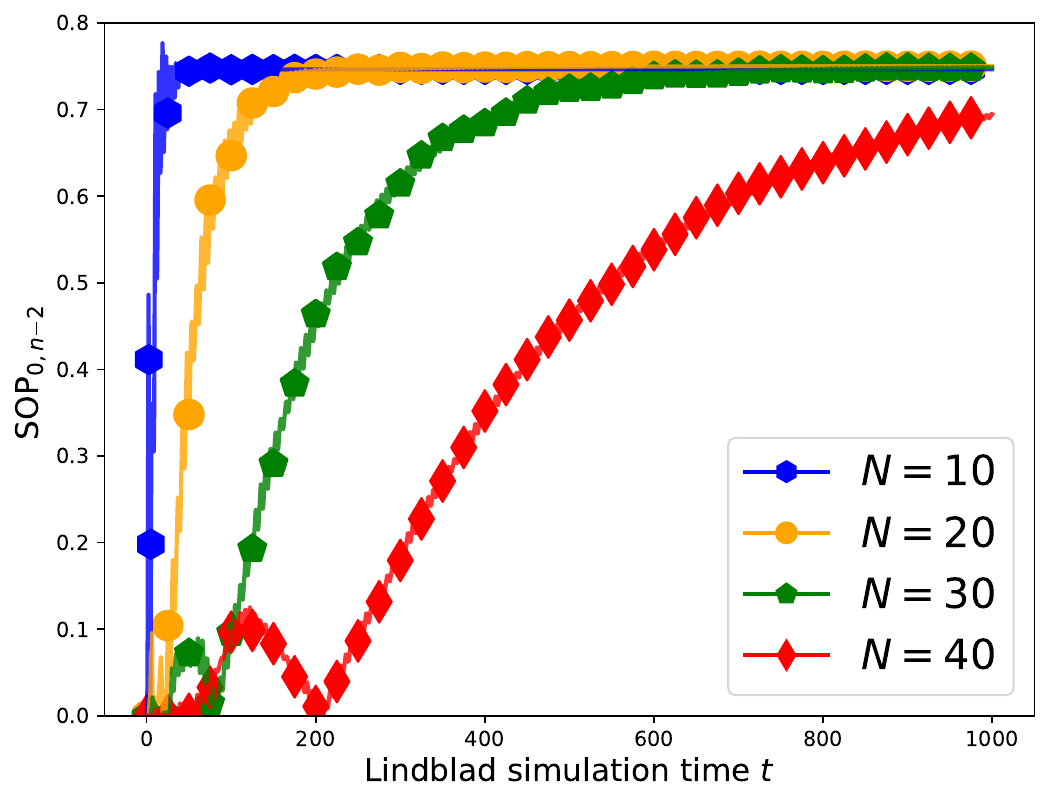}
        \caption{\centering}
    \end{subfigure}
  \caption{Evolution of the string order parameter (SOP) for the cluster state Hamiltonian ~\eqref{eqn:H_SPT} with boundary dissipation. The system size is $N=20$. (a) SOP comparison between the ground state (blue dashed line) and the final state at $T = 1500$ (red points). The final state accurately captures the quantum phase transition from the paramagnetic phase to the SPT phase.
  (b) Evolution of the SOP for several system sizes with $h_1/J = 0.5$, whose ground state is in the SPT phase. The initial state is the all-spin-down state, which is in the paramagnetic phase with $\text{SOP}=0$. The dissipative evolution consistently drives the system from the paramagnetic phase into the SPT phase.}
  \label{fig:SOP}
\end{figure}

Next, we examine how the convergence rate of Lindbladian dynamics scales with the system size. As in the study of the SOP, we set the coupling operators to be single Pauli $Y$ operators at the two ends of the system ($A_1=Y_1$ and $A_2=Y_N$) and fix $\Delta_{f}=0.1$ for all $N$. \revvv{The initial state $\rho_0$ is chosen to be the all-spin-down state.} In \cref{fig:SPT_energy}, we plot the energy decay of the SPT system with $h_1/J=0.4$. For $N=20, 30, 40$, similar to the 1D TFIM case, the energy rapidly approaches the ground state energy at early times before entering an asymptotic regime characterized by exponential decay. We note that for $N=10$ an energy plateau appears. This is because the Hamiltonian in \cref{eqn:H_SPT} has a fourfold degenerate ground state in the thermodynamic limit. For small system sizes, there is still a small energy gap between these nearly degenerate states, which is smaller than our chosen $\Delta_f$. Reducing $\Delta_{f}$ would further lower the plateau. In \cref{fig:SPT_energy}, we illustrate how the effective Liouvillian gap scales with system size. For various values of $h_1/J$, we consistently observe that the inverse effective gap scales as $\Delta_{\mathcal{L},\rm{eff}}^{-1} = \Theta(N^3)$. Consequently, to achieve a fixed accuracy $\eta$, the \rev{energy-based} mixing time scales as $\tau^E_{\operatorname{mix}} = \mathcal{O}(N^3)$.

\begin{figure}[ht]
  \centering
    \begin{subfigure}[b]{0.35\textwidth}
        \includegraphics[width=\textwidth]{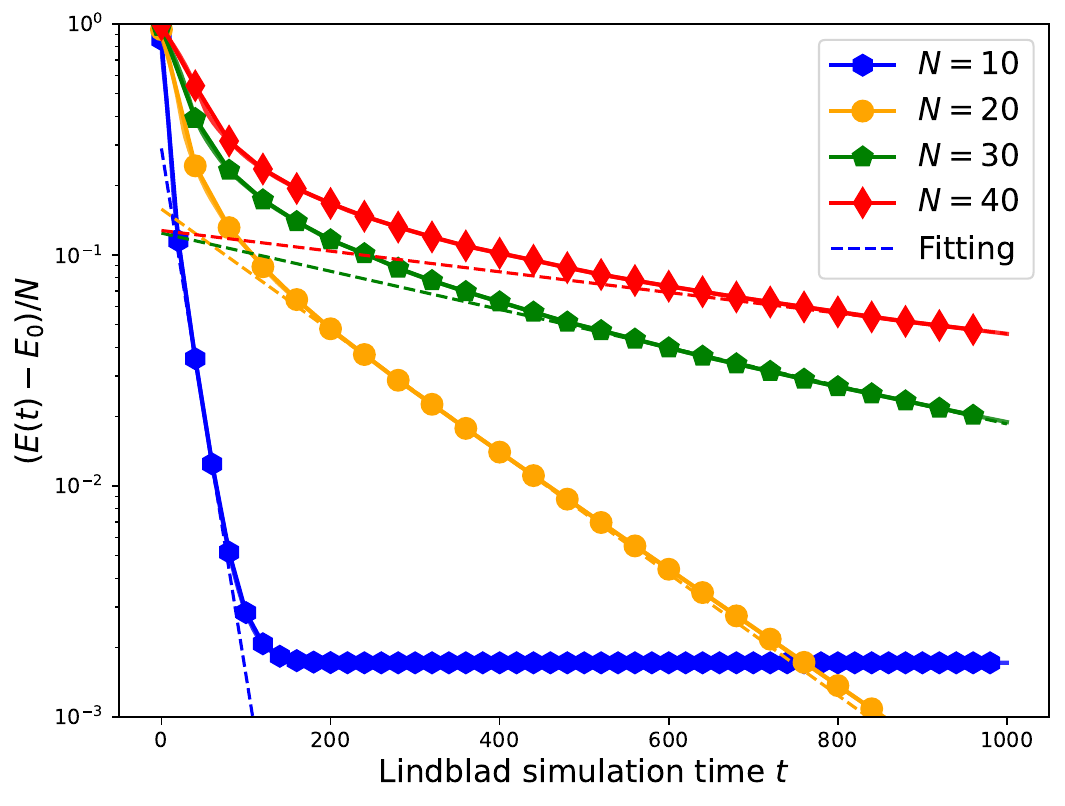}
        \caption{\centering}
    \end{subfigure}
    \begin{subfigure}[b]{0.35\textwidth}
        \includegraphics[width=\textwidth]{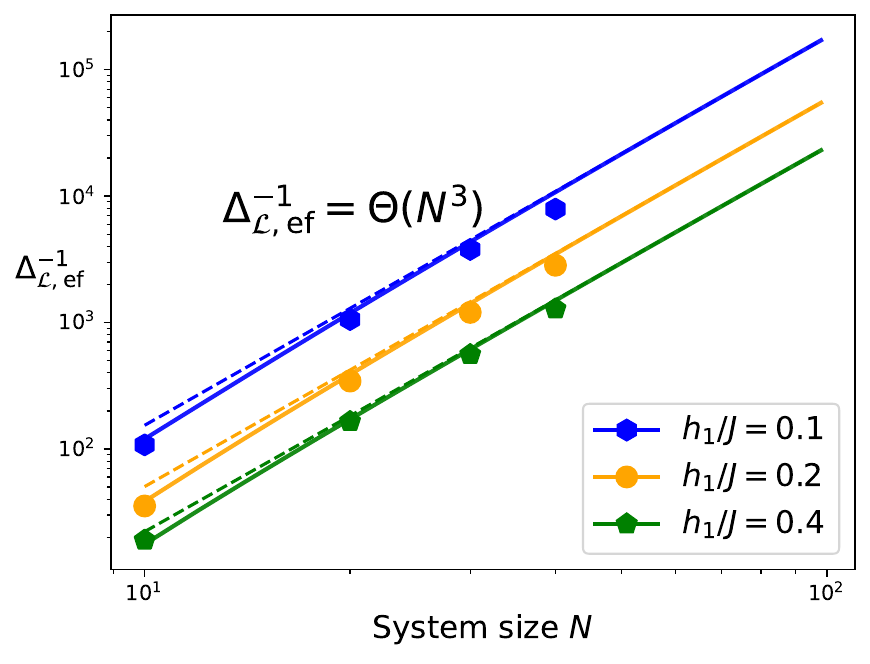}
        \caption{\centering}
    \end{subfigure}
  \caption{
  Numerical results for the cluster state Hamiltonian~\eqref{eqn:H_SPT} with boundary dissipation.
  (a) Convergence of energy starting from the maximally mixed state. Here, we use boundary dissipation and set $\Delta_f=0.1$, $h_1/J=0.4$. The dashed lines are exponential fits of the asymptotic behaviour of energy decay before hitting the energy plateau caused by the exponentially decaying edge modes. (b) The scaling of the inverse of the effective Liouvillian gap $\Delta^{-1}_{\mathcal{L},\rm{eff}}$ with respect to the system size $N$ with different $h_1/J$. Points are the fitting energy convergence rate $\kappa_N$ calculated from dynamics simulation.}
  \label{fig:SPT_energy}
\end{figure}

\subsection{Numerical results of tensor network simulation}\label{sec:numer_TNmethod}

\paragraph*{Rapid mixing of 1D gapped local Hamiltonians}

\begin{figure}
\begin{center}
\includegraphics[width=0.35\textwidth]{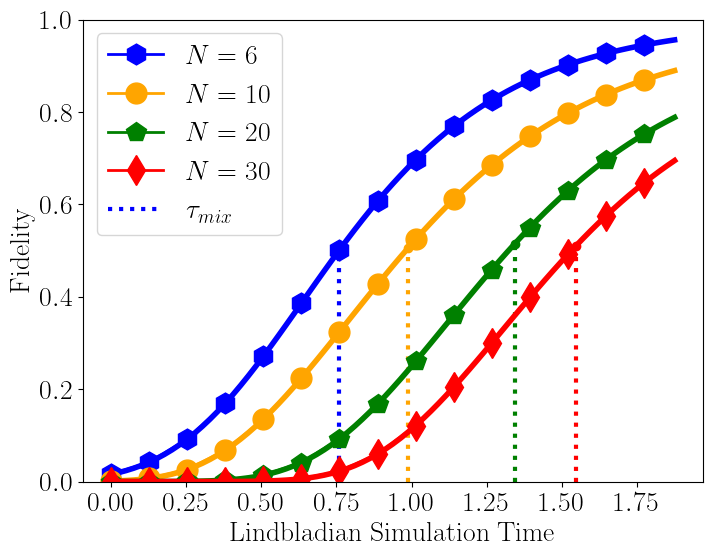}
\end{center}
\caption{ The mixing time, $\tau^F_{\mathrm{mix}}$, as defined in~\eqref{eqn:tau_F}. During the Lindbladian dynamics, the fidelity increases steadily and eventually converges to one. However, the convergence speed decreases as the system size increases.}
\label{fig:dynamics}
\end{figure}

\begin{figure}
\begin{center}
\includegraphics[width=0.35\textwidth]{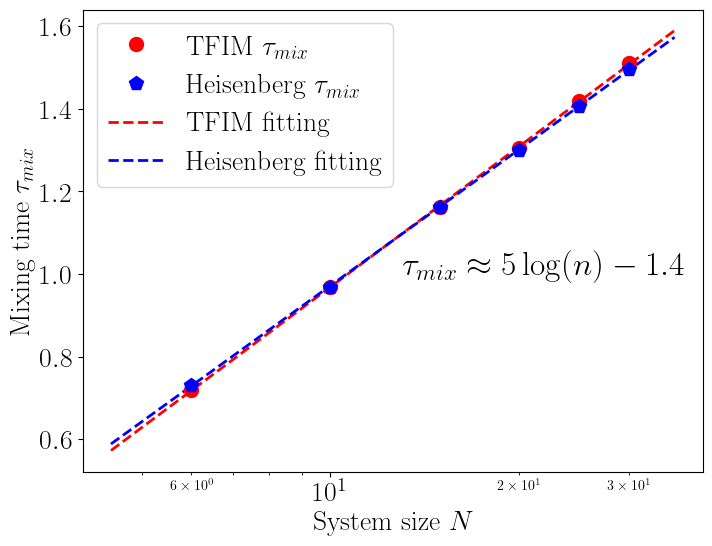}
\end{center}
\caption{Scaling of the mixing time for TFIM (red) and the Heisenberg model in a magnetic field (blue) under bulk dissipation, shown as a function of system size. The results indicate a logarithmic scaling, consistent with $\tau^F_{\mathrm{mix}} = \Theta(\log(N))$. }
\label{fig:log}
\end{figure}

As an application of our tensor network based algorithm, we prepare the ground state of two examples of 1D gapped, noncommuting local Hamiltonians using bulk dissipation, i.e., the coupling operators are chosen to be the Pauli operators $\{X_i,Y_i,Z_i\}$ on all sites.
The first one is the same TFIM example in \cref{eqn:H_TFIM} with $J=1, g=1.5$.
The second example, which cannot be transformed into a free fermionic system, is an anisotropic Heisenberg model in a magnetic field, described by the Hamiltonian \begin{equation}
    H=-J \sum_{i=1}^{N-1} X_i X_{i+1}- \xi \sum_{i=1}^{N-1} (Y_i Y_{i+1}+Z_i Z_{i+1})-g \sum_{i=1}^N Z_i.
\end{equation}
Here we choose the parameters $g=1.5, J=1$ and $\xi=0.1$.   We simulate their Lindblad dynamics for system sizes up to $N=30$. The bond dimension $D$ of the MPO representation for both the jump operators and the density matrix is set to 50. We validate this choice of the bond dimension in \cref{sec:additional_numer_tensor}.

In this section, we measure the mixing time with respect to the fidelity \rev{as in \cref{eqn:tau_F}}.
Unless otherwise mentioned, we choose $\eta=\frac12$ and start from the maximally mixed state $\rho_0=\mathbb{I}/2^N$. The fidelity increase during the Lindbladian dynamics of the
TFIM model is shown in \cref{fig:dynamics}. The results in \cref{fig:log} demonstrate that the mixing times in both cases scale logarithmically with the system size; this scaling is often referred to as ``rapid mixing'' \cite{BardetCapelGaoEtAl2023,KastoryanoTemme2013}.

\vspace{1em}
\paragraph*{TFIM with random transverse field---}

Now consider the 1D transverse field Ising model but with a random transverse field
\begin{equation}
H=-\sum_{i=1}^N g_i Z_i-J \sum_{i=1}^{N-1} X_i X_{i+1},
\end{equation}
where the strength of the transverse field $g_i\sim \mathcal{N}(2, \sigma^2)$ and $\sigma^2$ is the variance parameter.
Because of the Anderson localization, for any $\sigma>0$, the eigenfunctions of $H$ are exponentially localized in space. This means that choosing single Pauli operators on the boundary produces a large number of inaccessible ``dark modes'' (i.e., $\left\langle\psi_i|A_a| \psi_j\right\rangle\approx 0$), which means that boundary dissipation alone may lead to an exponentially long mixing time, or fail to converge to the ground state altogether. Nonetheless, the bulk dissipation is not subject to this failure mechanism due to Anderson localization. We choose the set of coupling operators $\{A_a\}$ to be all Pauli operators $\{X_i,Y_i,Z_i\}_{i=1}^N$.
We simulate the resulting Lindblad dynamics using the tensor network methods for system sizes up to $N=16$. \cref{fig:rand} illustrates the scaling behavior of the mixing time as a function of the system size $N$. Our results indicate a logarithmic scaling of the mixing time, suggesting that ground state preparation can be efficiently achieved using bulk dissipation.

\begin{figure}
\centering
\includegraphics[width=0.4\textwidth]{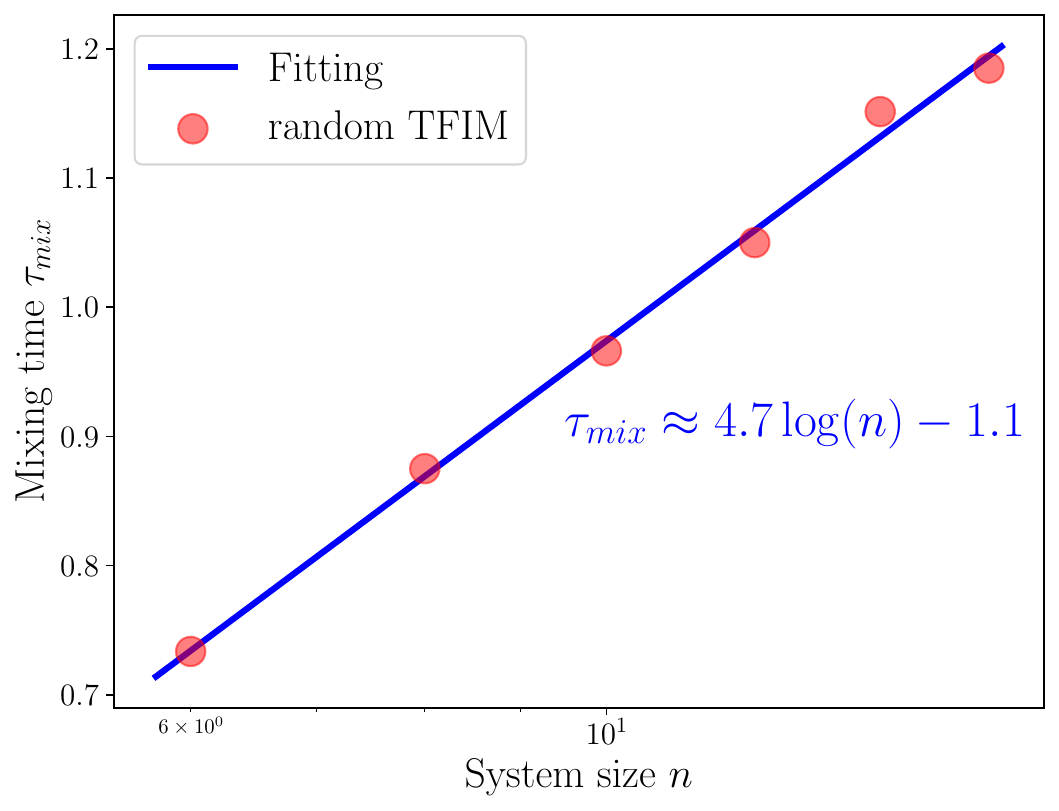}
\caption{\label{fig:rand} The scaling of the mixing time of random TFIM under bulk dissipation, i.e. we use $3N$ separate jump operators with the set of all Pauli operators on each site as coupling operators $\{X, Y, Z \}^N$. The Hamiltonian parameters are set to $J=1, \sigma^2=0.5$. We observe a logarithmic scaling of the mixing time.}
\end{figure}

\vspace{1em}
\paragraph*{Nonintegrable cluster state Hamiltonian---}

We now consider the following generalization of the cluster state Hamiltonian \begin{equation}\label{eq:NI-cluster-Ham}
    H=-\sum_{i=1}^{N-2} X_i Z_{i+1} X_{i+2}-h_1 \sum_{i=1}^N Z_i -h_2 \sum_{i=1}^{N-1} Z_i Z_{i+1}.
\end{equation}
When $h_2\ne 0$, the Hamiltonian cannot be transformed into a free fermionic system. Nonetheless, the ground state can still exhibit the SPT phase characterized by a nonzero string order parameter (SOP).

As an illustration, we choose the coupling operators to be single Pauli operators on the boundary $A_1=Y_1, A_2=Y_{\revvv{N}}$.  The parameters for $h_1=0.4, h_2=-0.4$ are set so that the ground state is in the SPT phase. We simulate the Lindblad dynamics starting from the maximally mixed state, which lies in the paramagnetic phase and exhibits a zero SOP value. As shown in \cref{fig:sop}, the energy converges to the ground state energy, and the SOP converges to approximately 1 during the evolution, indicating that the state transitions from paramagnetic phase to SPT phase. Our result shows that the Lindblad-based algorithm crosses the phase transition boundary during ground state preparation.

\begin{figure}
\centering
\includegraphics[width=0.4\textwidth]{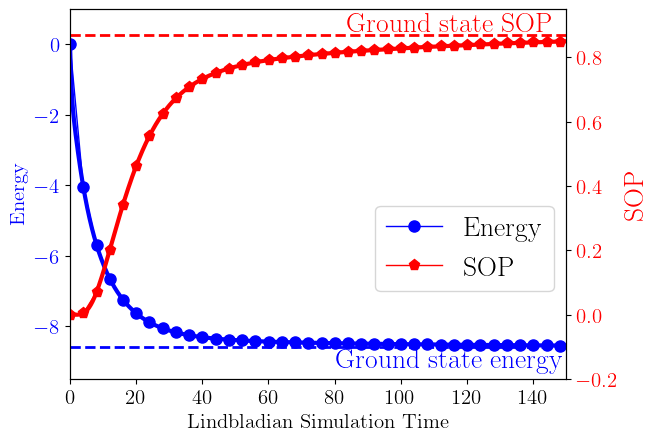}
\caption{\label{fig:sop}
Energy and string order parameter (SOP) during the evolution for the nonintegrable cluster state Hamiltonian \eqref{eq:NI-cluster-Ham} under boundary dissipation with system size $N=10$.
The plot shows the energy converging to the ground state energy and the SOP transitioning from the paramagnetic phase (SOP $= 0$) to the symmetry-protected topological (SPT) phase.}
\end{figure}

\subsection{\rev{Comparison with adiabatic state preparation protocols}}\label{sec:compare_adiabatic}

A natural question is whether dissipative state preparation protocols offer advantages over alternative approaches, such as adiabatic state preparation. Since both adiabatic state preparation~\cite{AlbashLidar2018} and dissipative state preparation~\cite{VerstraeteWolfIgnacioCirac2009} are known to be \BQP-complete, meaningful comparisons arise only within specific problem settings, where one can exploit structural properties of the target system and evaluate the relative performance of each method under those conditions.

In a typical protocol for adiabatic ground-state preparation, one considers a time-dependent Hamiltonian that interpolates between an initial noninteracting Hamiltonian whose ground is easily prepared and a final target Hamiltonian whose ground state is desired. This method can be effective if one can identify a gapped adiabatic path that connects the target ground state to the trivial ground state.  However, it is likely to fail if the path crosses a {first-order} phase transition where the gap vanishes and the quantum state must vary sharply to remain close to the instantaneous ground state along the path.


To compare the performance of dissipative ground-state preparation and adiabatic ground-state preparation, we consider as an example the  1D axial next-nearest-neighbor Ising (ANNNI) model \cite{Selke1988}:
\begin{equation}
H_{\mathrm{ANNNI}}
=\frac{J_1}{4} \sum_{i} Z_i Z_{i+1}
+\frac{J_2}{4} \sum_{i} Z_i Z_{i+2}
-\frac{\Gamma}{2} \sum_{i} X_i,
\label{eq:annni}
\end{equation}
where $J_1$ is the nearest-neighbor coupling, $J_2$ is the frustrating next-nearest-neighbor coupling,
and $\Gamma$ is the strength of the transverse field.

In the ANNNI model, there are three competing tendencies: (i) for $J_1>0$, the nearest-neighbor Ising term 
favors an antiferromagnetic N\'eel state such as $\ket{\uparrow\downarrow\uparrow\downarrow\cdots}$; (ii) for $J_2>0$, the next-nearest-neighbor term 
favors a period-4 
modulated structure such as $\ket{\uparrow\uparrow\downarrow\downarrow\cdots}$; and (iii) the transverse field 
provides quantum fluctuations and favors a state with spins polarized in the $x$ direction.

For adiabatic state preparation, we may use a linear interpolation between \revvv{a} simple initial Hamiltonian $H_{\mathrm{init}} = - \frac{h_0}{2} \sum_{i} Z_i$ and the ANNNI target s.t.
\begin{equation}
H(s(t)) = \bigl(1 - s(t)\bigr)\,H_{\mathrm{init}} + s(t)\,H_{\mathrm{ANNNI}},
\label{eq:asp_path}
\end{equation}
with a monotonic schedule satisfying $s(0)=0$, and $s(T)=1$ (for instance, $s(t)=t/T, t\in[0,T]$). The adiabatic evolution from the initial ground state $|\psi(0)\rangle$ is given by
\begin{equation}
i\,\partial_t |\psi(t)\rangle = H(s(t))\,|\psi(t)\rangle, \quad 0 \le t \le T.
\end{equation}
%
The initial state is polarized 
in the $z$ direction. As $s$ increases, this $z$-polarized phase competes with the ordered phase selected by $(J_1,J_2)$.  When $\Gamma=0$, along the adiabatic path from $H_{\mathrm{init}}$ to $H_{\mathrm{ANNNI}}$, a first-order transition is encountered at a point $s=s_c$ where the energy of the $z$-polarized state becomes equal to that of the competing ordered phase and the gap vanishes.  For small $\Gamma$, the location of the transition shifts, and the level crossing opens into an avoided crossing with a small gap.

We simulated this protocol for $J_1=2$, $J_2=0.6$, $\Gamma=0.2$, $h_0=1.0$, with periodic boundary conditions and lattice size $L=12$. For each instantaneous Hamiltonian $H(s)$ along the adiabatic path, we define the ground state manifold as the set of orthogonal eigenstates within $10^{-4}$ energy of the lowest eigenvalue. As shown in the inset of Fig.~\ref{fig:asp_annni}, the ground state manifold dimension varies non-monotonically along the path: starting from a unique ground state, it reaches a maximum of 3 around $s=0.4$, decreases to 1 around $s=0.5$, and stabilizes to 2 for $s>0.6$. We also define the effective gap as the energy difference between the lowest excited state above the ground state manifold and the ground state energy; notably, this gap nearly closes three times along the adiabatic path.

To assess the quality of state preparation, we monitor the overlap with the ground-state subspace of the target Hamiltonian $H_{\rm ANNNI}$, quantified by $\Tr[\rho(t)\sigma]$, where $\sigma = \ketbra{\psi_0}{\psi_0} + \ketbra{\psi_1}{\psi_1}$ \revv{denotes the (unnormalized) projector onto the two-fold degenerate ground-state manifold. This choice reflects the fact that the goal of the algorithm is to prepare the correct subspace rather than a specific pure state within it, which is a natural measure from an algorithmic perspective for degenerate systems.} In addition to fidelity, we track the convergence of the order parameters
\begin{equation}
m_1 = \frac{1}{4L}\sum_{i}\langle Z_i Z_{i+1}\rangle, \quad m_2 = \frac{1}{4L}\sum_{i}\langle Z_i Z_{i+2}\rangle
\end{equation}
to their target values averaged over the $H_{\rm ANNNI}$ ground-state manifold. \revv{Since all ground states within the manifold yield identical values of $m_1$ and $m_2$, these observables serve as stable and physically meaningful indicators of successful ground-state subspace preparation.}

For a total evolution time $T=1000$, we plot in \cref{fig:asp_annni} the overlap between the time-evolved state $\rho(t)=\ketbra{\psi(t)}{\psi(t)}$ and the ground state manifold of $H_{\rm ANNNI}$. This ground manifold overlap exhibits persistent oscillations and never achieves a high value, indicating that the adiabatic protocol fails to prepare the ground state. The initial state is $z$-polarized, and as the system evolves, it does not reach the correct ordered phase. This poor performance is due to the presence of multiple level crossings and small gaps along the adiabatic path, which induce diabatic transitions and prevent the system from remaining in the ground state manifold.

\begin{figure}[h]
\begin{center}
\includegraphics[width=0.45\textwidth]{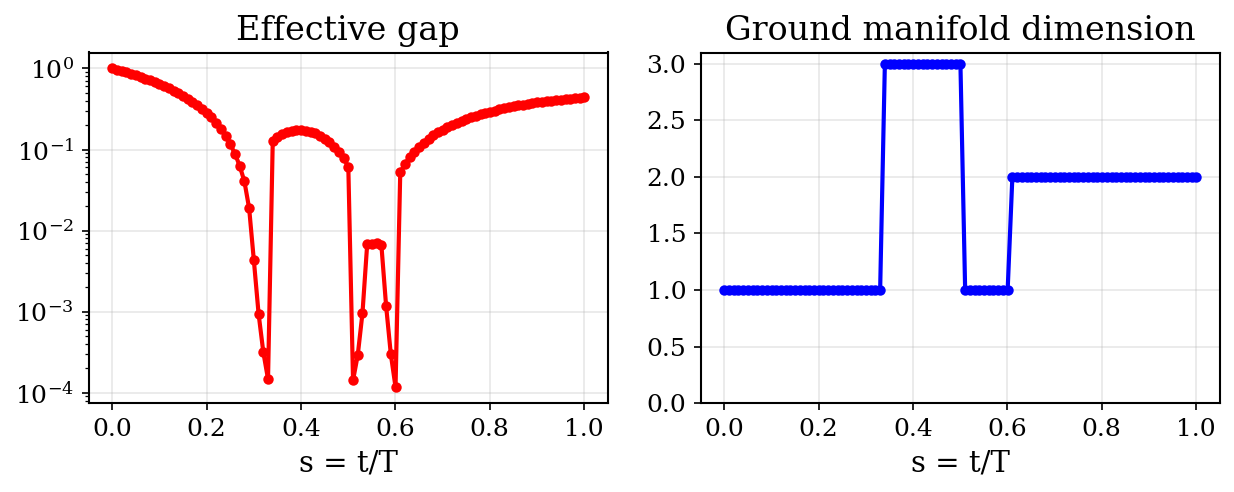}

\includegraphics[width=0.45\textwidth]{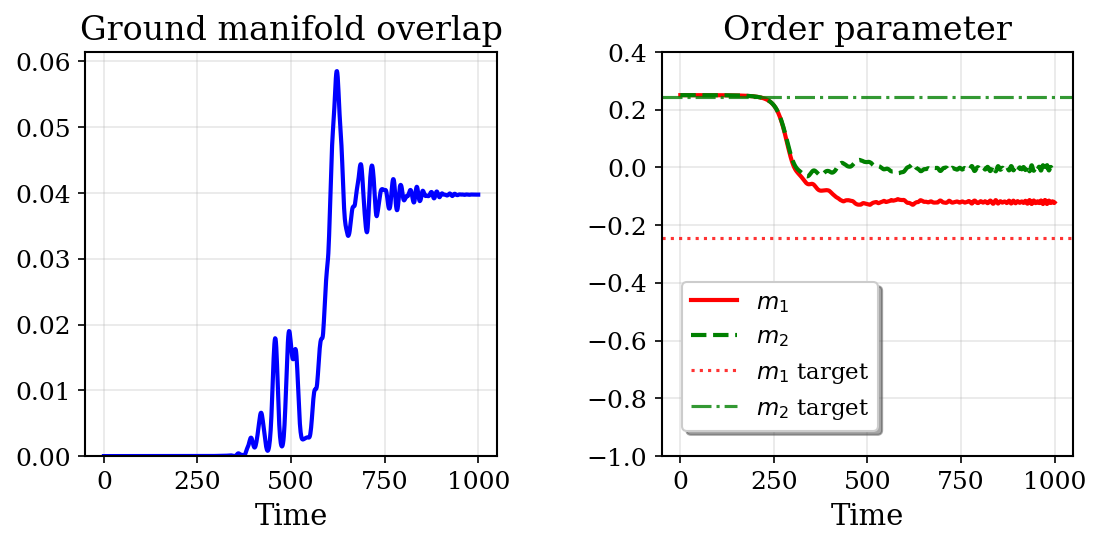}
\end{center}
\caption{Adiabatic state preparation for the ANNNI model at $L=12$. Top: Effective gap, and the ground manifold dimension of the instantaneous Hamiltonian $H(s)$ along the adiabatic path. Bottom: Evolution of the overlap between $\rho(t)$ and the ground manifold and the order parameters.}
\label{fig:asp_annni}
\end{figure}

In contrast, as shown in \cref{fig:dsp_annni}, dissipative state preparation with coupling operators $\mc{A}=\{X_i, Z_i\}_{i=1}^L$ successfully prepares a high-fidelity approximation to the ground state manifold. We set the spectral gap parameter $\Delta=0.2$ when constructing the jump operators, and this protocol does not distinguish between the two degenerate ground states of $H_{\rm ANNNI}$. The initial state is chosen to be the all-one state, the same as that used in adiabatic state preparation. Under dissipative state preparation, the observables $m_1$ and $m_2$ converge rapidly and monotonically to their target values, which are unaffected by the complex energy landscape or gap closings encountered along the adiabatic path.

\begin{figure}[h]
\begin{center}
\includegraphics[width=0.45\textwidth]{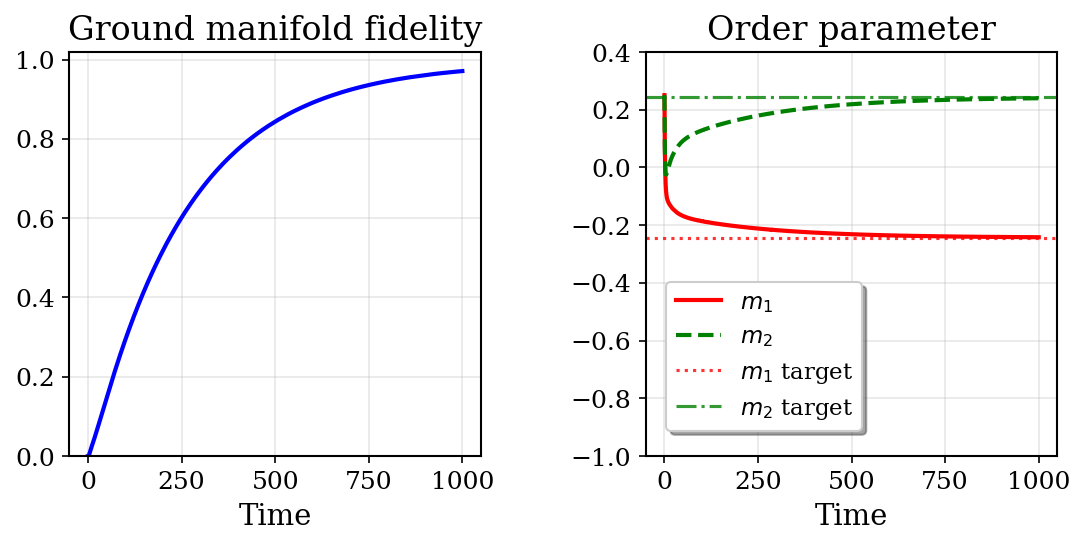}
\end{center}
\caption{Adiabatic state preparation for the ANNNI model at $L=12$. Evolution of the overlap between $\rho(t)$ and the ground manifold and the order parameters.}
\label{fig:dsp_annni}
\end{figure}

The results presented here do not imply that adiabatic state preparation cannot be adjusted to achieve ground state preparation. For example, we could introduce a carefully designed adiabatic path to avoid any gap closing. However, this approach may require detailed knowledge of the system, and it can be challenging to determine in advance whether a specific adiabatic path will be sufficient.  On the other hand, the design of dissipative protocols can be more agnostic to the specifics of the target Hamiltonian or ground state. Ultimately, the comparison between these methods will likely be system dependent, and more theoretical and numerical investigations will be needed in the future.

\section{Theoretical Results}
\subsection{Mixing time of quasi-free systems}
\label{sec:mixingtime_quasifree}

\rev{In this section, we establish a theory that provides an explicit bound on the convergence in trace distance to the ground state for general quasi-free dynamics.} When applied to the TFIM model~\rev{studied in~\cref{sec:numer_quasifree}}, we find that the convergence rate estimate matches with the numerical observation.


We provide some high-level ideas of our strategy below.
To prove the convergence in trace distance, we use the Fuchs--van de Graaf inequality \rev{(see Theorem \ref{thm:fuchs})}.
 The ground state of any Hamiltonian that is quadratic in Majorana operators can be written as a quasivacuum state $\sigma = \ketbra{\mathrm{vac}}{\mathrm{vac}}$, with
\begin{equation}
b_k \ket{\mathrm{vac}} = 0, \quad k = 1, \ldots, N,
\end{equation}
for a properly defined set of fermionic annihilation operators $\{b_k\}$ (\cref{sec:prelim_majorana}). Let $\hat{N} = \sum_k b_k^{\dag} b_k$ be the total number operator. Note that all states other than $\ket{\mathrm{vac}}$ \revvv{have} at least one particle. This gives the inequality
\begin{equation}
1 - \braket{\mathrm{vac}|\rho|\mathrm{vac}} =\Tr[\rho(I-\ketbra{\mathrm{vac}}{\mathrm{vac}})]\leq \Tr[\rho \hat{N} ].
\end{equation}
\rev{If we could prove an inequality of the form $\mathcal{L}^{\dagger}(\hat{N}) \preceq -c \hat{N}$ for some constant $c>0$, then it would immediately follow that $\Tr[\hat{N}\rho(t)] \leq \Tr[\hat{N}\rho(0)] e^{-ct}$, completing the proof. However, in many cases including boundary dissipation, such an inequality does not hold for any $c>0$. }
Our key innovation is to find a positive definite observable $O$, which is equivalent to the number operator in the sense that there exist constants $C_1, C_2$ such that $C_1 \hat{N} \preceq O \preceq  C_2 \hat{N}$. Then, for a proper choice of $O$, we prove \rev{the desired inequality $\mc{L}^{\dag}(O)\preceq -2\Delta O$ for some constant $\Delta>0$ and thus the exponential convergence property:}
\begin{equation}
\Tr[O \rho(t)] \le  \Tr[O \rho(0)] e^{-2\Delta t}.
\end{equation}
\rev{This relation in turn gives
\begin{equation}
\Tr[\hat{N} \rho(t)]\le \frac{C_2}{C_1}\Tr[\hat{N} \rho(0)] e^{-2\Delta t}.
\end{equation}
The key point is that the ratio $\frac{C_2}{C_1} > 1$ and contributes only a logarithmic additive term to the mixing time.}
When substituted into the Fuchs--van de Graaf inequality, this yields the desired exponential convergence \rev{in trace distance} with the explicit convergence rate $\Delta$.
It is worth noting that \rev{$\Tr[O \rho]$ may be viewed as a Lyapunov function of the Lindblad dynamics.}

We find that the convergence rate $\Delta$, as well as the constants $C_1, C_2$, are determined by a non-Hermitian quadratic Hamiltonian
\begin{equation}\label{eqn:nonhermitian_ham}
H_{\rm nh} = iH - \frac{1}{2} \sum_{a} K_a^{\dag} K_a =\sum_{p,q=1}^{2N} (h_{\mathrm{nh}})_{pq} w_p w_q,
\end{equation}
where $h_{\mathrm{nh}}$ is a non-Hermitian matrix in general. Assume $h_{\mathrm{nh}}$ is diagonalizable as $h_{\mathrm{nh}}=V D V^{-1}$, then $\Delta$ is given by the non-Hermitian gap \revvv{$(-\max _i \operatorname{Re} D_{i i})$}. For a quadratic observable $O$ in Majorana operators, we find that  $\mc{L}^{\dag}(O)$ is entirely determined by the non-Hermitian coefficient matrix $h_{\mathrm{nh}}$ (see \cref{eqn:Ldag_O_quasifree}).


We now present our main theorem for quasi-free systems in Theorem \ref{thm:sharpbound_quasifree}, with the proof provided in \cref{sec:proof_sharpbound_quasifree}.

\begin{thm}
Let $H$ be a gapped quadratic Majorana Hamiltonian with $2N$ modes,  $\{A_a\}$ be a set of coupling operators that are linear in Majorana operators, and $K_a$ be the corresponding jump operators defined via \cref{eqn:jump_time}. We consider the non-Hermitian Hamiltonian in \cref{eqn:nonhermitian_ham}, assume the coefficient matrix $h_{\mathrm{nh}}$ is diagonalizable with $h_{\mathrm{nh}}=V D V^{-1}$, and define the non-Hermitian gap $\Delta=-\max _i \operatorname{Re} D_{i i}$. We denote the condition number of $V$ by $\kappa(V)$.

If $\Delta>0, \kappa(V)<\infty$, then starting from any initial state $\rho_0$, the Lindblad dynamics \eqref{eq:lindblad} converges exponentially in trace distance to the quasivacuum state $\sigma = \ketbra{\mathrm{vac}}{\mathrm{vac}}$ with
\begin{equation}
D(\rho(t),\sigma)\leq \kappa(V) \sqrt{N} e^{-\Delta t}.
\end{equation}

\label{thm:sharpbound_quasifree}
\end{thm}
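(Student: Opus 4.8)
\emph{Overview of the plan.}
I would run the argument sketched just above the theorem, supplying the explicit Lyapunov observable. By the Fuchs--van de Graaf inequality (\cref{thm:fuchs}) and purity of $\sigma=\ketbra{\mathrm{vac}}{\mathrm{vac}}$, $D(\rho(t),\sigma)\le\sqrt{1-\bra{\mathrm{vac}}\rho(t)\ket{\mathrm{vac}}}$. Since the quasiparticle number $\hat N=\sum_k b_k^\dagger b_k$ annihilates $\ket{\mathrm{vac}}$ and has every other eigenvalue $\ge 1$, we have $I-\ketbra{\mathrm{vac}}{\mathrm{vac}}\preceq\hat N$, hence $1-\bra{\mathrm{vac}}\rho(t)\ket{\mathrm{vac}}\le\Tr[\hat N\rho(t)]$. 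So it suffices to prove $\Tr[\hat N\rho(t)]\le\kappa(V)^2 N e^{-2\Delta t}$; taking square roots and using $\Tr[\hat N\rho(0)]\le N$ (each $b_k^\dagger b_k$ is a projector) then gives the stated bound.

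\emph{Constructing the Lyapunov observable.}
I would work in the quasiparticle basis, where $H=\sum_k\epsilon_k b_k^\dagger b_k+\mathrm{const}$ with $\epsilon_k>0$ (by gappedness) and $\ket{\mathrm{vac}}$ the ground state. Because the filter $\hat f$ is supported on the negative frequency axis with $\hat f(0)=0$, the construction \cref{eqn:jump_time} forces each $K_a$ — which is linear in Majorana operators by Thouless's theorem — to be a combination of annihilation operators only, $K_a=\sum_k c_{ka}b_k$ (a $b_k^\dagger$ term would raise energy and so cannot appear); in particular $K_a\ket{\mathrm{vac}}=0$, confirming that $\sigma$ is stationary. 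Then $H_{\mathrm{nh}}$ is number-conserving in this basis, with single-particle matrix $M=i\,\mathrm{diag}(\epsilon_k)-\tfrac12 G$, where $G\succeq0$ is the single-particle matrix of $\sum_a K_a^\dagger K_a$, so $\Re M\preceq 0$; this $M$ is $h_{\mathrm{nh}}$ in the fermionic representation, and I take $h_{\mathrm{nh}}=VDV^{-1}$, $\Delta=-\max_i\Re D_{ii}$, and $\kappa(V)$ to be its data (the relation to the $2N$-dimensional Majorana matrix in which the theorem phrases $h_{\mathrm{nh}}$ is a fixed change of representation). Writing $O_o:=\sum_{l,k}o_{lk}b_l^\dagger b_k$ for Hermitian $o$, the quasi-free closure identity \cref{eqn:Ldag_O_quasifree} gives $\mathcal L^\dagger(O_o)=O_{Mo+oM^\dagger}$ with no constant or quartic remainder — the quartic pieces of $\sum_a K_a^\dagger O_o K_a$ cancel those of $-\tfrac12\sum_a\{K_a^\dagger K_a,O_o\}$. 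I then choose $o:=VV^\dagger\succ0$ and $O:=O_o$. Using $M=VDV^{-1}$, $Mo+oM^\dagger=VDV^\dagger+VD^\dagger V^\dagger=V(2\Re D)V^\dagger\preceq-2\Delta\,VV^\dagger=-2\Delta\,o$, since $\Re D_{ii}\le-\Delta$ for all $i$. Because the map $A\mapsto O_A$ is positive (diagonalize $A\succeq0$ and write $O_A$ as a sum of terms $d_j^\dagger d_j$), this upgrades to the operator inequality $\mathcal L^\dagger(O)\preceq-2\Delta\,O$. Diagonalizing $o$ the same way, $O=\sum_j\sigma_j(V)^2 d_j^\dagger d_j$ with $\{d_j\}$ canonical fermions, and since $\sum_j d_j^\dagger d_j=\hat N$ is basis-independent, $\sigma_{\min}(V)^2\,\hat N\preceq O\preceq\sigma_{\max}(V)^2\,\hat N$; thus $C_1=\sigma_{\min}(V)^2$, $C_2=\sigma_{\max}(V)^2$, and $C_2/C_1=\kappa(V)^2$.

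\emph{Assembling.}
From $\mathcal L^\dagger(O)\preceq-2\Delta\,O$ and $\rho(t)\succeq0$, the differential inequality $\tfrac{d}{dt}\Tr[O\rho(t)]=\Tr[\mathcal L^\dagger(O)\rho(t)]\le-2\Delta\Tr[O\rho(t)]$ and Gronwall give $\Tr[O\rho(t)]\le e^{-2\Delta t}\Tr[O\rho(0)]$. Sandwiching with $C_1\hat N\preceq O\preceq C_2\hat N$ yields $\Tr[\hat N\rho(t)]\le(C_2/C_1)\,e^{-2\Delta t}\Tr[\hat N\rho(0)]\le\kappa(V)^2 N e^{-2\Delta t}$, which is exactly what the overview required.

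\emph{Where the difficulty lies.}
I expect the main obstacle to be the quasi-free closure identity \cref{eqn:Ldag_O_quasifree}: proving that $\mathcal L^\dagger$ sends quadratic observables to quadratic observables with precisely the coefficient-matrix action $o\mapsto Mo+oM^\dagger$ — i.e. checking the cancellation of the quartic terms and, in the Majorana representation in which the theorem is stated, getting the transpose/conjugation conventions right, so that the $2N$-dimensional eigendata $(V,D)$ of $h_{\mathrm{nh}}$ genuinely control $\Delta$ and $\kappa(V)$ consistently with the fermionic picture. A secondary but essential ingredient is the structural observation that the negative-frequency support of the filter makes the $K_a$ annihilation-only; without it $H_{\mathrm{nh}}$ need not be number-conserving, $\sigma$ need not be stationary, and the single-particle-matrix description that makes $\mathrm{spec}(h_{\mathrm{nh}})$ the relevant object breaks down. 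Everything else — Fuchs--van de Graaf, positivity of $A\mapsto O_A$, the Gronwall step, and the singular-value bookkeeping behind $C_2/C_1=\kappa(V)^2$ — is routine.
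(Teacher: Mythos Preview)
Your proposal is correct and follows essentially the same route as the paper's proof: work in the quasiparticle basis where $K_a$ are pure annihilators, take the Lyapunov observable $O=\vb^\dagger VV^\dagger\vb$, use the closure identity $\mathcal L^\dagger(O_o)=O_{h^f_{\mathrm{nh}}o+o(h^f_{\mathrm{nh}})^\dagger}$ to get $\mathcal L^\dagger(O)\preceq-2\Delta O$, sandwich with $\hat N$, and finish via Gronwall and Fuchs--van de Graaf. The paper also flags the same representation issue you raise, noting that the fermionic $h^f_{\mathrm{nh}}$ and the Majorana $h_{\mathrm{nh}}$ of \cref{eqn:nonhermitian_ham} are related by a similarity transformation.
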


An immediate result from Theorem \ref{thm:sharpbound_quasifree} is that the mixing time defined in the trace distance in \cref{eqn:mixing_tracedistance} scales as
\begin{equation}
\tau_{\mathrm{mix}}(\eta) \leq \Delta^{-1} \log \left(\frac{\kappa(V) \sqrt{N}}{\eta}\right).
\end{equation}
Therefore as long as $\kappa(V)=\poly(N)$, the scaling of the mixing time is determined by the scaling of the non-Hermitian gap $\Delta$ with respect to $N$, up to a logarithmic factor.

\vspace{1em}
\paragraph*{Application to TFIM with boundary dissipation---}

The $\Or(N^3)$ scaling for boundary-dissipated 1D translationally invariant TFIM has been observed in previous studies \cite{ZhengWangChen2023}, where the jump operator is strictly applied to a single site on the boundary. Their proof maps the problem to a non-Hermitian Su-Schrieffer-Heeger (SSH) model, which enables an analytic computation of the rapidity spectrum. In contrast, our jump operators ${K_a}$ are quasilocal, rendering this technique inapplicable. Instead, we leverage the stronger result in Theorem \ref{thm:sharpbound_quasifree} to directly bound the convergence in trace distance.

For simplicity we only consider the case when the coupling operators are $X_1,Y_1$ on one end of the boundary.
First, following the proof of Theorem \ref{thm:sharpbound_quasifree} in \cref{sec:proof_sharpbound_quasifree}, the jump operators take the form
\begin{equation}
\begin{aligned}
K_{X_1}&=\int f(s) e^{i H s} X_1 e^{-i H s} \ud s=\sum_k \varphi_{k1} b_k,\\
K_{Y_1}&=\int f(s) e^{i H s} Y_1 e^{-i H s} \ud s=\sum_k \psi_{k1} b_k,
\end{aligned}
\end{equation}
for some coefficient vectors $\{\varphi_{k1}\},\{\psi_{k1}\}$, where the ground state is a quasi-\revvv{vacuum} state satisfying $b_k \ket{\mathrm{vac}}=0$. Let $\Lambda$ be a diagonal matrix encoding the eigenvalues of the TFIM Hamiltonian $H$, then the non-Hermitian Hamiltonian in \cref{eqn:nonhermitian_ham} can be written as
\begin{equation}
H_{\rm nh} = \vb^{\dag} h_{\mathrm{nh}}^f \vb, \quad
h^{f}_{\mathrm{nh}}=i\Lambda-\frac12 \varphi\varphi^{\dag}-\frac12 \psi \psi^{\dag}.
\end{equation}
The crucial role of the coherent term in convergence is now evident. Without the coherent $i\Lambda$, $h^{f}_{\mathrm{nh}}$ is merely a rank-2 matrix, resulting in a large kernel for $H_{\rm nh}$ and the non-Hermitian gap $\Delta = 0$. When the coherent term is present, the jump operators can shift the imaginary eigenvalues $i\Lambda$ away from the real axis, opening a positive spectral gap.

Using first-order perturbation theory, we estimate the spectral gap as $\Delta = \Theta(N^{-3})$. The cubic scaling is mainly due to long-wavelength modes, whose magnitude scales as $\Or(N^{-1.5})$ near the boundary. The square of this magnitude determines the spectral gap from the real axis, perturbing the eigenvalues from the imaginary axis by an amount proportional to $N^{-3}$.

Furthermore, $\kappa(V) = \Or(1)$, which gives the mixing time scaling as $\Or(N^3 \log N)$.
We provide the details in \cref{sec:tfim_boundary_proof}. We expect the analysis for the cluster state Hamiltonian may be derived with a similar argument and is omitted here.

\subsection{Rapid ground state preparation of weakly interacting spin systems}\label{sec:provable_rapid}

Our numerical results in \cref{sec:numer_TNmethod} strongly suggest that dissipative dynamics can achieve rapid mixing, or $\mathcal{O}(\log N)$ mixing time, for certain noncommuting Hamiltonians under bulk dissipation. However, as discussed in \cref{sec:intro}, despite significant progress in theoretical understanding of the effectiveness of finite-temperature quantum Gibbs samplers~\cite{TemmeKastoryanoRuskaiEtAl2010,KastoryanoTemme2013,BardetCapelGaoEtAl2023,rouz2024,DingLiLinZhang2024,kochanowski2024rapid,rouze2024optimal,tong2024fast}, many of these techniques cannot be used to characterize the convergence towards the ground state because it is not a \revvv{full-rank} state.
In this section, we develop a technique that provides the first rapid mixing result for a class of general noncommuting Hamiltonians.

We focus on the Lindbladian dynamics without coherent terms, i.e.
\begin{equation}\label{eq:lindblad_no_coherent}
  \frac{\mathrm{d} \rho}{\mathrm{d}t}=\mathcal{L}[\rho]=\sum_a \underbrace{K_a \rho K_a^{\dagger}-\frac{1}{2}\left\{K_a^{\dagger} K_a, \rho\right\}}_{:=\mathcal{L}_{a,\varepsilon}}.
\end{equation}
\rev{Although the coherent term is removed, the fixed point of the above Lindblad dynamics is still the ground state of the Hamiltonian $H$, since the jump operator still depends on the Hamiltonian.} For concreteness, consider a local Hamiltonian $H$ over a $D$-dimensional lattice of spin systems $\Lambda= [0,L]^D$ with the following form (the system size is $N=(L+1)^D$):

\begin{equation}\label{eqn:H}
H=H_0+H_1=-\sum_{i} Z_i+ \varepsilon \sum_{j} h_j, \quad \|h_j\|\leq 1.
\end{equation}
Here, $H_0 = -\sum_{i} Z_i$ is referred to as the noninteracting term because its indices do not overlap. The choice of $Z_i$ as the noninteracting term is made for convenience, and can be substituted with other simple, gapped local terms that also have non-overlapping indices. We assume the interacting term $H_1$ is an $(r_0,l)$-geometrically local Hamiltonian (see \cref{sec:notation}). A specific example of the Hamiltonian in \cref{eqn:H} is the $D$-dimensional TFIM model, which is a $(2,2D+1)$-local Hamiltonian. The parameter $\varepsilon$ is called the interaction strength.


For the Hamiltonian \eqref{eqn:H}, it is sufficient to choose $\{A_a\}=\{X_i\}_{i\in \Lambda}$ to be the set of all single Pauli $X$ matrices as coupling operators. This is because if the interaction strength $\varepsilon=0$, then the dissipative dynamics for the noninteracting problem is ergodic and the mixing time scales as  $\Or(\log N)$. Our main result is that there exists a critical interaction strength $\varepsilon^*$ independent of $L$ (and thus $N$), so that for all $\varepsilon<\varepsilon^*$, the scaling of the mixing time remains $\mathcal{O}(\log N)$.

\begin{thm}[Informal]
\label{thm:rapid_mixing_2D_TFIM}
Consider a gapped Hamiltonian $H$ in the form of \eqref{eqn:H} defined on a $D$-dimensional lattice $\Lambda= [0,L]^D$, and $N=(L+1)^D$ is the system size.  Let $\{A_a\}=\{X_i\}_{i\in\Lambda}$ be a set of coupling operators and $\{K_a\}$ be the corresponding jump operators defined via \cref{eqn:jump_time}. Consider the Lindblad operator without the coherent term~\eqref{eq:lindblad_no_coherent}. Then there exists a constant $\varepsilon^*$ independent of the system size such that when $\varepsilon<\varepsilon^*$, we have
\begin{equation}
\tau_{\operatorname{mix}}(\eta)=\Theta(\log(N/\eta))\,,
\end{equation}
where $\tau_{\operatorname{mix}}(\eta)$ is defined in~\eqref{eqn:mixing_tracedistance_rho_0}.
\end{thm}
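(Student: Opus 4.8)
The natural route, and the one suggested by the structure of the problem, is to abandon entropic/log-Sobolev tools (which require an invertible stationary state) and instead track convergence in the Heisenberg picture through a weighted \emph{oscillator norm} on observables, in the spirit of the high-temperature Gibbs-sampler analysis of Ref.~\cite{rouze2024optimal}, but modified for the rank-one ground state $\sigma=\ketbra{\psi_0}{\psi_0}$. The plan is to split any observable $O$ into three pieces --- its component along $I$, its ``off-diagonal'' part $(I-\sigma)O\sigma+\sigma O(I-\sigma)$ that couples $\ket{\psi_0}$ to the excited subspace, and its ``on-diagonal'' part on the orthogonal complement of $\ket{\psi_0}$ --- and to equip each piece with its own exponential weight over the lattice, so that the resulting norm $\norm{\cdot}_{\mathrm{osc}}$ both dominates $\norm{\cdot}$ and is controlled by it up to $\poly(N)$ factors in the way needed for the trace distance. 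One would then prove a contraction estimate $\norm{e^{t\mathcal{L}_\varepsilon^\dagger}(O)-\Tr[\sigma O]\,I}_{\mathrm{osc}}\le \kappa\, e^{-ct}\,\norm{O-\Tr[\sigma O]\,I}_{\mathrm{osc}}$ with $c=\Theta(1)$ and $\kappa=\poly(N)$, and convert it into the mixing-time bound. Throughout one uses that $H$ is gapped with an $N$-independent gap $\Delta$: this holds trivially for the commuting $H_0=-\sum_i Z_i$ (gap $2$), and stability of the spectral gap under small geometrically local perturbations keeps $\Delta=\Theta(1)$ for $\varepsilon<\varepsilon^*$; the gap also fixes the time-domain support $\sim\Delta^{-1}=\Theta(1)$ of the filter $f$.

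First I would treat the noninteracting case $\varepsilon=0$. Here $X_i$ commutes with every $Z_j$, $j\neq i$, so $e^{iH_0 s}X_i e^{-iH_0 s}=e^{-2is}\sigma_i^-+e^{2is}\sigma_i^+$ is supported on site $i$; since $\hat f$ lives on $(-\omega_{\max},0)$, only the energy-lowering term survives and $K_i=\hat f(-2)\,\sigma_i^-$ is a single-site lowering operator. Thus $\mathcal{L}_0=\sum_i\mathcal{L}_{i,0}$ is a sum of commuting single-site amplitude-damping generators with common rate $\gamma=\abs{\hat f(-2)}^2=\Theta(1)$, fixed point $\sigma_0=\bigotimes_i\ketbra{0}{0}_i$, and an $\Theta(1)$ relaxation rate for each of the on-diagonal ($\sigma_i^+\sigma_i^-$, rate $\gamma$) and off-diagonal ($\sigma_i^\pm$, rate $\gamma/2$) components. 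Because the oscillator norm tensorizes over sites, $\mathcal{L}_0^\dagger$ contracts it with the \emph{single-site} rate $c_0=\Theta(1)$, independently of $N$. This already uses the modified (diagonal/off-diagonal) norm, since the fixed point is pure and the ordinary oscillator norm is undefined.

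The perturbative step is where the difficulty lies. One first shows that for $\varepsilon>0$ the jump operator $K_{a,\varepsilon}=\int f(s)\,e^{iHs}A_a e^{-iHs}\,\mathrm{d}s$ is quasilocal around $a$ with $\Theta(1)$-scale exponential tails and that $\norm{K_{a,\varepsilon}-K_{a,0}}=\Or(\varepsilon)$: since $f$ is supported (up to negligible tails) on $\abs{s}\lesssim\Delta^{-1}=\Theta(1)$, a Lieb--Robinson bound adapted to the gapped ground-state setting controls the spreading of $e^{iHs}A_a e^{-iHs}$, allowing truncation to a ball of radius $\Theta(\log(1/\delta))$ at error $\delta$, while a Duhamel comparison of the evolutions generated by $H$ and $H_0$ --- in which only the $\Or(1)$ terms of $H_1$ touching $a$ enter the commutator with $A_a$ --- yields the $\Or(\varepsilon)$ estimate. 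Consequently $\mathcal{L}_\varepsilon^\dagger-\mathcal{L}_0^\dagger$ is a sum of $N$ quasilocal terms, one per site, each of size $\Or(\varepsilon)$. The hard part is to choose the exponential weights in the modified oscillator norm so that \emph{simultaneously}: (A) it controls the trace distance up to $\poly(N)$; (B) $\mathcal{L}_0^\dagger$ contracts it at the $N$-independent rate of Step~1; and (C) the local, quasilocal, $\Or(\varepsilon)$ perturbation has induced operator norm $\Or(\varepsilon)$ on the oscillator-norm space, uniformly in $N$ --- so that a Duhamel/series expansion gives a surviving contraction rate $c'=\Theta(1)$ for all $\varepsilon<\varepsilon^*$. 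The subtlety specific to the ground state is that the off-diagonal component couples $\ket{\psi_0}$ to the entire excited subspace and is not annihilated by the lattice conditional expectations underlying the unperturbed analysis, so one must separately verify that this component still relaxes at an $N$-independent rate under the perturbed dynamics; reconciling this with (A)--(C) is, I expect, the main obstacle.

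Finally, I would deduce the mixing time. For the upper bound, write $\rho(t)-\sigma=e^{t\mathcal{L}_\varepsilon}(\rho_0-\sigma)$ and expand $\rho_0-\sigma$ through a telescoping sum over the $\Or(N)$ lattice conditional expectations; testing each term against bounded observables and invoking the Step~3 contraction gives $D(\rho(t),\sigma)\le\poly(N)\,e^{-c't}$ for every initial $\rho_0$, hence $\tau_{\operatorname{mix}}(\eta)=\Or(\log(N/\eta))$. For the matching lower bound, take $\rho_0=\mathbb{I}/2^N$: since the ground state is $\Or(\varepsilon)$-close to $\bigotimes_i\ketbra{0}{0}_i$ and each site relaxes at rate $\Theta(1)$, one gets $F^2(\rho(t),\sigma)=\braket{\psi_0|\rho(t)|\psi_0}\lesssim(1-\tfrac12 e^{-\gamma t})^N\le e^{-\frac N2 e^{-\gamma t}}$ up to $\Or(\varepsilon)$ corrections, so $1-F^2(\rho(t),\sigma)>\eta$ unless $t=\Omega(\log(N/\eta))$; combined with $D\ge 1-F\ge\tfrac12(1-F^2)$ and the mixing-time relations in \cref{sec:compare_mixing_time} this yields $\tau_{\operatorname{mix}}(\eta)=\Omega(\log(N/\eta))$, giving the claimed $\Theta(\log(N/\eta))$.
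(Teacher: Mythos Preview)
Your overall strategy---Heisenberg picture, an oscillator-type norm that separately tracks ``diagonal'' and ``off-diagonal'' pieces (to accommodate the two distinct decay rates of amplitude damping), the explicit $\varepsilon=0$ computation yielding $K_i\propto\sigma_i^-$, quasilocality of $K_{a,\varepsilon}$ via Lieb--Robinson plus the fast decay of $f$, and a Duhamel perturbation around $\mathcal{L}_0$---is exactly the route the paper takes.

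The substantive difference is in the decomposition you use to define the norm. You propose to split $O$ relative to the \emph{global} projector $\sigma=\ketbra{\psi_0}{\psi_0}$, into $(I-\sigma)O\sigma+\sigma O(I-\sigma)$ and the rest. For $\varepsilon>0$ this decomposition is nonlocal and $\varepsilon$-dependent, which is precisely why you run into the obstacle you flag (``the off-diagonal component\ldots is not annihilated by the lattice conditional expectations''). The paper avoids this entirely by using a \emph{local, $\varepsilon$-independent} split: for each site $i$, let $P_i(O)$ and $Q_i(O)$ be the $Z_i$-diagonal and $Z_i$-off-diagonal parts of $O$ (i.e.\ project onto $\ket{0_i}\bra{0_i},\ket{1_i}\bra{1_i}$ versus $\ket{0_i}\bra{1_i},\ket{1_i}\bra{0_i}$), and set $\tnorm{O}=\sum_i\|\delta_i\circ P_i(O)\|+\|\delta_i\circ Q_i(O)\|$ with the standard local oscillation $\delta_i(O)=O-\tfrac12 I_i\otimes\Tr_i(O)$. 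Because $P_i,Q_i,\delta_i$ are all single-site maps in the computational basis, one gets for free the identities $\delta_i\circ P_i\circ\mathcal{L}_{i,0}^\dagger=-\delta_i\circ P_i$ and $\delta_i\circ Q_i\circ\mathcal{L}_{i,0}^\dagger=-\tfrac12\delta_i\circ Q_i$, the commutation $[\delta_i\circ P_i,\mathcal{L}_{j,0}^\dagger]=0$ for $j\neq i$, and the telescoping bound $\|O-\Tr(O)/2^N\|\le\tnorm{O}$. The perturbed ground state never enters the norm; all $\varepsilon$-dependence is pushed into the smallness of $\mathcal{L}_{i,\varepsilon}^\dagger-\mathcal{L}_{i,0}^\dagger$ and of $[\delta_i\circ P_i,\mathcal{L}_{j,\varepsilon}^\dagger]$, both controlled by the quasilocality estimates you already sketched. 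This choice is what makes your conditions (A)--(C) hold simultaneously without further work.

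One minor note: the paper's rigorous version only establishes the upper bound $\tau_{\operatorname{mix}}(\eta)=\Or(\log(N/\eta))$; your product-state lower bound argument for the $\Omega$ direction is a reasonable addition not spelled out there.
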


The rigorous statement of Theorem \ref{thm:rapid_mixing_2D_TFIM} and its proof are given in \cref{proof_rapid_thm}.~\rev{Our proof is inspired by recent analyses of mixing times for quantum Gibbs samplers~\cite{rouze2024optimal}. The analysis in~\cite{rouze2024optimal} avoids relying on the invertibility of the fixed point and characterizes convergence through the decay of the so-called  ``local oscillator norm", a quantity that can be defined for any Lindblad dynamics with a unique fixed point. In our setting, we employ a modified local oscillator norm of observables (see~\cref{eqn:oscillator} in~\cref{proof_rapid_thm}). For the noninteracting Hamiltonian $H_0$, the exponential decay rate of the oscillator norm can be computed explicitly. Furthermore, both its evolution and its decay rate remain stable under local perturbations of the Hamiltonian, which can be rigorously established using the Lieb-Robinson bound.}


\subsection{Rapid ground state preparation of weakly interacting fermionic systems}\label{sec:provable_rapid_fermion}

In Theorem \ref{thm:rapid_mixing_2D_TFIM}, the noninteracting Hamiltonian is defined as a sum of single-site Pauli $Z$ operators. In this section, we extend this result to the fermionic setting. Since free fermionic systems can be exactly diagonalized, we introduce a more general noninteracting term that permits coupling between fermionic sites.

We consider a local fermionic Hamiltonian $H$ defined on a $D$-dimensional lattice of fermionic systems, $\Lambda = [0,L]^D$, given by
\begin{equation}\label{eqn:H_fermion}
H = H_0+ H_1= \sum_{ij} M_{i,j} c^\dagger_i c_j + \varepsilon \sum_j h_j, \quad \|h_j\|\leq 1,
\end{equation}
where $(M_{i,j})$ is a positive definite Hermitian matrix, and $c^\dagger_j$ and $c_j$ are the creation and annihilation operators at site $j$. The terms $\{h_j\}$ are local fermionic perturbations and are parity preserving, meaning that each $h_j$ contains an even number of creation and annihilation operators. We further assume that $H_0$ is $(1,l)$-geometrically local and $\sum_j h_j$ are $(r_0,l)$-geometrically local. Specifically, each term in $H_0$ is a product of fermionic operators acting on a set of sites whose Manhattan diameter is at most $1$, and each $h_j$ is a product of fermionic operators acting on a set of sites whose Manhattan diameter is at most $r_0$. In addition, each site $i$ appears in at most $l$ non-trivial $c^\dagger_i c_j$ and $h_j$ terms.

For~\cref{eqn:H_fermion}, we choose $\{A_a\}=\{c^\dagger_i,c_i\}_{i\in \Lambda}$ to be the set of all single fermionic operators as coupling operators. We show that the mixing time of the Lindblad dynamics for the fermionic system~\eqref{eqn:H_fermion} also scales logarithmically with the system size for sufficiently small $\varepsilon$. This is summarized in the following theorem:
\begin{thm}[Informal]
\label{thm:rapid_mixing_fermion}
Consider a gapped fermionic Hamiltonian $H$ in the form of \eqref{eqn:H_fermion} defined on a $D$-dimensional lattice $\Lambda= [0,L]^D$, and $N=(L+1)^D$ is the system size. Let $\{A_a\}=\{c^\dagger_i,c_i\}_{i\in\Lambda}$ be a set of coupling operators and $\{K_a\}$ be the corresponding jump operators defined via \cref{eqn:jump_time}. Consider the Lindblad operator without the coherent term~\eqref{eq:lindblad_no_coherent}. Then there exists a constant $\varepsilon^*$ independent of the system size  such that when $\varepsilon<\varepsilon^*$, we have
\begin{equation}
\tau_{\operatorname{mix}}(\eta)=\Theta(\log(N/\eta))\,,
\end{equation}
where $\tau_{\operatorname{mix}}(\eta)$ is defined in~\eqref{eqn:mixing_tracedistance}.
\end{thm}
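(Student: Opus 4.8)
The plan is to mirror the strategy behind Theorem~\ref{thm:rapid_mixing_2D_TFIM}: work in the Heisenberg picture, control the trace distance through the exponential contraction of a (modified) \emph{local oscillator norm} of observables, establish this contraction exactly at $\varepsilon=0$, and then show it is stable under the local perturbation $\varepsilon H_1$ via a Lieb--Robinson bound. The new ingredient relative to the spin case is that fermionic creation and annihilation operators are nonlocal in the Jordan--Wigner (spin) basis, so every locality-based object---the partial trace, the conditional expectations entering the oscillator norm, and the Lieb--Robinson estimate---must be recast in the fermionic setting. Throughout one may restrict to the even subalgebra $\mathcal{A}^{\mathrm{even}}$: the coupling operators $c_i^\dagger,c_i$ are parity-odd and the $h_j$ are parity-even, so $\mathcal{L}_\varepsilon$ and $\mathcal{L}_\varepsilon^\dagger$ both preserve parity, and only the even part of an observable contributes to $\Tr[O\rho(t)]$ when $\rho(t)$ is parity-even.

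\textbf{Fermionic oscillator norm.} For $A\subseteq\Lambda$ let $\mathrm{ftr}_{A^c}$ be the fermionic partial trace over the modes outside $A$ (obtained by moving the traced-out modes to the front using the anticommutation relations, tracking the parity sign, and then tracing), and let $\mathbb{E}_A(O)=\mathrm{ftr}_{A^c}(O)\otimes 2^{-|A^c|}I_{A^c}$ be the induced conditional expectation onto the algebra of region $A$. Because the fixed point is the rank-one ground state, the naive oscillator norm built from $O\mapsto O-\mathbb{E}_{\Lambda\setminus\{i\}}(O)$ degenerates; exactly as in the spin case we therefore use a \emph{modified} norm that tracks separately the on-diagonal part $P_0OP_0+(I-P_0)O(I-P_0)$ and the off-diagonal part $P_0O(I-P_0)+(I-P_0)OP_0$, where $P_0=\ketbra{0}{0}$ projects onto the unperturbed ground state; each piece is measured by its own weighted sum of the local deviations $\norm{\,\cdot\,-\mathbb{E}_{\Lambda\setminus\{i\}}(\cdot)\,}$. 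Trace distance is recovered from this norm by a telescoping argument: $D(e^{t\mathcal{L}_\varepsilon}(\rho_0),\sigma)$ is bounded by a sum over the $\Theta(N)$ sites of local observable deviations, each of which is controlled by the oscillator norm of $e^{t\mathcal{L}_\varepsilon^\dagger}(O)$.

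\textbf{Exact decay at $\varepsilon=0$.} Diagonalize $M=U\Lambda_{\mathrm{sp}}U^\dagger$ with eigenvalues $\lambda_k>0$ (positive definiteness together with the gap assumption gives $\Delta=\min_k\lambda_k=\Omega(1)$, and $\ell$-locality forces $\norm{M}=\mathcal{O}(1)$), and put $d_k=\sum_i\overline{U_{ik}}\,c_i$, so $H_0=\sum_k\lambda_k d_k^\dagger d_k$ has vacuum $\ket{0}$. Since $e^{iH_0s}d_ke^{-iH_0s}=e^{-i\lambda_ks}d_k$, the jump operators become $K^0_{c_i}=\sum_k U_{ik}\,\hat f(-\lambda_k)\,d_k$ and $K^0_{c_i^\dagger}=0$ (the support of $\hat f$ lies on the negative axis while $\lambda_k>0$); with $\omega_{\max}=2\norm{H}$ and a filter bounded below on $[\Delta,\norm{M}]$, one has $|\hat f(-\lambda_k)|\ge c>0$ for all $k$. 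By unitarity of $U$, $\sum_i (K^0_{c_i})^\dagger K^0_{c_i}=\sum_k|\hat f(-\lambda_k)|^2 d_k^\dagger d_k$, so $\mathcal{L}_0$ is quasi-free; its action on single-mode and number observables is solvable in closed form (consistent with Theorem~\ref{thm:sharpbound_quasifree} at $\varepsilon=0$, where the non-Hermitian Hamiltonian is diagonal in $\{d_k\}$ with gap $\tfrac12\min_k|\hat f(-\lambda_k)|^2$). From this one reads off that the modified fermionic oscillator norm contracts at a rate $\gamma_0=\Omega(1)$ independent of $N$.

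\textbf{Stability and conclusion.} For $\varepsilon>0$ both the fixed point and the jump operators deform. Using a Lieb--Robinson bound for the local fermionic Hamiltonian $H$, adapted to the ground-state setting as in Theorem~\ref{thm:rapid_mixing_2D_TFIM}, together with the fact that $f(s)$ is effectively supported on $|s|\lesssim\Delta^{-1}=\mathcal{O}(1)$, each $K_a^\varepsilon=\int f(s)e^{iHs}A_ae^{-iHs}\,\mathrm{d}s$ is shown to be quasi-local around $\mathrm{supp}(A_a)$ with exponentially decaying tails, and $\norm{K_a^\varepsilon-K_a^0}=\mathcal{O}(\varepsilon)$; crucially, these statements are clean in the fermionic picture even though the corresponding spin operators carry long Jordan--Wigner strings. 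Hence $\mathcal{L}_\varepsilon^\dagger-\mathcal{L}_0^\dagger$ has norm $\mathcal{O}(\varepsilon)$ as an operator on observables equipped with the fermionic oscillator norm, and a perturbation argument (again mirroring the spin proof) promotes the $\varepsilon=0$ contraction to a contraction of $\mathcal{L}_\varepsilon^\dagger$ at rate $\gamma_0-\mathcal{O}(\varepsilon)>0$ for all $\varepsilon<\varepsilon^*$ with $\varepsilon^*$ an $N$-independent constant. Combined with the telescoping bound this gives $D(e^{t\mathcal{L}_\varepsilon}(\rho_0),\sigma)\le\mathrm{poly}(N)\,e^{-(\gamma_0-\mathcal{O}(\varepsilon))t}$, i.e. $\tau_{\operatorname{mix}}(\eta)=\mathcal{O}(\log(N/\eta))$; the matching lower bound $\Omega(\log(N/\eta))$ follows because at $\varepsilon=0$ the evolved state and $\sigma$ factor over the $N$ modes with per-mode distance $\Theta(e^{-\gamma_0 t})$, so the global trace distance is $\Theta(Ne^{-\gamma_0 t})$ for large $t$, and for small $\varepsilon$ the gapped ground state has exponentially decaying correlations so the same bound persists. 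The main obstacle is setting up the fermionic partial trace and the modified on-/off-diagonal oscillator norm so that it is well behaved on the graded algebra despite the rank-one fixed point, and---within that framework---carrying out the Lieb--Robinson bookkeeping so that a quasi-local perturbation of the jump operators is provably an $\mathcal{O}(\varepsilon)$ perturbation of the Lindbladian \emph{in the oscillator norm}; this is where essentially all the work beyond the spin case of Theorem~\ref{thm:rapid_mixing_2D_TFIM} lies.
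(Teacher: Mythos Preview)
Your high-level plan---fermionic partial trace and oscillator norm, exact contraction at $\varepsilon=0$, stability via fermionic Lieb--Robinson---matches the paper. Two points, however, would block the argument as written.

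First, the modified oscillator norm in the paper is built from \emph{site-local} projections $P_i,Q_i$ onto the diagonal and off-diagonal blocks at mode $i$ (see \eqref{eqn:O_1_fermion}, \eqref{eqn:O_2_fermion}), not from the global ground-state projector $P_0=\ketbra{0}{0}$ you describe. This site-local structure is what makes the key identities $\delta^f_i\circ P_i(\mathcal{L}^\dagger_{i,0}(O))=-\delta^f_i\circ P_i(O)$ and $\delta^f_i\circ Q_i(\mathcal{L}^\dagger_{i,0}(O))=-\tfrac12\,\delta^f_i\circ Q_i(O)$ hold at each site separately, which is the engine of the decaying-plus-contractive decomposition (\eqref{eqn:P_bound_fermion}, \eqref{eqn:Q_bound_fermion}). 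With a global $P_0$ there is no such site-wise calculus, and the commutator $[\delta^f_i\circ P_0(\cdot)P_0,\mathcal{L}^\dagger_{j,0}]$ has no reason to vanish for $j\neq i$.

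Second, the $\varepsilon=0$ step needs more than $|\hat f(-\lambda_k)|\ge c>0$. The rigorous statement (Theorem~\ref{thm:rapid_mixing_fermion_rigorous}) imposes $\hat f(\nu)=1$ on $[-\|M\|,-\Delta]$, so that $\hat f(-M)=I$ and hence $K^0_{c_i}=\sum_j[\hat f(-M)]_{ij}c_j=c_i$ \emph{exactly}. Only then is $\mathcal{L}_0=\sum_i\bigl(c_i(\cdot)c_i^\dagger-\tfrac12\{c_i^\dagger c_i,\cdot\}\bigr)$ a sum of genuinely single-site maps in the $c_i$ basis, and the spin-case computation transfers verbatim. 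Under your weaker assumption, $K^0_{c_i}=\sum_j[\hat f(-M)]_{ij}c_j$ is merely quasi-local (since $M$ is $1$-local but $\hat f(-M)$ is not), so $\mathcal{L}^\dagger_{i,0}$ does not commute with $\delta^f_j$ for $j\neq i$. You then cannot ``read off'' contraction of the $c_i$-based oscillator norm from the $d_k$-diagonal quasi-free picture: Theorem~\ref{thm:sharpbound_quasifree} gives trace-distance decay, but the perturbation step requires contraction in the oscillator norm itself, and the transformation $U$ between the $c_i$ and $d_k$ bases is generically delocalized.
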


\rev{Theorem~\ref{thm:rapid_mixing_fermion_rigorous}, the rigorous version of Theorem~\ref{thm:rapid_mixing_fermion}, is proven in Appendix~\ref{proof_rapid_thm_fermion}, where further technical details are provided.} \rev{Similar to Theorem~\ref{thm:rapid_mixing_2D_TFIM}, the proof of the above theorem uses the oscillator norm of observables. On the other hand, we note that the creation and annihilation operators are nonlocal in the spin basis after applying the Jordan--Wigner transformation. As a result, the oscillator norm defined in \eqref{eqn:oscillator} is not suitable for fermionic systems. A proper definition of the oscillator norm requires the notion of the fermionic partial trace that is compatible with the canonical anticommutation relation (CAR); see~\cref{proof_rapid_thm_fermion}~\eqref{eqn:fermion_partial_trace_simplified} and~\eqref{eqn:fermion_partial_trace} for details.}

\rev{Additionally, we note that although the perturbation terms in both Theorems~\ref{thm:rapid_mixing_2D_TFIM} and~\ref{thm:rapid_mixing_fermion} are assumed to be local for simplicity, the resulting perturbation in the jump operator $K$ is not strictly local. Instead, under a suitable choice of $f(t)$ (decaying rapidly in $|t|$), the perturbation in $K$ becomes quasilocal, which is sufficient to establish the stability of the evolution of the local oscillator norm. Consequently, our results extend naturally to quasilocal perturbations, as these also induce quasilocal perturbations to the jump operator $K$.}

\section{Discussion}

This work significantly strengthens the evidence that dissipative dynamics is a powerful method for preparing ground states for a wide class of noncommuting Hamiltonians. A variety of dissipative mechanisms exist, such as imaginary-time evolution (ITE) \cite{McArdleJonesEndoEtAl2019,MottaSunTanEtAl2020,HugginsOGormanBryanEtAl2022}.
However, implementing ITE via the operator $e^{-\tau H}$ does not readily yield a completely positive trace-preserving (CPTP) map. Existing approaches often rely on variational ansatze or tomography-based procedures. By contrast, Lindblad dynamics provides a nonvariational and inherently CPTP process that can be efficiently implemented on fault-tolerant quantum hardware.

We have shown in Theorem~\ref{thm:sharpbound_quasifree} that a carefully designed Lindblad dynamics succeeds in preparing the ground states of quadratic Majorana Hamiltonians, and have proven in Theorem~\ref{thm:rapid_mixing_2D_TFIM} its effectiveness for nonintegrable but weakly interacting spin Hamiltonians.
Our tensor-network simulations suggest that these methods can remain effective beyond the reach of our rigorous analysis. For fermionic systems, our Theorem~\ref{thm:rapid_mixing_fermion} generalizes the recent work of~\cite{tong2024fast,vsmid2025polynomial}, which establishes the spectral gap for Gibbs state preparation in perturbed fermionic systems. Our result extends this to the ground state (zero temperature). Our result enhances the spectral gap bound (also called fast mixing) at finite temperatures from Ref.~\cite{tong2024fast,vsmid2025polynomial} to the stronger notion of rapid mixing, and proves rapid mixing at zero temperature. We note that Theorem~\ref{thm:rapid_mixing_fermion} imposes certain restrictions on the choice of the noninteracting term. Removing these restrictions and extending our result to efficient low-temperature thermal state preparation remain interesting directions for future work.
Additionally, investigating spin systems with long-range interactions may provide further insights into the mixing properties of dissipative processes.

We also note that Lindblad dynamics with jump operators of the form \eqref{eqn:jump_time} are closely related to cooling and thermalization protocols based on weakly coupled system-bath interactions~\cite{MiMichailidisShabaniEtAl2024,LloydMichailidisMiEtAl2025,Meghana_2020,langbehn2024dilute}, including several that appeared after the initial submission of the present work~\cite{molpeceres2025,hahn2025provably,hagan2025thermodynamic,langbehn2025universal,lloyd2025quantumthermal,scandi2025thermal}. Theoretical justification of the end-to-end efficiency of such protocols requires mixing time analysis (see \cref{sec:Lindalgo}). Our work provides the first rigorous mixing time analysis for a number of physical, noncommuting Hamiltonians, and we expect that these results will inform the future development of dissipative ground-state preparation protocols.

Recent numerical results also show that dissipative state preparation can be more robust to decoherence than adiabatic state preparation~\cite{doi:10.1126/sciadv.aaw9268}. A similar phenomenon has been observed experimentally showing that the lifetime of dissipatively prepared states can be much longer than the coherence times of physical qubits~\cite{mi2024stable,DelReRostFoss-FeigEtAl2024}. A rigorous understanding of the source of such robustness would be an interesting direction for future research.

A key open question is whether quantum computers simulating dissipative dynamics can tackle classically hard ground-state problems.  Viewing ground-state preparation as a minimization problem, there exist instances where finding even a local minimum is classically hard, yet Lindblad dynamics can efficiently achieve this quantumly~\cite{ChenHuangPreskillEtAl2024}. In the case of~\cite{ChenHuangPreskillEtAl2024}, the local and global minima of the energy coincide, resulting in a single-phase ground state. Many challenging physical Hamiltonians involve resolving multiple phases with nearly degenerate energy levels, which typically lie outside the perturbative regime. A prominent example is the phase diagram of the two-dimensional Hubbard model. \revv{It would be instructive to compare dissipative algorithms with a broad class of classical approaches such as
those based on variational states or quantum Monte Carlo. While a detailed benchmarking study is beyond our current scope, we emphasize that dissipative dynamics can, in principle, explore low-energy subspaces efficiently even in beyond one-dimensional settings and in settings where classical projector Monte Carlo methods suffer from the sign problem, as in frustrated or fermionic systems. } Further theoretical analysis and
\revv{classical} simulations may be instrumental in quantifying the scope of the quantum advantage in these more complex scenarios.

\section*{Data availability}
The data that support this study are available upon request.

\section*{Code availability}
The codes that support this study are available on GitHub via \url{https://github.com/lin-lin/oneancillaground}.

\vspace{1em}
\emph{Acknowledgments--}
This material is based upon work supported by the U.S. Department of Energy, Office of Science, Accelerated Research in Quantum Computing Centers, Quantum Utility through Advanced Computational Quantum Algorithms, grant no. DE-SC0025572 (J.P., G.K.C., L.L.). Additional support is acknowledged from the U.S. Department of Energy, Office of Science, National Quantum Information Science Research Centers, Quantum Systems Accelerator (Y.Z., Z.D., J.P., L.L.) and the National Science Foundation, grant no. PHY-2317110 (Y.Z., J.P.). The Institute for Quantum Information and Matter is an NSF Physics Frontiers Center. L.L. is a Simons Investigator in Mathematics. This research used the Savio computational cluster resource provided by the Berkeley Research Computing program at the University of California, Berkeley. Z.D., J.H. and L.L. thank the Institute for Pure and Applied Mathematics (IPAM) for its hospitality in hosting them as long-term visitors during the semester-long program   ``Mathematical and Computational Challenges in Quantum Computing'' in Fall 2023, from which this collaboration started.  The authors thank Joao Basso, Paul Cazeaux, Anthony Chen, Soonwon Choi, Marius Junge, Michael Kastoryano, Jianfeng Lu, Christian Mendl, Gunhee Park, Cambyse Rouz{\'e}, Yu Tong and Lexing Ying for helpful discussions.

\vspace{1em}
\emph{Author contributions--} G.K.C. and L.L. conceived the original study.
Y.Z., Z.D., and L.L. carried out theoretical analysis to support the study.
Y.Z., J.H. and L.L. carried out numerical calculations to support the study. J.G. provided support in tensor network-based simulations. All authors, Y.Z., Z.D., J.H., J.G., J.P., G.K.C., and L.L. discussed the results of the manuscript and contributed to the writing of the manuscript.


\begin{thebibliography}{105}
\providecommand{\natexlab}[1]{#1}
\providecommand{\url}[1]{\texttt{#1}}
\expandafter\ifx\csname urlstyle\endcsname\relax
  \providecommand{\doi}[1]{doi: #1}\else
  \providecommand{\doi}{doi: \begingroup \urlstyle{rm}\Url}\fi

\bibitem[Adwan et~al.(2017)Adwan, Hoepfner, and Raich]{AdwanHoepfnerRaich2017}
Z.~Adwan, G.~Hoepfner, and A.~Raich.
\newblock {Global $L^q$-Gevrey functions and their applications}.
\newblock \emph{J. Geom. Anal.}, 27\penalty0 (3):\penalty0 1874--1913, 2017.

\bibitem[Albash and Lidar(2018{\natexlab{a}})]{AlbashLidar2018}
T.~Albash and D.~A. Lidar.
\newblock Adiabatic quantum computation.
\newblock \emph{Rev. Mod. Phys.}, 90:\penalty0 015002, 2018{\natexlab{a}}.



\bibitem[Alicki(1976)]{Alicki1976}
R.~Alicki.
\newblock On the detailed balance condition for non-{Hamiltonian} systems.
\newblock \emph{Rep. Math. Phys.}, 10\penalty0 (2):\penalty0 249--258, 1976.

\bibitem[Bardet et~al.(2023)Bardet, Capel, Gao, Lucia, P{\'e}rez-Garc{\'\i}a,
  and Rouz{\'e}]{BardetCapelGaoEtAl2023}
I.~Bardet, {\'A}.~Capel, L.~Gao, A.~Lucia, D.~P{\'e}rez-Garc{\'\i}a, and
  C.~Rouz{\'e}.
\newblock Rapid thermalization of spin chain commuting {Hamiltonians}.
\newblock \emph{Phys. Rev. Lett.}, 130\penalty0 (6):\penalty0 060401, 2023.

\bibitem[Barthel and Zhang(2022)]{BarthelZhang2022}
T.~Barthel and Y.~Zhang.
\newblock Solving quasi-free and quadratic {Lindblad} master equations for open
  fermionic and bosonic systems.
\newblock \emph{J. Stat. Mech: Theory Exp.}, 2022\penalty0 (11):\penalty0
  113101, 2022.

\bibitem[Blaizot and Ripka(1985)]{BlaizotRipka1985}
J.-P. Blaizot and G.~Ripka.
\newblock \emph{Quantum theory of finite systems}.
\newblock MIT Press, 1985.

\bibitem[Briegel et~al.(2009)Briegel, Browne, D{\"u}r, Raussendorf, and Van~den
  Nest]{briegel2009measurement}
H.~J. Briegel, D.~E. Browne, W.~D{\"u}r, R.~Raussendorf, and M.~Van~den Nest.
\newblock Measurement-based quantum computation.
\newblock \emph{Nat. Phys.}, 5\penalty0 (1):\penalty0 19--26, 2009.

\bibitem[Carlen and Maas(2017)]{CarlenMaas2017}
E.~A. Carlen and J.~Maas.
\newblock Gradient flow and entropy inequalities for quantum markov semigroups
  with detailed balance.
\newblock \emph{J. Funct. Anal.}, 273\penalty0 (5):\penalty0 1810--1869, 2017.

\bibitem[Chen and Brandão(2023)]{ChenBrandao2021}
C.-F. Chen and F.~G. S.~L. Brandão.
\newblock Fast thermalization from the eigenstate thermalization hypothesis.
\newblock \emph{arXiv preprint arXiv:2112.07646}, 2023.

\bibitem[Chen et~al.(2023{\natexlab{a}})Chen, Kastoryano, Brand{\~a}o, and
  Gily{\'e}n]{ChenKastoryanoBrandaoEtAl2023}
C.-F. Chen, M.~J. Kastoryano, F.~G. Brand{\~a}o, and A.~Gily{\'e}n.
\newblock Quantum thermal state preparation.
\newblock \emph{arXiv preprint arXiv:2303.18224}, 2023{\natexlab{a}}.

\bibitem[Chen et~al.(2023{\natexlab{b}})Chen, Kastoryano, and
  Gily{\'e}n]{ChenKastoryanoGilyen2023}
C.-F. Chen, M.~J. Kastoryano, and A.~Gily{\'e}n.
\newblock An efficient and exact noncommutative quantum {Gibbs} sampler.
\newblock \emph{arXiv preprint arXiv:2311.09207}, 2023{\natexlab{b}}.

\bibitem[Chen et~al.(2024)Chen, Huang, Preskill, and
  Zhou]{ChenHuangPreskillEtAl2024}
C.-F. Chen, H.-Y. Huang, J.~Preskill, and L.~Zhou.
\newblock Local minima in quantum systems.
\newblock In \emph{Proceedings of the 56th Annual ACM Symposium on Theory of
  Computing}, pages 1323--1330, 2024.

\bibitem[Cirio et~al.(2022)Cirio, Kuo, Chen, Nori, and
  Lambert]{PhysRevB.105.035121}
M.~Cirio, P.-C. Kuo, Y.-N. Chen, F.~Nori, and N.~Lambert.
\newblock Canonical derivation of the fermionic influence superoperator.
\newblock \emph{Phys. Rev. B}, 105:\penalty0 035121, 2022.

\bibitem[Cleve and Wang(2017)]{CleveWang2017}
R.~Cleve and C.~Wang.
\newblock Efficient quantum algorithms for simulating {L}indblad evolution.
\newblock In \emph{ICALP 2017}, 2017.

\bibitem[Cong et~al.(2019)Cong, Choi, and Lukin]{CongChoiLukin2019}
I.~Cong, S.~Choi, and M.~D. Lukin.
\newblock Quantum convolutional neural networks.
\newblock \emph{Nat. Phys.}, 15\penalty0 (12):\penalty0 1273--1278, 2019.

\bibitem[Cubitt(2023)]{Cubitt2023}
T.~S. Cubitt.
\newblock Dissipative ground state preparation and the dissipative quantum
  eigensolver.
\newblock \emph{arXiv preprint arXiv:2303.11962}, 2023.

\bibitem[Cubitt et~al.(2015)Cubitt, Lucia, Michalakis, and
  Perez-Garcia]{CubittLuciaMichalakisEtAl2015}
T.~S. Cubitt, A.~Lucia, S.~Michalakis, and D.~Perez-Garcia.
\newblock Stability of local quantum dissipative systems.
\newblock \emph{Commun. Math. Phys.}, 337\penalty0 (3):\penalty0 1275--1315,
  2015.

\bibitem[Del~Re et~al.(2024)Del~Re, Rost, Foss-Feig, Kemper, and
  Freericks]{DelReRostFoss-FeigEtAl2024}
L.~Del~Re, B.~Rost, M.~Foss-Feig, A.~Kemper, and J.~Freericks.
\newblock Robust measurements of n-point correlation functions of
  driven-dissipative quantum systems on a digital quantum computer.
\newblock \emph{Physical Review Letters}, 132\penalty0 (10):\penalty0 100601,
  2024.

\bibitem[Ding et~al.(2024{\natexlab{a}})Ding, Chen, and Lin]{DingChenLin2024}
Z.~Ding, C.-F. Chen, and L.~Lin.
\newblock Single-ancilla ground state preparation via {Lindbladians}.
\newblock \emph{Phys. Rev. Research}, 6\penalty0 (3):\penalty0 033147,
  2024{\natexlab{a}}.

\bibitem[Ding et~al.(2024{\natexlab{b}})Ding, Li, Lin, and
  Zhang]{DingLiLinZhang2024}
Z.~Ding, B.~Li, L.~Lin, and R.~Zhang.
\newblock Polynomial-time preparation of low-temperature {G}ibbs states for 2d
  toric code.
\newblock \emph{arXiv preprint arXiv:2410.01206}, 2024{\natexlab{b}}.

\bibitem[Ding et~al.(2024{\natexlab{c}})Ding, Li, and Lin]{DingLiLin2024}
Z.~Ding, X.~Li, and L.~Lin.
\newblock Simulating open quantum systems using {H}amiltonian simulations.
\newblock \emph{PRX Quantum}, 5:\penalty0 020332, 2024{\natexlab{c}}.

\bibitem[Ding et~al.(2025)Ding, Li, and Lin]{DingLiLin_KMS}
Z.~Ding, B.~Li, and L.~Lin.
\newblock Efficient quantum {G}ibbs samplers with {K}ubo--{M}artin--{S}chwinger
  detailed balance condition.
\newblock \emph{Commun. Math. Phys.}, 406:\penalty0 67, 2025.

\bibitem[Eder et~al.(2025)Eder, Fin{\v{z}}gar, Braun, and
  Mendl]{eder2024quantum}
P.~J. Eder, J.~R. Fin{\v{z}}gar, S.~Braun, and C.~B. Mendl.
\newblock Quantum dissipative search via {L}indbladians.
\newblock \emph{Phys. Rev. A}, 111:\penalty0 042430, 2025.

\bibitem[Fagnola and Umanit{\`a}(2010)]{FagnolaUmanita2010}
F.~Fagnola and V.~Umanit{\`a}.
\newblock Generators of {KMS} symmetric {M}arkov semigroups on symmetry and
  quantum detailed balance.
\newblock \emph{Commun. Math. Phys.}, 298\penalty0 (2):\penalty0 523--547,
  2010.

\bibitem[Fang et~al.(2025)Fang, Lu, and Tong]{fang2024mixingtimeopenquantum}
D.~Fang, J.~Lu, and Y.~Tong.
\newblock Mixing time of open quantum systems via hypocoercivity.
\newblock \emph{Phys. Rev. Lett.}, 134:\penalty0 140405, 2025.

\bibitem[Farhi et~al.(2000)Farhi, Goldstone, Gutmann, and
  Sipser]{FarhiGoldstoneGutmannEtAl2000}
E.~Farhi, J.~Goldstone, S.~Gutmann, and M.~Sipser.
\newblock Quantum computation by adiabatic evolution.
\newblock \emph{arXiv preprint arXiv:quant-ph/0001106}, 2000.

\bibitem[Foss-Feig et~al.(2023)Foss-Feig, Tikku, Lu, Mayer, Iqbal, Gatterman,
  Gerber, Gilmore, Gresh, Hankin, et~al.]{Foss-FeigTikkuLuEtAl2023}
M.~Foss-Feig, A.~Tikku, T.-C. Lu, K.~Mayer, M.~Iqbal, T.~M. Gatterman, J.~A.
  Gerber, K.~Gilmore, D.~Gresh, A.~Hankin, et~al.
\newblock Experimental demonstration of the advantage of adaptive quantum
  circuits.
\newblock \emph{arXiv preprint arXiv:2302.03029}, 2023.

\bibitem[Fuchs and Van De~Graaf(2002)]{FuchsVanDeGraaf2002}
C.~A. Fuchs and J.~Van De~Graaf.
\newblock Cryptographic distinguishability measures for quantum-mechanical
  states.
\newblock \emph{IEEE Transactions on Information Theory}, 45:\penalty0
  1216--1227, 2002.

\bibitem[Gamarnik et~al.(2024)Gamarnik, Kiani, and Zlokapa]{gamarnik2024slow}
D.~Gamarnik, B.~T. Kiani, and A.~Zlokapa.
\newblock Slow mixing of quantum {Gibbs} samplers.
\newblock \emph{arXiv preprint arXiv:2411.04300}, 2024.

\bibitem[Ge et~al.(2019)Ge, Tura, and Cirac]{GeTuraCirac2019}
Y.~Ge, J.~Tura, and J.~I. Cirac.
\newblock Faster ground state preparation and high-precision ground energy
  estimation with fewer qubits.
\newblock \emph{J. Math. Phys.}, 60\penalty0 (2):\penalty0 022202, 2019.

\bibitem[Gily{\'e}n et~al.(2019)Gily{\'e}n, Su, Low, and
  Wiebe]{GilyenSuLowEtAl2019}
A.~Gily{\'e}n, Y.~Su, G.~H. Low, and N.~Wiebe.
\newblock Quantum singular value transformation and beyond: exponential
  improvements for quantum matrix arithmetics.
\newblock In \emph{Proceedings of the 51st Annual ACM SIGACT Symposium on
  Theory of Computing}, pages 193--204, 2019.

\bibitem[Gily{\'e}n et~al.(2024)Gily{\'e}n, Chen, Doriguello, and
  Kastoryano]{gilyen2024quantum}
A.~Gily{\'e}n, C.-F. Chen, J.~F. Doriguello, and M.~J. Kastoryano.
\newblock Quantum generalizations of {G}lauber and {M}etropolis dynamics.
\newblock \emph{arXiv preprint arXiv:2405.20322}, 2024.

\bibitem[Gray(2018)]{Gray2018}
J.~Gray.
\newblock {Quimb}: A python package for quantum information and many-body
  calculations.
\newblock \emph{JOSS}, 3\penalty0 (29):\penalty0 819, 2018.

\bibitem[Hagan and Wiebe(2025)]{hagan2025thermodynamic}
M.~Hagan and N.~Wiebe.
\newblock The thermodynamic cost of ignorance: Thermal state preparation with
  one ancilla qubit.
\newblock \emph{arXiv:2502.03410}, 2025.

\bibitem[Hahn et~al.(2025)Hahn, Parameswaran, and Placke]{hahn2025provably}
D.~Hahn, S.~A. Parameswaran, and B.~Placke.
\newblock Provably efficient quantum thermal state preparation via local
  driving.
\newblock \emph{arXiv:2505.22816}, 2025.

\bibitem[Hastings(2019)]{Hastings_2019}
M.~B. Hastings.
\newblock The stability of free fermi hamiltonians.
\newblock \emph{JMP}, 60\penalty0 (4):\penalty0 042201, 04 2019.

\bibitem[Hoepfner and Raich(2019)]{Gus_2019}
G.~Hoepfner and A.~Raich.
\newblock Global {$L^q$} {G}evrey functions, {P}aley-{W}einer theorems, and the
  {FBI} transform.
\newblock \emph{Indiana Univ. Math. J.}, 68\penalty0 (3):\penalty0 pp.
  967--1002, 2019.

\bibitem[Huggins et~al.(2022)Huggins, O’Gorman, Rubin, Reichman, Babbush, and
  Lee]{HugginsOGormanBryanEtAl2022}
W.~J. Huggins, B.~A. O’Gorman, N.~C. Rubin, D.~R. Reichman, R.~Babbush, and
  J.~Lee.
\newblock Unbiasing fermionic quantum {Monte Carlo} with a quantum computer.
\newblock \emph{Nature}, 603\penalty0 (7901):\penalty0 416--420, 2022.

\bibitem[Jiang and Irani(2024)]{JiangIrani2024}
J.~Jiang and S.~Irani.
\newblock Quantum {Metropolis} sampling via weak measurement.
\newblock \emph{arXiv preprint arXiv:2406.16023}, 2024.

\bibitem[Kalinowski et~al.(2023)Kalinowski, Maskara, and
  Lukin]{KalinowskiMaskaraLukin2023}
M.~Kalinowski, N.~Maskara, and M.~D. Lukin.
\newblock Non-abelian floquet spin liquids in a digital {Rydberg} simulator.
\newblock \emph{Phys. Rev. X}, 13\penalty0 (3):\penalty0 31008, 2023.

\bibitem[Kastoryano and Temme(2013)]{KastoryanoTemme2013}
M.~J. Kastoryano and K.~Temme.
\newblock {Quantum logarithmic Sobolev inequalities and rapid mixing}.
\newblock \emph{J. Math. Phys.}, 54\penalty0 (5):\penalty0 1--34, 2013.

\bibitem[Kochanowski et~al.(2025)Kochanowski, Alhambra, Capel, and
  Rouz{\'e}]{kochanowski2024rapid}
J.~Kochanowski, A.~M. Alhambra, A.~Capel, and C.~Rouz{\'e}.
\newblock Rapid thermalization of dissipative many-body dynamics of commuting
  {H}amiltonians.
\newblock \emph{Commun. Math. Phys.}, 406:\penalty0 176, 2025.

\bibitem[Kraus et~al.(2008)Kraus, B\"uchler, Diehl, Kantian, Micheli, and
  Zoller]{KrausBuchlerDiehlEtAl2008}
B.~Kraus, H.~P. B\"uchler, S.~Diehl, A.~Kantian, A.~Micheli, and P.~Zoller.
\newblock Preparation of entangled states by quantum markov processes.
\newblock \emph{Phys. Rev. A}, 78:\penalty0 042307, Oct 2008.

\bibitem[Lambert et~al.(2024)Lambert, Cirio, Lin, Menczel, Liang, and
  Nori]{LambertCirioLinEtAl2024}
N.~Lambert, M.~Cirio, J.-D. Lin, P.~Menczel, P.~Liang, and F.~Nori.
\newblock Fixing detailed balance in ancilla-based dissipative state
  engineering.
\newblock \emph{Phys. Rev. Research}, 6\penalty0 (4):\penalty0 043229, 2024.

\bibitem[Langbehn et~al.(2024)Langbehn, Snizhko, Gornyi, Morigi, Gefen, and
  Koch]{langbehn2024dilute}
J.~Langbehn, K.~Snizhko, I.~Gornyi, G.~Morigi, Y.~Gefen, and C.~P. Koch.
\newblock Dilute measurement-induced cooling into many-body ground states.
\newblock \emph{PRX Quantum}, 5\penalty0 (3):\penalty0 030301, 2024.

\bibitem[Langbehn et~al.(2025)Langbehn, Mouloudakis, King, Menu, Gornyi,
  Morigi, Gefen, and Koch]{langbehn2025universal}
J.~Langbehn, G.~Mouloudakis, E.~King, R.~Menu, I.~Gornyi, G.~Morigi, Y.~Gefen,
  and C.~P. Koch.
\newblock Universal cooling of quantum systems via randomized measurements.
\newblock \emph{arXiv:2506.11964}, 2025.

\bibitem[Li et~al.(in press)Li, Zhan, and Lin]{li2024dissipative}
H.-E. Li, Y.~Zhan, and L.~Lin.
\newblock Dissipative ground state preparation in ab initio electronic
  structure theory.
\newblock \emph{npj Quantum Info.}, in press.

\bibitem[Li and Wang(2023)]{LiWang2023}
X.~Li and C.~Wang.
\newblock Simulating {Markovian} open quantum systems using higher-order series
  expansion.
\newblock In \emph{ICALP 2023}, volume 261, pages 87:1--87:20, 2023.

\bibitem[Lieb et~al.(1961)Lieb, Schultz, and Mattis]{LiebSchultzMattis1961}
E.~Lieb, T.~Schultz, and D.~Mattis.
\newblock {Two soluble models of an antiferromagnetic chain}.
\newblock \emph{Ann. Phys.}, 16\penalty0 (3):\penalty0 407--466, 1961.

\bibitem[Lin and Tong(2020)]{LinTong2020a}
L.~Lin and Y.~Tong.
\newblock Near-optimal ground state preparation.
\newblock \emph{Quantum}, 4:\penalty0 372, 2020.

\bibitem[Lin and Tong(2022)]{LinTong2022}
L.~Lin and Y.~Tong.
\newblock Heisenberg-limited ground state energy estimation for early
  fault-tolerant quantum computers.
\newblock \emph{PRX Quantum}, 3:\penalty0 010318, 2022.

\bibitem[Lloyd and Abanin(2025)]{lloyd2025quantumthermal}
J.~Lloyd and D.~A. Abanin.
\newblock Quantum thermal state preparation for near-term quantum processors.
\newblock \emph{arXiv:2506.21318}, 2025.

\bibitem[Lloyd et~al.(2025)Lloyd, Michailidis, Mi, Smelyanskiy, and
  Abanin]{LloydMichailidisMiEtAl2025}
J.~Lloyd, A.~A. Michailidis, X.~Mi, V.~Smelyanskiy, and D.~A. Abanin.
\newblock {Quasiparticle Cooling Algorithms for Quantum Many-Body State
  Preparation}.
\newblock \emph{PRX Quantum}, 6\penalty0 (1):\penalty0 010361, 2025.

\bibitem[Lu et~al.(2022)Lu, Lessa, Kim, and Hsieh]{LuLessaKimEtAl2022}
T.-C. Lu, L.~A. Lessa, I.~H. Kim, and T.~H. Hsieh.
\newblock Measurement as a shortcut to long-range entangled quantum matter.
\newblock \emph{PRX Quantum}, 3\penalty0 (4):\penalty0 040337, 2022.

\bibitem[McArdle et~al.(2019)McArdle, Jones, Endo, Li, Benjamin, and
  Yuan]{McArdleJonesEndoEtAl2019}
S.~McArdle, T.~Jones, S.~Endo, Y.~Li, S.~C. Benjamin, and X.~Yuan.
\newblock Variational ansatz-based quantum simulation of imaginary time
  evolution.
\newblock \emph{npj Quantum Inf.}, 5\penalty0 (1):\penalty0 75, 2019.

\bibitem[Mi et~al.(2024{\natexlab{a}})Mi, Michailidis, Shabani, Miao, Klimov,
  Lloyd, Rosenberg, Acharya, Aleiner, Andersen, et~al.]{mi2024stable}
X.~Mi, A.~Michailidis, S.~Shabani, K.~Miao, P.~Klimov, J.~Lloyd, E.~Rosenberg,
  R.~Acharya, I.~Aleiner, T.~Andersen, et~al.
\newblock Stable quantum-correlated many-body states through engineered
  dissipation.
\newblock \emph{Science}, 383\penalty0 (6689):\penalty0 1332--1337,
  2024{\natexlab{a}}.

\bibitem[Mi et~al.(2024{\natexlab{b}})Mi, Michailidis, Shabani, Miao, Klimov,
  Lloyd, Rosenberg, Acharya, Aleiner, Andersen, Ansmann, Arute, Arya, Asfaw,
  Atalaya, Bardin, Bengtsson, Bortoli, Bourassa, Bovaird, Brill, Broughton,
  Buckley, Buell, Burger, Burkett, Bushnell, Chen, Chiaro, Chik, Chou, Cogan,
  Collins, Conner, Courtney, Crook, Curtin, Dau, Debroy, {Del Toro Barba},
  Demura, {Di Paolo}, Drozdov, Dunsworth, Erickson, Faoro, Farhi, Fatemi,
  Ferreira, Burgos, Forati, Fowler, Foxen, Genois, Giang, Gidney, Gilboa,
  Giustina, Gosula, Gross, Habegger, Hamilton, Hansen, Harrigan, Harrington,
  Heu, Hoffmann, Hong, Huang, Huff, Huggins, Ioffe, Isakov, Iveland, Jeffrey,
  Jiang, Jones, Juhas, Kafri, Kechedzhi, Khattar, Khezri, Kieferov{\'{a}}, Kim,
  Kitaev, Klots, Korotkov, Kostritsa, Kreikebaum, Landhuis, Laptev, Lau, Laws,
  Lee, Lee, Lensky, Lester, Lill, Liu, Locharla, Malone, Martin, McClean,
  McEwen, Mieszala, Montazeri, Morvan, Movassagh, Mruczkiewicz, Neeley, Neill,
  Nersisyan, Newman, Ng, Nguyen, Nguyen, Niu, O'Brien, Opremcak, Petukhov,
  Potter, Pryadko, Quintana, Rocque, Rubin, Saei, Sank, Sankaragomathi,
  Satzinger, Schurkus, Schuster, Shearn, Shorter, Shutty, Shvarts, Skruzny,
  Smith, Somma, Sterling, Strain, Szalay, Torres, Vidal, Villalonga,
  Heidweiller, White, Woo, Xing, Yao, Yeh, Yoo, Young, Zalcman, Zhang, Zhu,
  Zobrist, Neven, Babbush, Bacon, Boixo, Hilton, Lucero, Megrant, Kelly, Chen,
  Roushan, Smelyanskiy, and Abanin]{MiMichailidisShabaniEtAl2024}
X.~Mi, A.~A. Michailidis, S.~Shabani, K.~C. Miao, P.~V. Klimov, J.~Lloyd,
  E.~Rosenberg, R.~Acharya, I.~Aleiner, T.~I. Andersen, M.~Ansmann, F.~Arute,
  K.~Arya, A.~Asfaw, J.~Atalaya, J.~C. Bardin, A.~Bengtsson, G.~Bortoli,
  A.~Bourassa, J.~Bovaird, L.~Brill, M.~Broughton, B.~B. Buckley, D.~A. Buell,
  T.~Burger, B.~Burkett, N.~Bushnell, Z.~Chen, B.~Chiaro, D.~Chik, C.~Chou,
  J.~Cogan, R.~Collins, P.~Conner, W.~Courtney, A.~L. Crook, B.~Curtin, A.~G.
  Dau, D.~M. Debroy, A.~{Del Toro Barba}, S.~Demura, A.~{Di Paolo}, I.~K.
  Drozdov, A.~Dunsworth, C.~Erickson, L.~Faoro, E.~Farhi, R.~Fatemi, V.~S.
  Ferreira, L.~F. Burgos, E.~Forati, A.~G. Fowler, B.~Foxen, {\'{E}}.~Genois,
  W.~Giang, C.~Gidney, D.~Gilboa, M.~Giustina, R.~Gosula, J.~A. Gross,
  S.~Habegger, M.~C. Hamilton, M.~Hansen, M.~P. Harrigan, S.~D. Harrington,
  P.~Heu, M.~R. Hoffmann, S.~Hong, T.~Huang, A.~Huff, W.~J. Huggins, L.~B.
  Ioffe, S.~V. Isakov, J.~Iveland, E.~Jeffrey, Z.~Jiang, C.~Jones, P.~Juhas,
  D.~Kafri, K.~Kechedzhi, T.~Khattar, M.~Khezri, M.~Kieferov{\'{a}}, S.~Kim,
  A.~Kitaev, A.~R. Klots, A.~N. Korotkov, F.~Kostritsa, J.~M. Kreikebaum,
  D.~Landhuis, P.~Laptev, K.-M. Lau, L.~Laws, J.~Lee, K.~W. Lee, Y.~D. Lensky,
  B.~J. Lester, A.~T. Lill, W.~Liu, A.~Locharla, F.~D. Malone, O.~Martin, J.~R.
  McClean, M.~McEwen, A.~Mieszala, S.~Montazeri, A.~Morvan, R.~Movassagh,
  W.~Mruczkiewicz, M.~Neeley, C.~Neill, A.~Nersisyan, M.~Newman, J.~H. Ng,
  A.~Nguyen, M.~Nguyen, M.~Y. Niu, T.~E. O'Brien, A.~Opremcak, A.~Petukhov,
  R.~Potter, L.~P. Pryadko, C.~Quintana, C.~Rocque, N.~C. Rubin, N.~Saei,
  D.~Sank, K.~Sankaragomathi, K.~J. Satzinger, H.~F. Schurkus, C.~Schuster,
  M.~J. Shearn, A.~Shorter, N.~Shutty, V.~Shvarts, J.~Skruzny, W.~C. Smith,
  R.~Somma, G.~Sterling, D.~Strain, M.~Szalay, A.~Torres, G.~Vidal,
  B.~Villalonga, C.~V. Heidweiller, T.~White, B.~W.~K. Woo, C.~Xing, Z.~J. Yao,
  P.~Yeh, J.~Yoo, G.~Young, A.~Zalcman, Y.~Zhang, N.~Zhu, N.~Zobrist, H.~Neven,
  R.~Babbush, D.~Bacon, S.~Boixo, J.~Hilton, E.~Lucero, A.~Megrant, J.~Kelly,
  Y.~Chen, P.~Roushan, V.~Smelyanskiy, and D.~A. Abanin.
\newblock {Stable quantum-correlated many-body states through engineered
  dissipation}.
\newblock \emph{Science}, 383\penalty0 (6689):\penalty0 1332--1337,
  2024{\natexlab{b}}.

\bibitem[Molpeceres et~al.(2025)Molpeceres, Lu, Cirac, and
  Kraus]{molpeceres2025}
D.~Molpeceres, S.~Lu, J.~I. Cirac, and B.~Kraus.
\newblock Quantum algorithms for cooling: a simple case study.
\newblock \emph{arXiv:2503.24330}, 2025.

\bibitem[Motlagh et~al.(2024)Motlagh, Zini, Arrazola, and
  Wiebe]{motlagh2024ground}
D.~Motlagh, M.~S. Zini, J.~M. Arrazola, and N.~Wiebe.
\newblock Ground state preparation via dynamical cooling.
\newblock \emph{arXiv preprint arXiv:2404.05810}, 2024.

\bibitem[Motta et~al.(2020)Motta, Sun, Tan, O'Rourke, Ye, Minnich, Brand{\~a}o,
  and Chan]{MottaSunTanEtAl2020}
M.~Motta, C.~Sun, A.~T. Tan, M.~J. O'Rourke, E.~Ye, A.~J. Minnich, F.~G.
  Brand{\~a}o, and G.~K.-L. Chan.
\newblock Determining eigenstates and thermal states on a quantum computer
  using quantum imaginary time evolution.
\newblock \emph{Nat. Phys.}, 16\penalty0 (2):\penalty0 205--210, 2020.

\bibitem[Mozgunov and Lidar(2020)]{MozgunovLidar2020}
E.~Mozgunov and D.~Lidar.
\newblock {Completely positive master equation for arbitrary driving and small
  level spacing}.
\newblock \emph{Quantum}, 4\penalty0 (1):\penalty0 1--62, 2020.

\bibitem[Nakano et~al.(2021)Nakano, Shirai, and Mori]{PhysRevE.103.L040102}
H.~Nakano, T.~Shirai, and T.~Mori.
\newblock Tensor network approach to thermalization in open quantum many-body
  systems.
\newblock \emph{Phys. Rev. E}, 103:\penalty0 L040102, Apr 2021.
\newblock \doi{10.1103/PhysRevE.103.L040102}.
\newblock URL \url{https://link.aps.org/doi/10.1103/PhysRevE.103.L040102}.

\bibitem[Nielsen and Chuang(2000)]{NielsenChuang2000}
M.~A. Nielsen and I.~Chuang.
\newblock Quantum computation and quantum information, 2000.

\bibitem[O'Brien et~al.(2019)O'Brien, Tarasinski, and
  Terhal]{OBrienTarasinskiTerhal2019}
T.~E. O'Brien, B.~Tarasinski, and B.~M. Terhal.
\newblock {Quantum phase estimation of multiple eigenvalues for small-scale
  (noisy) experiments}.
\newblock \emph{New J. Phys.}, 21\penalty0 (2):\penalty0 023022, 2019.

\bibitem[Pfeuty(1970)]{Pfeuty1970}
P.~Pfeuty.
\newblock {The one-dimensional Ising model with a transverse field}.
\newblock \emph{Ann. Phys.}, 57\penalty0 (1):\penalty0 79--90, 1970.

\bibitem[Polla et~al.(2021)Polla, Herasymenko, and
  O'Brien]{PhysRevA.104.012414}
S.~Polla, Y.~Herasymenko, and T.~E. O'Brien.
\newblock Quantum digital cooling.
\newblock \emph{Phys. Rev. A}, 104:\penalty0 012414, Jul 2021.
\newblock \doi{10.1103/PhysRevA.104.012414}.
\newblock URL \url{https://link.aps.org/doi/10.1103/PhysRevA.104.012414}.

\bibitem[Prosen(2008)]{Prosen2008}
T.~Prosen.
\newblock {Third quantization: a general method to solve master equations for
  quadratic open fermi systems}.
\newblock \emph{New J. Phys.}, 10\penalty0 (4):\penalty0 043026, 2008.

\bibitem[Prosen(2010)]{Prosen2010}
T.~Prosen.
\newblock {Spectral theorem for the Lindblad equation for quadratic open
  fermionic systems}.
\newblock \emph{JSTAT}, 2010\penalty0 (07):\penalty0 P07020, 2010.

\bibitem[Raghunandan et~al.(2020{\natexlab{a}})Raghunandan, Wolf, Ospelkaus,
  Schmidt, and Weimer]{Meghana_2020}
M.~Raghunandan, F.~Wolf, C.~Ospelkaus, P.~O. Schmidt, and H.~Weimer.
\newblock Initialization of quantum simulators by sympathetic cooling.
\newblock \emph{Science Advances}, 6\penalty0 (10):\penalty0 eaaw9268,
  2020{\natexlab{a}}.
\newblock \doi{10.1126/sciadv.aaw9268}.

\bibitem[Raghunandan et~al.(2020{\natexlab{b}})Raghunandan, Wolf, Ospelkaus,
  Schmidt, and Weimer]{doi:10.1126/sciadv.aaw9268}
M.~Raghunandan, F.~Wolf, C.~Ospelkaus, P.~O. Schmidt, and H.~Weimer.
\newblock Initialization of quantum simulators by sympathetic cooling.
\newblock \emph{Science Advances}, 6\penalty0 (10):\penalty0 eaaw9268,
  2020{\natexlab{b}}.

\bibitem[Rall et~al.(2023)Rall, Wang, and Wocjan]{RallWangWocjan2023}
P.~Rall, C.~Wang, and P.~Wocjan.
\newblock Thermal state preparation via rounding promises.
\newblock \emph{Quantum}, 7:\penalty0 1132, 2023.

\bibitem[Raussendorf et~al.(2003)Raussendorf, Browne, and
  Briegel]{raussendorf2003measurement}
R.~Raussendorf, D.~E. Browne, and H.~J. Briegel.
\newblock Measurement-based quantum computation on cluster states.
\newblock \emph{Phys. Rev. A}, 68\penalty0 (2):\penalty0 022312, 2003.

\bibitem[Rouz\'e et~al.(2024)Rouz\'e, Franca, and Alhambra]{rouz2024}
C.~Rouz\'e, D.~S. Franca, and A.~M. Alhambra.
\newblock Efficient thermalization and universal quantum computing with quantum
  {G}ibbs samplers.
\newblock \emph{arXiv preprint arXiv:2403.12691}, 2024.

\bibitem[Rouz{\'e} et~al.(2024)Rouz{\'e}, Fran{\c{c}}a, and
  Alhambra]{rouze2024optimal}
C.~Rouz{\'e}, D.~S. Fran{\c{c}}a, and {\'A}.~M. Alhambra.
\newblock Optimal quantum algorithm for {Gibbs} state preparation.
\newblock \emph{arXiv preprint arXiv:2411.04885}, 2024.

\bibitem[Roy et~al.(2020)Roy, Chalker, Gornyi, and
  Gefen]{RoyChalkerGornyiEtAl2020}
S.~Roy, J.~Chalker, I.~Gornyi, and Y.~Gefen.
\newblock Measurement induced steering of quantum systems.
\newblock \emph{Phys. Rev. Research}, 2\penalty0 (3):\penalty0 033347, 2020.

\bibitem[Sander et~al.(2025)Sander, Fr{\"o}hlich, Eigel, Eisert, Gel{\ss},
  Hinterm{\"u}ller, Milbradt, Wille, and Mendl]{sander2025large}
A.~Sander, M.~Fr{\"o}hlich, M.~Eigel, J.~Eisert, P.~Gel{\ss},
  M.~Hinterm{\"u}ller, R.~M. Milbradt, R.~Wille, and C.~B. Mendl.
\newblock Large-scale stochastic simulation of open quantum systems.
\newblock \emph{arXiv preprint arXiv:2501.17913}, 2025.

\bibitem[Scandi and Álvaro M.~Alhambra(2025)]{scandi2025thermal}
M.~Scandi and Álvaro M.~Alhambra.
\newblock Thermalization in open many-body systems and {KMS} detailed balance.
\newblock \emph{arXiv:2505.20064}, 2025.
\newblock URL \url{https://arxiv.org/abs/2505.20064}.

\bibitem[Selke(1988)]{Selke1988}
W.~Selke.
\newblock The annni model-theoretical analysis and experimental application.
\newblock \emph{Physics Reports}, 170\penalty0 (4):\penalty0 213--264, 1988.

\bibitem[Shtanko and Movassagh(2021)]{shtanko2021preparing}
O.~Shtanko and R.~Movassagh.
\newblock Preparing thermal states on noiseless and noisy programmable quantum
  processors.
\newblock \emph{arXiv:2112.14688}, 2021.

\bibitem[{\v{S}}m{\'\i}d et~al.(2025){\v{S}}m{\'\i}d, Meister, Berta, and
  Bondesan]{vsmid2025polynomial}
{\v{S}}.~{\v{S}}m{\'\i}d, R.~Meister, M.~Berta, and R.~Bondesan.
\newblock Polynomial time quantum gibbs sampling for fermi-hubbard model at any
  temperature.
\newblock \emph{arXiv preprint arXiv:2501.01412}, 2025.

\bibitem[Son et~al.(2011)Son, Amico, Fazio, Hamma, Pascazio, and
  Vedral]{SonAmicoFazioEtAl2011}
W.~Son, L.~Amico, R.~Fazio, A.~Hamma, S.~Pascazio, and V.~Vedral.
\newblock Quantum phase transition between cluster and antiferromagnetic
  states.
\newblock \emph{EPL}, 95\penalty0 (5):\penalty0 50001, 2011.

\bibitem[Stoudenmire and White(2010)]{StoudenmireWhite2010}
E.~Stoudenmire and S.~R. White.
\newblock Minimally entangled typical thermal state algorithms.
\newblock \emph{New J. Phys.}, 12\penalty0 (5):\penalty0 055026, 2010.

\bibitem[Surace and Tagliacozzo(2022)]{SuraceTagliacozzo2022}
J.~Surace and L.~Tagliacozzo.
\newblock {Fermionic Gaussian states: An introduction to numerical approaches}.
\newblock \emph{SciPost Phys. Lect. Notes}, 54:\penalty0 1--65, 2022.

\bibitem[Temme et~al.(2010)Temme, Kastoryano, Ruskai, Wolf, and
  Verstraete]{TemmeKastoryanoRuskaiEtAl2010}
K.~Temme, M.~J. Kastoryano, M.~B. Ruskai, M.~M. Wolf, and F.~Verstraete.
\newblock The $\chi^2$-divergence and mixing times of quantum {Markov}
  processes.
\newblock \emph{J. Math. Phys.}, 51\penalty0 (12), 2010.

\bibitem[Temme et~al.(2011)Temme, Osborne, Vollbrecht, Poulin, and
  Verstraete]{Temme_2011}
K.~Temme, T.~J. Osborne, K.~G. Vollbrecht, D.~Poulin, and F.~Verstraete.
\newblock Quantum {Metropolis} sampling.
\newblock \emph{Nature}, 471\penalty0 (7336):\penalty0 87--90, 2011.

\bibitem[Temme et~al.(2014)Temme, Pastawski, and
  Kastoryano]{TemmePastawskiKastoryano2014}
K.~Temme, F.~Pastawski, and M.~J. Kastoryano.
\newblock Hypercontractivity of quasi-free quantum semigroups.
\newblock \emph{J. Phys. A: Math. Theor.}, 47:\penalty0 405303, 2014.

\bibitem[Terhal and DiVincenzo(2000)]{TerhalDiVincenzo2000}
B.~M. Terhal and D.~P. DiVincenzo.
\newblock Problem of equilibration and the computation of correlation functions
  on a quantum computer.
\newblock \emph{Phys. Rev. A}, 61\penalty0 (2):\penalty0 022301, 2000.

\bibitem[Thouless(1960)]{Thouless1960}
D.~J. Thouless.
\newblock Stability conditions and nuclear rotations in the {Hartree-Fock}
  theory.
\newblock \emph{Nucl. Phys.}, 21:\penalty0 225--232, 1960.

\bibitem[Ticozzi and Viola(2012)]{TicozziViola2012}
F.~Ticozzi and L.~Viola.
\newblock {Stabilizing entangled states with quasilocal quantum dynamical
  semigroups}.
\newblock \emph{Philosophical Transactions of the Royal Society A:
  Mathematical, Physical and Engineering Sciences}, 370\penalty0
  (1979):\penalty0 5259--5269, 2012.

\bibitem[Ticozzi and Viola(2014)]{TicozziViola2014}
F.~Ticozzi and L.~Viola.
\newblock {Steady-state entanglement by engineered quasilocal Markovian
  dissipation}.
\newblock \emph{Quantum Information and Computation}, 14\penalty0
  (3\&4):\penalty0 265--294, 2014.

\bibitem[Tong and Zhan(2025)]{tong2024fast}
Y.~Tong and Y.~Zhan.
\newblock Fast mixing of weakly interacting fermionic systems at any
  temperature.
\newblock \emph{PRX Quantum}, 6\penalty0 (3):\penalty0 030301, 2025.

\bibitem[Trivedi et~al.(2024)Trivedi, {Franco Rubio}, and
  Cirac]{TrivediFrancoRubioCirac2024}
R.~Trivedi, A.~{Franco Rubio}, and J.~I. Cirac.
\newblock {Quantum advantage and stability to errors in analogue quantum
  simulators}.
\newblock \emph{Nature Commun.}, 15\penalty0 (1):\penalty0 6507, 2024.

\bibitem[Venuti et~al.(2017)Venuti, Albash, Marvian, Lidar, and
  Zanardi]{PhysRevA.95.042302}
L.~C. Venuti, T.~Albash, M.~Marvian, D.~Lidar, and P.~Zanardi.
\newblock Relaxation versus adiabatic quantum steady-state preparation.
\newblock \emph{Phys. Rev. A}, 95:\penalty0 042302, Apr 2017.

\bibitem[Verstraete and Cirac(2004)]{VerstraeteCirac2004}
F.~Verstraete and J.~I. Cirac.
\newblock Renormalization algorithms for quantum-many body systems in two and
  higher dimensions.
\newblock \emph{arXiv preprint cond-mat/0407066}, 2004.

\bibitem[Verstraete et~al.(2004)Verstraete, Garc\'{\i}a-Ripoll, and
  Cirac]{PhysRevLett.93.207204}
F.~Verstraete, J.~J. Garc\'{\i}a-Ripoll, and J.~I. Cirac.
\newblock Matrix product density operators: Simulation of finite-temperature
  and dissipative systems.
\newblock \emph{Phys. Rev. Lett.}, 93:\penalty0 207204, 2004.

\bibitem[Verstraete et~al.(2009)Verstraete, Wolf, and
  Cirac]{VerstraeteWolfIgnacioCirac2009}
F.~Verstraete, M.~M. Wolf, and I.~Cirac.
\newblock {Quantum computation and quantum-state engineering driven by
  dissipation}.
\newblock \emph{Nat. Phys.}, 5\penalty0 (9):\penalty0 633--636, 2009.

\bibitem[Vidal(2004)]{PhysRevLett.93.040502}
G.~Vidal.
\newblock Efficient simulation of one-dimensional quantum many-body systems.
\newblock \emph{Phys. Rev. Lett.}, 93:\penalty0 040502, Jul 2004.

\bibitem[Vidal et~al.(2021)Vidal, Bera, Riera, Lewenstein, and
  Bera]{PhysRevA.104.032411}
N.~T. Vidal, M.~L. Bera, A.~Riera, M.~Lewenstein, and M.~N. Bera.
\newblock Quantum operations in an information theory for fermions.
\newblock \emph{Phys. Rev. A}, 104:\penalty0 032411, Sep 2021.

\bibitem[Villani(2007)]{Villani2007}
C.~Villani.
\newblock Hypocoercive diffusion operators.
\newblock \emph{BUMI}, 10-B\penalty0 (2):\penalty0 257--275, 6 2007.

\bibitem[Wan et~al.(2022)Wan, Berta, and Campbell]{WanBertaCampbell2022}
K.~Wan, M.~Berta, and E.~T. Campbell.
\newblock Randomized quantum algorithm for statistical phase estimation.
\newblock \emph{Phys. Rev. Lett.}, 129\penalty0 (3):\penalty0 030503, 2022.

\bibitem[Wang et~al.(2023)Wang, Snizhko, Romito, Gefen, and
  Murch]{WangSnizhkoRomitoEtAl2023}
Y.~Wang, K.~Snizhko, A.~Romito, Y.~Gefen, and K.~Murch.
\newblock Dissipative preparation and stabilization of many-body quantum states
  in a superconducting qutrit array.
\newblock \emph{Phys. Rev. A}, 108:\penalty0 013712, 2023.

\bibitem[Weimer et~al.(2021)Weimer, Kshetrimayum, and
  Or{\'u}s]{weimer2021simulation}
H.~Weimer, A.~Kshetrimayum, and R.~Or{\'u}s.
\newblock Simulation methods for open quantum many-body systems.
\newblock \emph{Reviews of Modern Physics}, 93\penalty0 (1):\penalty0 015008,
  2021.

\bibitem[Zheng et~al.(2023)Zheng, Wang, and Chen]{ZhengWangChen2023}
Z.-Y. Zheng, X.~Wang, and S.~Chen.
\newblock Exact solution of the boundary-dissipated transverse field {Ising
  model}: Structure of the {Liouvillian} spectrum and dynamical duality.
\newblock \emph{Phys. Rev. B}, 108\penalty0 (2):\penalty0 024404, 2023.

\bibitem[Zhou et~al.(2021)Zhou, Choi, and Lukin]{zhou2021symmetry}
L.~Zhou, S.~Choi, and M.~D. Lukin.
\newblock Symmetry-protected dissipative preparation of matrix product states.
\newblock \emph{Phys. Rev. A}, 104\penalty0 (3):\penalty0 032418, 2021.

\end{thebibliography}

\clearpage
\widetext

\appendix

\section{Notation}\label{sec:notation}
For a matrix $A\in\CC^{N\times N}$, let $A^*, A^T, A^{\dag}$ be the complex conjugation, transpose, and Hermitian transpose (or adjoint) of $A$, respectively.
Unless specified otherwise, $\norm{A}\equiv \norm{A}_\infty$ denotes the operator norm, and $\norm{A}_{1} = \mathrm{Tr}(\sqrt{A^\dagger A})$ denotes the $1$-norm or the trace norm. The trace distance between two states $\rho,\sigma$ is $D(\rho,\sigma):=\frac12 \norm{\rho-\sigma}_1$. We write $A \succeq 0$ (resp., $A \succ 0$) for a positive semidefinite (resp., definite) matrix,  $A \succeq B$ if $A-B\succeq 0$, and $A\preceq B$ if $B\succeq A$.

We adopt the following asymptotic notations \revvv{beside} the usual big $\Or$ one. We write $f=\Omega(g)$ if $g=\Or(f)$; $f=\Theta(g)$ if $f=\Or(g)$ and $g=\Or(f)$. The notations $\wt{\Or}$, $\wt{\Omega}$, $\wt{\Theta}$ are used to suppress subdominant polylogarithmic factors. Specifically, $f = \wt{\Or}(g)$ if $f = \Or(g\operatorname{polylog}(g))$; $f = \wt{\Omega}(g)$ if $f = \Omega(g\operatorname{polylog}(g))$; $f = \wt{\Theta}(g)$ if $f = \Theta(g\operatorname{polylog}(g))$. Note that these tilde notations do not remove or suppress dominant polylogarithmic factors. For instance, if $f=\Or(\log g \log\log g)$, then we write $f=\wt{\Or}(\log g)$ instead of $f=\wt{\Or}(1)$.

In this paper, we consider spin systems on a $D$-dimensional lattice $\Lambda= [0,L]^D$ for some integer $L>0$. The total number of lattice sites is $N=(L+1)^D$. We measure the distance between $i,j\in \Lambda$ using the Manhattan distance (with or without the periodic boundary condition). For $j\in \Lambda$, let $\mathcal{B}_{j}(r)$ be the set of indices in $\Lambda$ with a Manhattan distance at most $r$ to the site $j$. If an operator $O\in \mathbb{C}^{2^N\times 2^N}$ can be decomposed as $O=\sum_{j\in \Lambda} O_j$, where each $O_j$ is supported on $\mathcal{B}_{j}(r)$, then $O$ is called an $r$-geometrically local Hamiltonian. If each site $i\in \Lambda$ also appears in at most $l$ non-trivial $O_j$ terms, then $O$ is called an $(r,l)$-geometrically local Hamiltonian. Given $C,\mu>0$, if $O$ can be decomposed as $O=\sum_{r\ge 1} O_r$, where each $O_r=\sum_{j\in \Lambda} O_{r,j}$ is $r$-geometrically local and  satisfies
  \begin{equation}
    \max_{j\in \Lambda}\|O_{r,j}\|\leq C\exp(-\mu r)\,,
  \end{equation}
  then $O$ is called a $(C,\mu)$-quasilocal operator.

  The definition above can be directly generalized to fermionic operators on a lattice $\Lambda$. We refer readers to e.g., ~\cite{Hastings_2019} and~\cite[Definition 7]{tong2024fast}.

\section{\rev{Comparison of mixing time metrics}}\label{sec:compare_mixing_time}

 The following inequality, originally due to Fuchs and van de Graaf ~\cite{FuchsVanDeGraaf2002}, plays a central role in our analysis of converging to ground state.

\begin{thm}[Fuchs--van de Graaf {\cite{FuchsVanDeGraaf2002}\cite[Section 9.2]{NielsenChuang2000}}]
For two density matrices $\rho,\sigma$, let $F(\rho, \sigma)= \operatorname{Tr}\left[\sqrt{\rho^{\frac{1}{2}} \sigma \rho^{\frac{1}{2}}}\right]$ be the fidelity and $D(\rho,\sigma)$ be the trace distance. Then
\begin{equation}\label{eqn:Fuchs}
1-F(\rho, \sigma) \leq D(\rho, \sigma) \leq \sqrt{1-F(\rho, \sigma)^2}.
\end{equation}
\label{thm:fuchs}
\end{thm}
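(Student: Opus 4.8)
The plan is to reduce both inequalities to the pure-state case, where fidelity and trace distance can be computed in closed form, and then transport the bounds back to mixed states using monotonicity under the partial trace together with Uhlmann's theorem.

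First I would record the pure-state computation. Given pure states $\ket{\psi},\ket{\phi}$, after fixing a global phase and restricting to the (at most two-dimensional) subspace they span, write $\ket{\psi}=\ket{0}$ and $\ket{\phi}=\cos\theta\,\ket{0}+\sin\theta\,\ket{1}$ with $\theta\in[0,\pi/2]$, so that $F(\ketbra{\psi}{\psi},\ketbra{\phi}{\phi})=\abs{\braket{\psi|\phi}}=\cos\theta$. The traceless $2\times 2$ matrix $\ketbra{\psi}{\psi}-\ketbra{\phi}{\phi}$ has determinant $-\sin^2\theta$, hence eigenvalues $\pm\sin\theta$, giving $D(\ketbra{\psi}{\psi},\ketbra{\phi}{\phi})=\sin\theta=\sqrt{1-\abs{\braket{\psi|\phi}}^2}$. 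Thus for pure states the two bounds in \eqref{eqn:Fuchs} collapse to the identity $D=\sqrt{1-F^2}$, and since $\sqrt{1-F^2}\ge 1-F$ on $[0,1]$ this also settles the lower bound in the pure-state case.

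For the mixed-state upper bound I would invoke Uhlmann's theorem, $F(\rho,\sigma)=\max\abs{\braket{\psi|\phi}}$ where the maximum runs over all purifications $\ket{\psi}$ of $\rho$ and $\ket{\phi}$ of $\sigma$ on a common enlarged Hilbert space. Choosing optimal purifications and using monotonicity of the trace distance under the partial trace then gives $D(\rho,\sigma)\le D(\ketbra{\psi}{\psi},\ketbra{\phi}{\phi})=\sqrt{1-\abs{\braket{\psi|\phi}}^2}=\sqrt{1-F(\rho,\sigma)^2}$. For the lower bound I would use the measurement characterizations $F(\rho,\sigma)=\min_{\{E_m\}}\sum_m\sqrt{p_m q_m}$ and $D(\rho,\sigma)=\max_{\{E_m\}}\tfrac12\sum_m\abs{p_m-q_m}$, where $p_m=\Tr[E_m\rho]$, $q_m=\Tr[E_m\sigma]$ and the extremum is over POVMs. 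Fixing a POVM that is optimal for the fidelity and applying the elementary classical estimate $1-\sum_m\sqrt{p_m q_m}=\tfrac12\sum_m(\sqrt{p_m}-\sqrt{q_m})^2\le\tfrac12\sum_m\abs{p_m-q_m}$, which follows from $\abs{\sqrt{p_m}+\sqrt{q_m}}\ge\abs{\sqrt{p_m}-\sqrt{q_m}}$, yields $1-F(\rho,\sigma)\le\tfrac12\sum_m\abs{p_m-q_m}\le D(\rho,\sigma)$.

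The main obstacle is not the algebra, which is routine, but the two structural inputs: Uhlmann's theorem for the upper bound, and the variational characterizations of fidelity and trace distance for the lower bound (in particular, that the fidelity equals the minimal Bhattacharyya overlap of the induced classical statistics). These are standard but require genuine argument — Uhlmann's theorem via the polar decomposition of purification maps, and the fidelity-as-minimum identity via the same pure-state reduction applied to measurement outcomes. In the write-up I would simply cite these from \cite{NielsenChuang2000}, consistent with the attribution already given in the statement.
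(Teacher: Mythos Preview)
Your outline is correct and is essentially the standard textbook argument (the one in \cite[Section~9.2]{NielsenChuang2000}). Note, however, that the paper does not actually prove this theorem: it is stated with attribution to \cite{FuchsVanDeGraaf2002} and \cite{NielsenChuang2000} and then used as a black box in the subsequent analysis. So there is no ``paper's own proof'' to compare against---your write-up simply supplies what the paper leaves to the references, and your plan to cite the structural inputs (Uhlmann's theorem and the variational characterizations of $F$ and $D$) from \cite{NielsenChuang2000} is exactly in keeping with how the result is treated here.
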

Note that the fidelity of quantum states is sometimes defined as $F(\rho,\sigma)^2$. When $\sigma = \ketbra{\psi_0}{\psi_0}$ is a pure state, Theorem~\ref{thm:fuchs} can be used to establish a relation between the trace distance, the energy error, and the infidelity.

\begin{prop}\label{prop:gap_controls_trace_distance}
Let $\{(\lambda_i,\ket{\psi_i})\}$ be the eigenpairs of the Hamiltonian $H$, ordered such that $\lambda_0<\lambda_1=\lambda_0+\Delta\le \lambda_2\le\cdots$, where $\Delta>0$ is the spectral gap and $\sigma=\ketbra{\psi_0}{\psi_0}$ is the unique ground state. For any density matrix $\rho$, we have
\begin{equation}\label{eq:trace_energy}
\frac{\operatorname{Tr}[H\rho]-\lambda_0}{4\norm{H}}\le
\frac12(1-F(\rho, \sigma)^2) \le D(\rho,\sigma)\ \le \sqrt{1-F^2(\rho,\sigma)}\ \le\ \sqrt{\frac{\operatorname{Tr}[H\rho]-\lambda_0}{\Delta}}.
\end{equation}
\end{prop}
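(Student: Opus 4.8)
The plan is to reduce the entire chain to two ingredients. First, since $\sigma=\ketbra{\psi_0}{\psi_0}$ is pure, the squared fidelity collapses to $F(\rho,\sigma)^2=\braket{\psi_0|\rho|\psi_0}$, so that $1-F(\rho,\sigma)^2=\operatorname{Tr}\!\big[\rho\,(I-\ketbra{\psi_0}{\psi_0})\big]=\sum_{i\ge 1}\braket{\psi_i|\rho|\psi_i}\ge 0$. Second, the Fuchs--van de Graaf inequalities (Theorem~\ref{thm:fuchs}) already supply $1-F\le D\le\sqrt{1-F^2}$. With these in hand the five-term chain splits into four inequalities, which I would treat one at a time.

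For the leftmost inequality I would expand $H=\sum_i\lambda_i\ketbra{\psi_i}{\psi_i}$ to write $\operatorname{Tr}[H\rho]-\lambda_0=\sum_{i\ge 1}(\lambda_i-\lambda_0)\braket{\psi_i|\rho|\psi_i}$. Using $\lambda_i-\lambda_0\le 2\norm{H}$ — valid because $\norm{H}=\max_i|\lambda_i|$, so both $\lambda_i$ and $-\lambda_0$ are bounded by $\norm{H}$ — gives $\operatorname{Tr}[H\rho]-\lambda_0\le 2\norm{H}\,(1-F(\rho,\sigma)^2)$, which rearranges to exactly $\frac{\operatorname{Tr}[H\rho]-\lambda_0}{4\norm{H}}\le\frac12(1-F(\rho,\sigma)^2)$.

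The second inequality, $\frac12(1-F^2)\le D$, follows by combining $1-F\le D$ with the elementary estimate $\frac12(1-F^2)=\frac12(1-F)(1+F)\le 1-F$ (since $0\le F\le 1$). The third, $D\le\sqrt{1-F^2}$, is the right-hand half of Theorem~\ref{thm:fuchs} verbatim. For the rightmost inequality I would square both sides, so that it suffices to prove $\Delta\,(1-F(\rho,\sigma)^2)\le\operatorname{Tr}[H\rho]-\lambda_0$; this follows once more from the eigenbasis expansion, together with $\lambda_i-\lambda_0\ge\lambda_1-\lambda_0=\Delta$ for every $i\ge 1$, which is where the spectral gap assumption enters.

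I do not expect any genuine obstacle here: the argument is entirely elementary once the pure-state simplification of the fidelity is made, and the rest is spectral bookkeeping plus the cited Fuchs--van de Graaf bounds. The only point that deserves a moment of care is the leftmost inequality, where the gaps $\lambda_i-\lambda_0$ must be bounded by $2\norm{H}$ rather than $\norm{H}$ to accommodate the case $\lambda_0<0$; this is the source of the factor $4$ in the denominator.
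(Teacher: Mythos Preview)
Your proposal is correct and follows essentially the same approach as the paper's proof: both use the pure-state reduction $1-F^2=\operatorname{Tr}[\rho(I-\sigma)]$, the spectral bounds $\Delta\le\lambda_i-\lambda_0\le 2\norm{H}$ for $i\ge1$, the factorization $1-F^2\le 2(1-F)$, and the Fuchs--van de Graaf inequalities. The only cosmetic difference is that the paper phrases the rightmost bound via the operator inequality $H-\lambda_0 I\succeq\Delta(I-\sigma)$ before taking the trace, whereas you work directly in the eigenbasis expansion; the content is identical.
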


\begin{proof}
Using the spectral decomposition of $H$ we have
\begin{equation}
H-\lambda_0 I\ =\ \sum_{i\ge 1} (\lambda_i-\lambda_0)\ket{\psi_i}\bra{\psi_i}\ \ge\ \Delta\sum_{i\ge 1} \ket{\psi_i}\bra{\psi_i}\ =\ \Delta\,(I-\sigma).
\end{equation}
When $\sigma=\lvert\psi_{0}\rangle\langle\psi_{0}\rvert$ is a pure state, the fidelity simplifies to $F(\rho,\sigma)=\sqrt{\Tr[\braket{\psi_0|\rho|\psi_0}]}$.
Taking expectation in $\rho$ yields
\begin{equation}
\operatorname{Tr}[(H-\lambda_0 I)\rho]\ \ge\ \Delta\,\operatorname{Tr}[(I-\sigma)\rho]\ =\ \Delta\,(1-F^{2}(\rho,\sigma)).
\end{equation}
Applying the Fuchs-van de Graaf inequality yields the upper bound for $D(\rho,\sigma)$.

Note that
\begin{equation}
1-F^2 = (1-F)(1+F)\le 2(1-F),
\end{equation}
and
\begin{equation}
\operatorname{Tr}[(H-\lambda_0 I)\rho]\le 2\norm{H} \Tr\left[ \rho (I-\sigma)\right] =2\norm{H}(1-F(\rho,\sigma)^2).
\end{equation}
Applying the Fuchs-van de Graaf inequality again yields the lower bound for $D(\rho,\sigma)$.

\end{proof}

We can also define the fidelity-based and energy-based mixing times, as given in \cref{eqn:tau_E,eqn:tau_F}, to be independent of the initial state:
\begin{equation}\label{eqn:mixing_tracedistance}
\tau_{\operatorname{mix}}^E(\eta) = \max_{\rho_0} \tau_{\operatorname{mix}}^E(\eta;\rho_0),\quad \tau_{\operatorname{mix}}^F(\eta) = \max_{\rho_0} \tau_{\operatorname{mix}}^F(\eta;\rho_0)
\end{equation}
From Proposition \ref{prop:gap_controls_trace_distance}, we immediately obtain
\begin{equation}\label{eqn:mixingtime_relations}
\tau^E_{\operatorname{mix}}(4\norm{H}\eta) \le \tau^F_{\operatorname{mix}}(2\eta ) \le \tau_{\operatorname{mix}}(\eta) \le \tau^F_{\operatorname{mix}}(\eta^2) \le \tau^E_{\operatorname{mix}}(  \Delta\eta^2).
\end{equation}
Thus, when maximizing over all initial states $\rho_0$, the mixing times defined via fidelity or energy provide both upper and lower bounds for the mixing time defined via trace distance.

\section{Quasi-free systems}\label{sec:prelim_majorana}

\paragraph*{Jordan--Wigner transformation---}

\revvv{We introduce the Jordan-Wigner transformation for fermionic annihilation and creation operators following the convention in \cite{Pfeuty1970},}
\begin{equation}
c_j = \left( \prod_{k=1}^{j-1} Z_k \right) X_j^-=X_j^- \left( \prod_{k=1}^{j-1} Z_k \right) , \quad  c_j^\dagger = X_j^+ \left( \prod_{k=1}^{j-1} Z_k \right)= \left( \prod_{k=1}^{j-1} Z_k \right) X_j^+,
\end{equation}
with 
\begin{equation}
X_j^- = \frac{1}{2}(X_j - iY_j), \quad
X_j^+ = \frac{1}{2}(X_j + iY_j), \quad \revvv{Z_j = 2 c_j^\dagger c_j - I.}
\end{equation}

After the Jordan-Wigner transformation,
\begin{equation}
X_j X_{j+1}=X_j Z_j  \left( \prod_{k=1}^{j-1} Z_k \right) (c_{j+1}^{\dag}+c_{j+1})=-iY_{j} \left( \prod_{k=1}^{j-1} Z_k \right) (c_{j+1}^{\dag}+c_{j+1})=(c_j-c^{\dag}_j)(c_{j+1}^{\dag}+c_{j+1}).
\end{equation}
Then the 1D TFIM Hamiltonian in \eqref{eqn:H_TFIM}
$$
    H=-g \sum_{i=1}^N Z_i-J \sum_{i=1}^{N-1} X_i X_{i+1}
$$
can be expressed as
\begin{equation}\label{eqn:H_TFIM_fermion}
H=-J\sum_{j=1}^{N-1} (c_j-c^{\dag}_j)(c_{j+1}^{\dag}+c_{j+1}) -2g\sum_{j=1}^N c_j^{\dag} c_j + gN.
\end{equation}
We now perform a unitary rotation
\begin{equation}\label{eqn:fermion_majorana}
\binom{c_j}{c_j^{\dagger}}=\frac{1}{\sqrt{2}}\left(\begin{array}{cc}
1 & -i \\
1 & i
\end{array}\right)\binom{w_{j}}{w_{j+N}}, \quad j=1,\ldots,N.
\end{equation}
This defines a set of $2N$ Majorana operators, $\left\{w_p\right\}_{p=1}^{2 N}$, which are Hermitian operators satisfying the anticommutation relation
\begin{equation}
\left\{w_p, w_q\right\}:=w_p w_q+w_qw_p=\delta_{p q}, \quad p,q=1,\ldots, 2N.
\end{equation}
This gives rise to the Majorana form of the Hamiltonian in \cref{eqn:H_TFIM_majorana}:
$$
H =2i J \sum_{j=1}^{N-1} w_{j+N}w_{j+1} + 2i g \sum_{j=1}^N  w_{j}w_{j+N}.
$$

\paragraph*{Quadratic Majorana systems---}

The general form of a quadratic Majorana Hamiltonian with $2N$ modes is:
\begin{equation}
H=2 \sum_{1 \leq p<q \leq 2 N} h_{p q} w_p w_q= \sum_{p, q=1}^{2 N} h_{p q} w_p w_q,
\end{equation}
The coefficient matrix $h$ is Hermitian and purely imaginary. In other words,   we may write $h=-i \mathsf{A}$, where $\mathsf{A}$ is a real antisymmetric matrix. The eigenvalues of the coefficient matrix $h$ are thus real and symmetric with respect to $0$. Let $\{\lambda_k\}_{k=1}^N$ be the \emph{non-negative} eigenvalues of $h$. If $\Delta=\min_{k} \lambda_k>0$, then $H$ is called a gapped Hamiltonian and $\Delta$ is referred to as the spectral gap. Then after a unitary transformation, we may write
\begin{equation}\label{eqn:canonical_quadratic}
H=2\sum_{k=1}^{N}\lambda_k b_k^{\dag} b_k+\text{constant}.
\end{equation}
Here $\{b_k,b_k^{\dag}\}$ is a set of fermionic annihilation and creation operators satisfying the canonical anticommutation relation (CAR), and is linear in the fermionic operators $\{c_j,c_j^{\dag}\}$ in \cref{eqn:fermion_majorana}. The ground state is the quasivacuum state satisfying (see e.g. \cite[Chapter 3.3]{BlaizotRipka1985})
\begin{equation}
b_k\ket{\mathrm{vac}}=0, \quad k=1,\ldots,N.
\end{equation}

\paragraph*{Quasi-free dynamics---}

We consider a general noninteracting quadratic Hamiltonian $H=\sum_{i,j=1}^{2N} h_{ij} w_i w_j$. According to the Thouless theorem \cite{Thouless1960}, \begin{equation}
e^{\mathrm{i}H s} w_a e^{-\mathrm{i}H s}=\sum_p w_p \left(e^{-2\mathrm{i}h \revvv{s}}\right)_{ap}.
\end{equation}
As a result, the jump operator associated with a coupling  operator $w_a$ is a linear combination of Majorana operators
\begin{equation}
\begin{aligned}
K_a  &= \int_{\RR} f(s) e^{\mathrm{i}Hs}w_{a} e^{-\mathrm{i}Hs}\mathrm{d} s\\
&= \sum_p \int_{\RR} f(s)\left(e^{-\mathrm{i} 2 h s}\right)_{ap} w_p  \mathrm{d} s \\
&=\sum_p [\hat{f}(-2h)]_{ap} w_p.
\end{aligned}
\end{equation}
Then if the set of coupling operators is $\{A_a=w_a\}_{a\in\mc{I}}$ where $\mc{I}$ is some index set, a closed-form equation for the covariance matrix $\Gamma_{p q}:=\frac{i}{2}\left\langle w_p w_q - w_q w_p\right\rangle$ can be derived as \cite[Proposition 1]{BarthelZhang2022}:
\begin{equation}
\label{covmat}
\partial_t \Gamma = X\Gamma +\Gamma X^T + Y, \quad X=-2ih-B_\text{real}, \quad Y=B_\text{imag},
\end{equation}
Here the coefficient matrix
\begin{equation}
\label{matB}
 B_{pq}=\sum_{a\in \mc{I}} \left[\hat{f}(-2h)\right]_{a p}\left[\hat{f}(-2h)\right]_{aq}^*.
\end{equation}
Here, $B$ is the sum over all the coefficients of the jump operators, with $B_\text{real}$, $B_\text{imag}$ denoting the (entry-wise) real and imaginary parts of $B$, respectively. Since the filter function $\hat{f}(\omega)$ is supported only on the negative real axis, the Lindbladian dynamics filters out all positive eigenmodes of $h$ while simultaneously populating the negative modes, which contributes to the ground state of the quadratic Hamiltonian $H$.

\section{\rev{Additional results for quasi-free systems}}

\vspace{1em}

\paragraph*{Quasilocality of jump operators---}
We plot the heat map of the coupling operator $X_1$ in the computational basis, and the corresponding jump operator $K_{X_1}$ in the energy basis in Fig.~\ref{fig:TFIMheat}. We find that although  $X_1$ is very sparse in the computational basis, $K_{X_1}$ has \revvv{significantly more} nonzero elements in the energy basis enabling transitions from high energy components to low energy ones. The filter function forbids transitions from low to high energy components. Therefore the jump operator is always an upper triangular matrix in the energy eigenbasis. Furthermore, the magnitude of the coefficients $|\zeta_{j}|^2+\abs{\zeta_{j+N}}^2$ decays exponentially as $j$ increases ($1\le j<N$), which implies that $K_a$ is quasilocal in Majorana operators (see~\cref{fig:TFIM_quasi_locality}).

\begin{figure}[h!]
    \centering
    \includegraphics[width=0.5\linewidth]{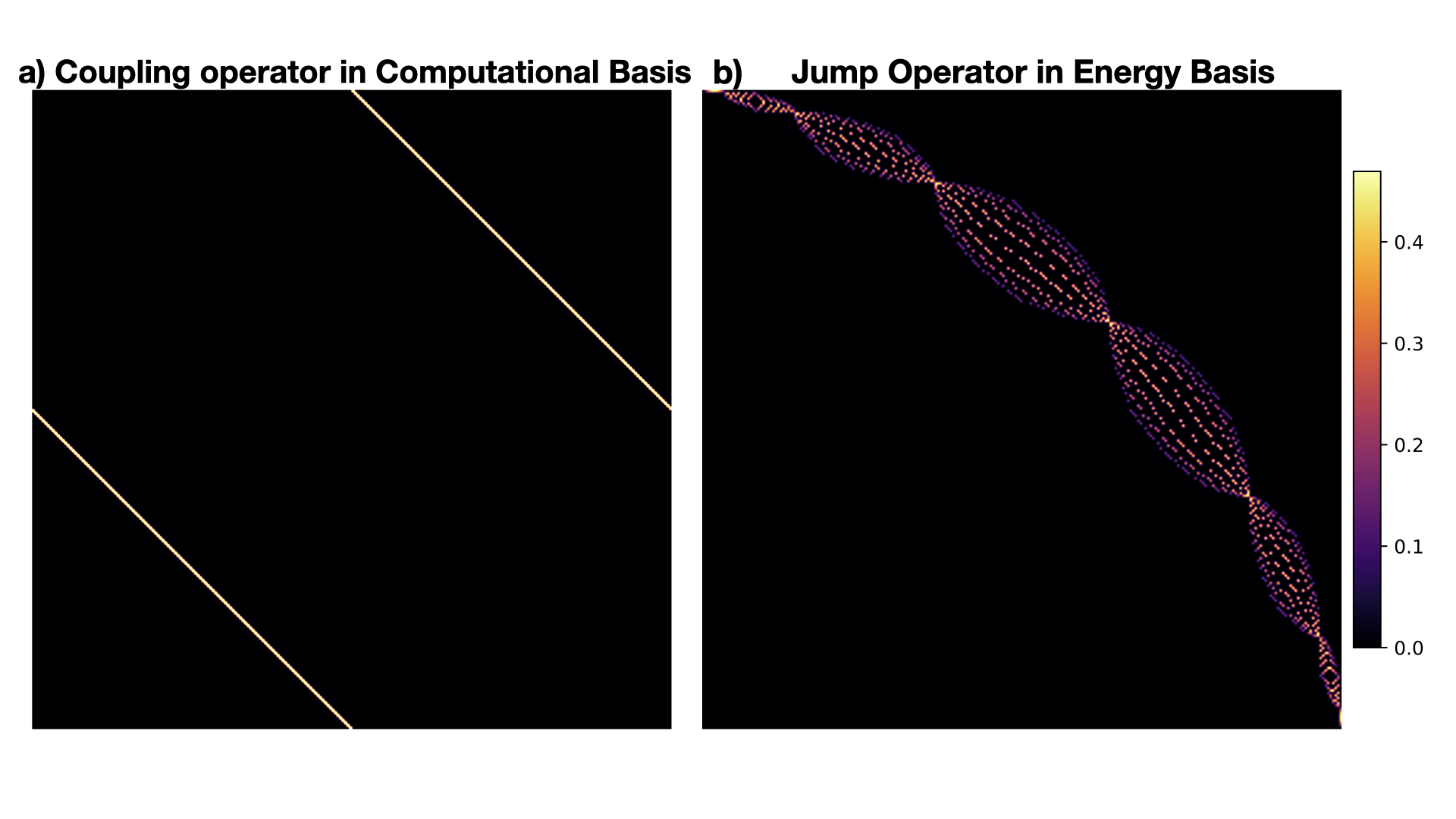}
    \caption{(a). Matrix elements of a local Pauli operator $X_1$ in the computational basis. (b). Jump operator $K_{X_1}$ associated with coupling operator $X_1$ in the energy basis of  the TFIM Hamiltonian with $N=8$ sites. The lower triangular part vanishes due to the filter function.}
    \label{fig:TFIMheat}
\end{figure}
\begin{figure}[h!]
    \centering
    \includegraphics[width=0.4\linewidth]{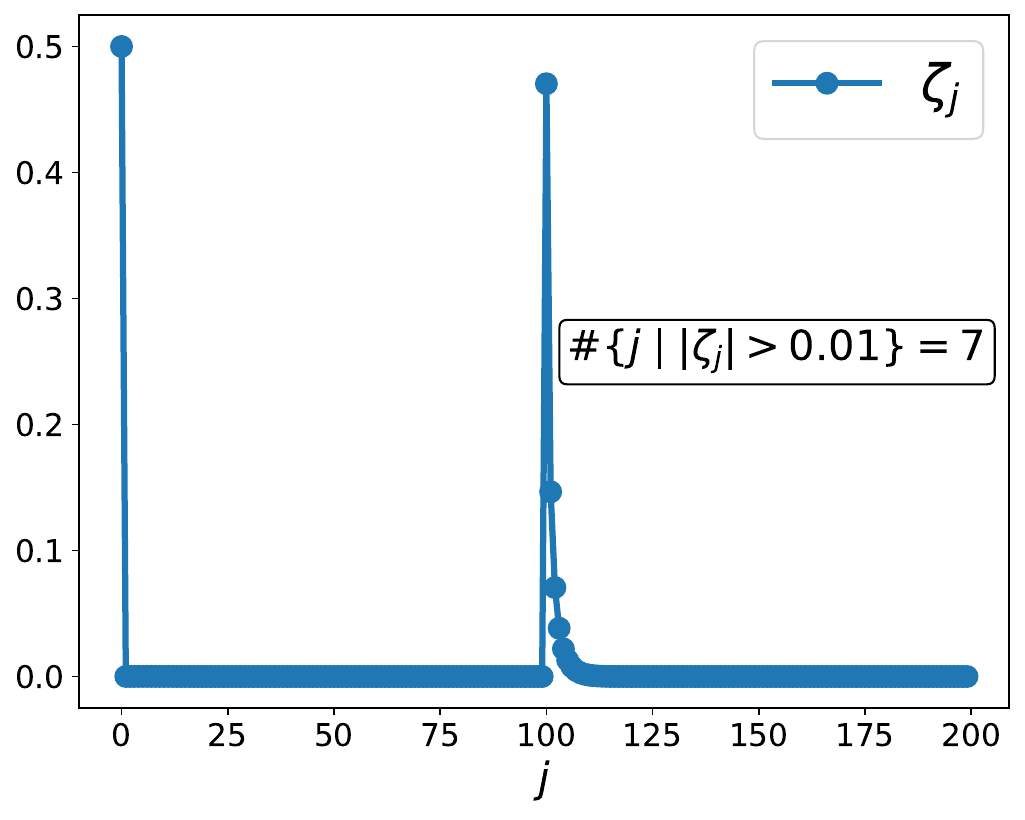}
    \caption{Coefficients of the jump operator under Majorana basis. $|\zeta_j|$ is calculated from~\eqref{eqn:K_a} with $A_a=X_1$. System size is $N=100$.}
    \label{fig:TFIM_quasi_locality}
\end{figure}

\paragraph*{Importance of the coherent term---}

In the case of boundary dissipation with coupling operators $X_1, X_N, Y_1, Y_N$, the coherent term $-i[H,\cdot]$ plays a critical role for the system to converge to the ground state, as illustrated in \cref{fig:coherent}. Physically, since the jump operator is localized near the boundary, dissipation primarily occurs there. The coherent term induces an energy flux from the bulk to the boundary, which effectively reduces the energy.

Mathematically, without the coherent term, the Lindblad dynamics lacks a unique fixed point. The role of the coherent term is to lift this large degeneracy, and place eigenvalues on the imaginary axis. The dissipative term then slightly perturbs these eigenvalues away from the imaginary axis, creating a spectral gap that leads to convergence. This will be rigorously justified in \cref{sec:mixingtime_quasifree}.

\begin{figure}[htbp]
  \centering
  \includegraphics[width=0.35\textwidth]{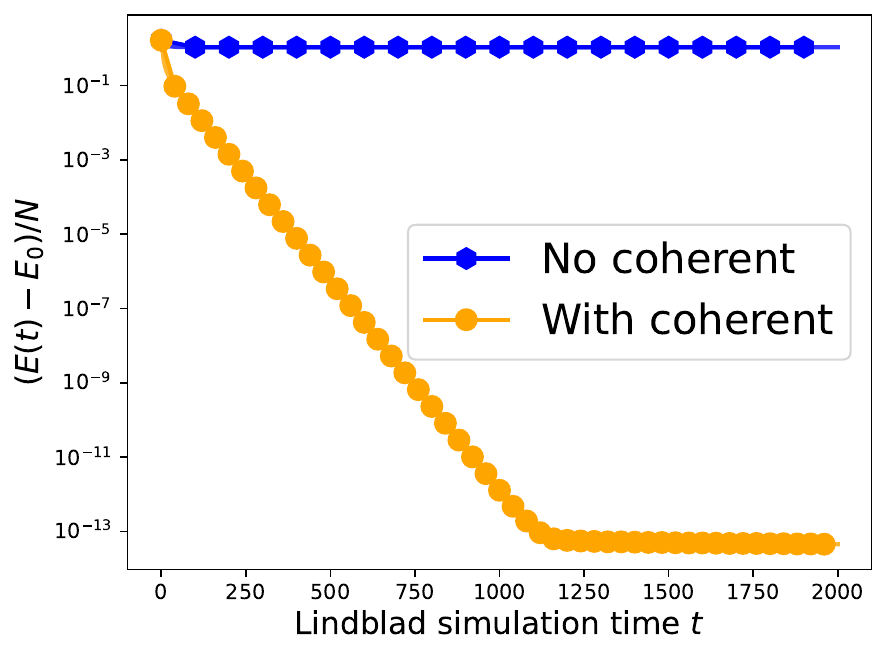}
  \caption{Convergence of energy in TFIM  using $\{X_1, X_N, Y_1, Y_N\}$ as coupling operators, with and without the coherent term $-i[H,\cdot]$ in the Lindbladian.}
  \label{fig:coherent}
\end{figure}

\vspace{1em}
\paragraph*{\rev{Convergence starting from different initial states}---}

We choose an \revvv{all-ones} initial state $\ket{1^N}$ as the initial state for boundary-dissipated TFIM. Other parameter settings are the same in \cref{sec:Results: Fig: tfimdynamics}. The results are presented in \cref{Fig: tfimdynamics_diff_ini}. We also observe that $\Delta_{\mc{L}}=\Theta(N^{-3})$, which matches the scaling of the energy-based mixing time $\tau^E_{\operatorname{mix}}=\Theta(N^3)$ for fixed $\eta$.

\section{Additional numerical results on tensor network simulation}\label{sec:additional_numer_tensor}

\begin{figure}[ht]
  \centering
    \begin{subfigure}[b]{0.35\textwidth}
        \includegraphics[width=\textwidth]{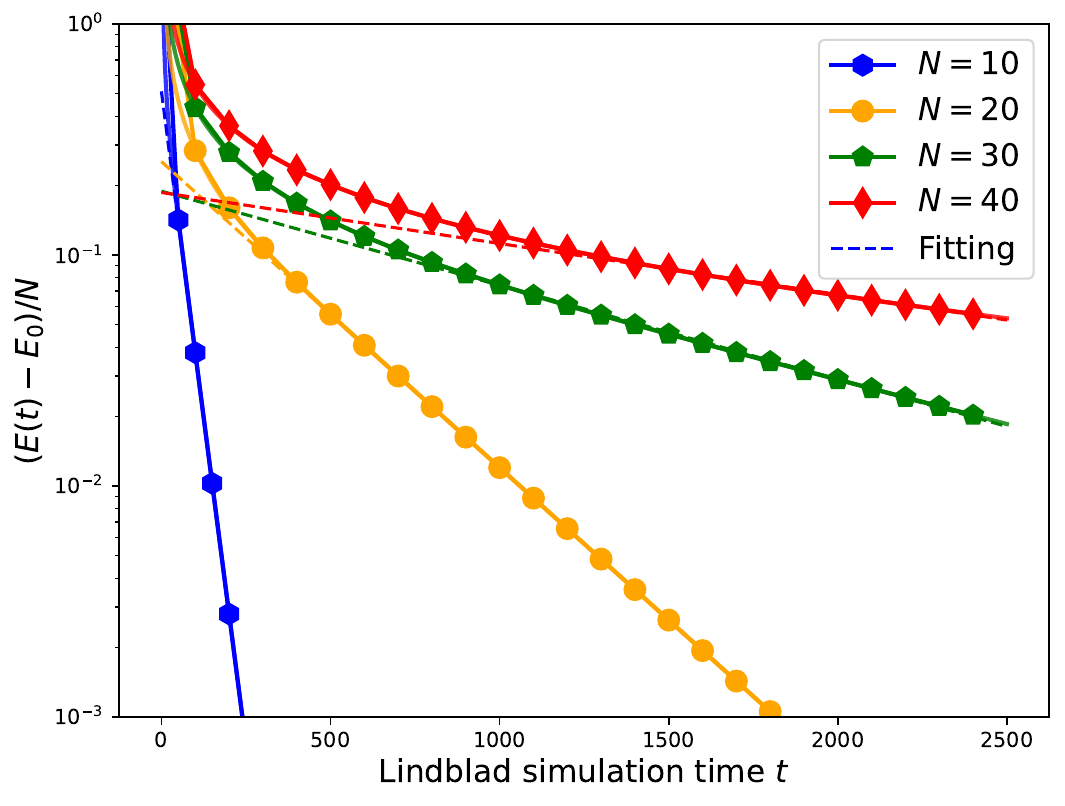}
        \caption{ \centering }
    \end{subfigure}
    \begin{subfigure}[b]{0.35\textwidth}
        \includegraphics[width=\textwidth]{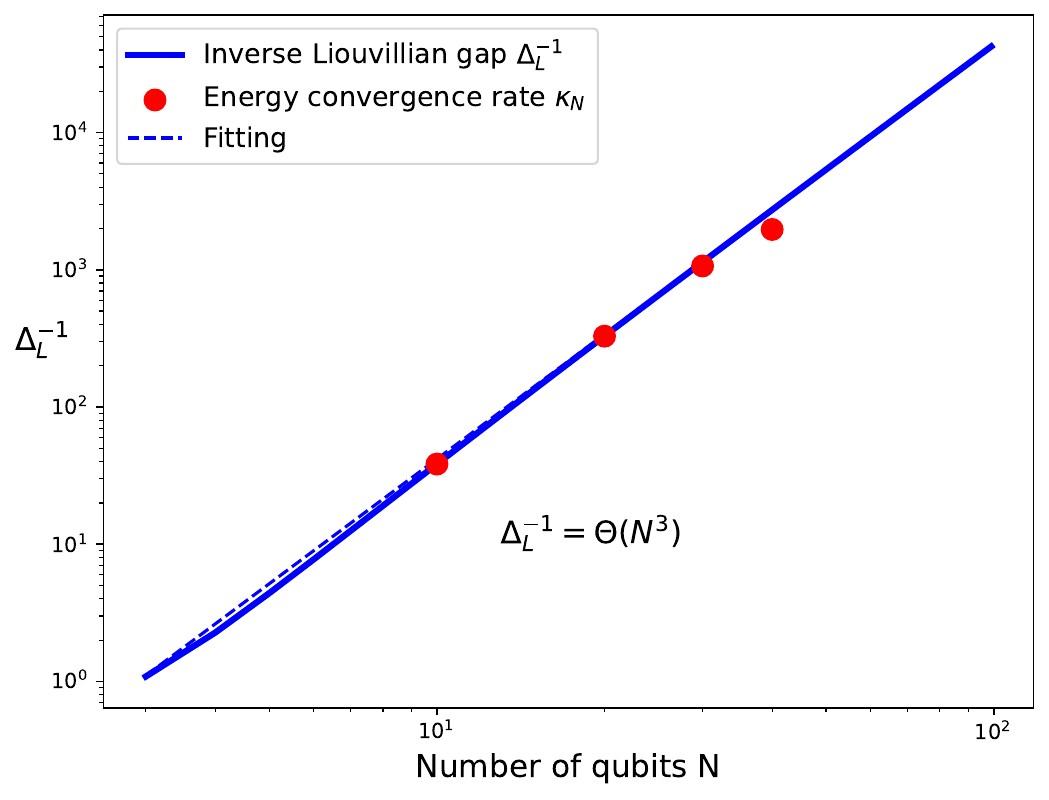}
        \caption{ \centering }
    \end{subfigure}
  \caption{
  \rev{Numerical results for the 1D TFIM \eqref{eqn:H_TFIM} with the initial state $\ket{1^N}$. (a) Convergence of the energy during Lindblad dynamics. (b) Scaling of the inverse Liouvillian gap $\Delta_{\mathcal{L}}^{-1}$ as a function of the system size $N$. Red points indicate the fitted energy convergence rates $\kappa_N$ extracted from panel (a).
}}
  \label{Fig: tfimdynamics_diff_ini}
\end{figure}

A key part of simulating the Lindbladian evolution is the compression of a sum of $M$ triple MPO products to a single MPO of bond dimension $D$ on $N$ sites, as described in \cref{sec:TNmethod}.
A \emph{direct method} contracts the three tensors per site for each term, then explicitly sums the $M$ terms, yielding a single MPO with bond dimension $MD^3$ which can be compressed using a sweep of QR decompositions and singular value truncations. The cost of this direct method scales as $\mathcal{O}(N M^3 D^9)$.

An alternative option would be to interleave compressions and contractions, that is, immediately compress back to bond dimension $D$ after every pairwise MPO multiplication or addition. Such an approach has a better scaling of $\mathcal{O}(N M D^6)$. However, this approach can introduce a significantly larger error, as the number of compressions performed is $\Or(NM)$ rather than $\Or(N)$ in the previous case.

To resolve this problem, we employ the \emph{fitting method}~\cite{VerstraeteCirac2004}, which iteratively constructs the optimal (in terms of Frobenius norm) 1D approximation of a tensor network using only the overlap between the ansatz and the target network. Since the target is a linear sum of terms, each overlap can be calculated separately.
This leads to a scaling of $\mathcal{O}(K M N D^5)$ with $K$ the number of sweeps required to converge the fitting procedure (typically $<20$). The library \textsf{quimb}~\cite{Gray2018} enables the fitting of a sum of such MPO product terms to a single MPO.  It also supports the use of GPUs, which can greatly speed up the computations dominated by linear algebra operations such as this fitting routine. We report the results in \cref{fig:compression_time}.

To demonstrate the accuracy of the fitting method, we calculate $\Tr(H K_a \rho_{t=0.1} K_a^{\dagger})$ from the bulk-dissipated 1D-TFIM model with $J=1$ and $g=1.5$, and compress the triple MPO product $K_a \rho K_a^{\dagger}$ using the direct method and the fitting method. Here $K_a$ is the jump operator with coupling operator $X$ at the fifth site and $\rho_{t=0.1}$ is the density matrix at $t=0.1$ obtained from the Lindblad dynamics using a bond dimension of $50$. Then, $K_a$ and $\rho_{t=0.1}$ are compressed into MPOs with a reduced bond dimension $D$ ranging from 6 to 14. For each given $D$, the absolute difference between $\Tr(H K_a \rho_{t=0.1} K_a^{\dagger})$ computed using the direct method and the fitting method is very small (see \cref{fig:fitting_error}). We note that for $D > 14$, the direct method becomes too expensive to use for comparison, while the fitting method remains efficient.

\begin{figure}
\begin{center}
\includegraphics[width=0.45\textwidth]{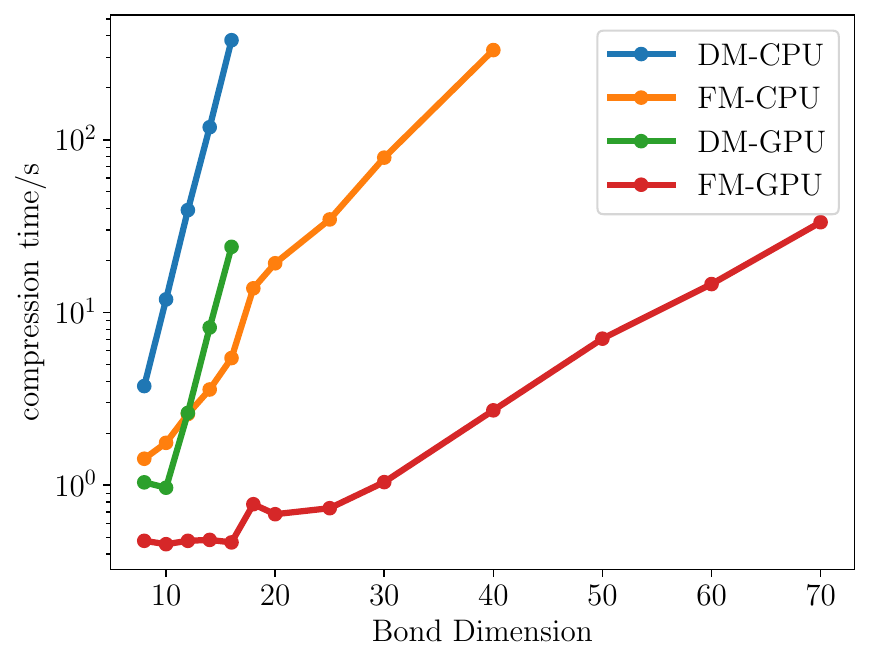}
\end{center}
  \caption{Compression time of multiplying three $N=20$ random tensor networks with the direct method (DM) and fitting method(FM). The direct method is not applicable for bond dimension $D>14$ due to memory constraints.}
\label{fig:compression_time}
\end{figure}

\begin{figure}
\begin{center}
\includegraphics[width=0.45\textwidth]{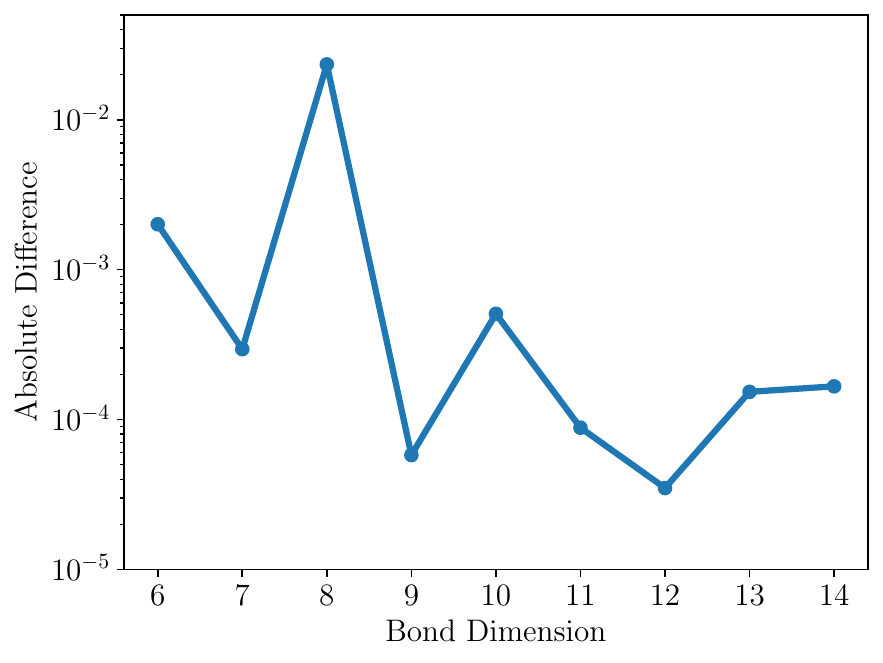}
\end{center}
  \caption{Absolute difference of $\Tr(H K_a \rho_{t=0.1} K_a^{\dagger})$ calculated using the direct method (DM) and the fitting method (FM) as the bond dimension $D$ increases. }
\label{fig:fitting_error}
\end{figure}

Then, we numerically validate that the jump operator can be compressed into an MPO with relatively low bond dimension. To this end, we evaluate $\norm{K_a\ket{\psi_g}}^2$ for the 1D-TFIM model with $J=1$ and $g=1.5$, using a fixed bond dimension $D=50$, with the coupling operator $X$ positioned at the center of the chain. We then investigate how $\norm{K_a\ket{\psi_g}}^2$ scales with increasing system size $N$.
 Ideally, this quantity should remain small since $\ket{\psi_g}$ belongs to the kernel of the jump operator. \cref{fig:jump_check} shows that  $\norm{K_a\ket{\psi_g}}^2$ remains approximately $10^{-3}$ as the system size grows. This result confirms that a bond dimension of $D=50$ can be sufficiently accurate for representing the ground state and the jump operators.

\begin{figure}[ht]
    \includegraphics[width=0.45\textwidth]{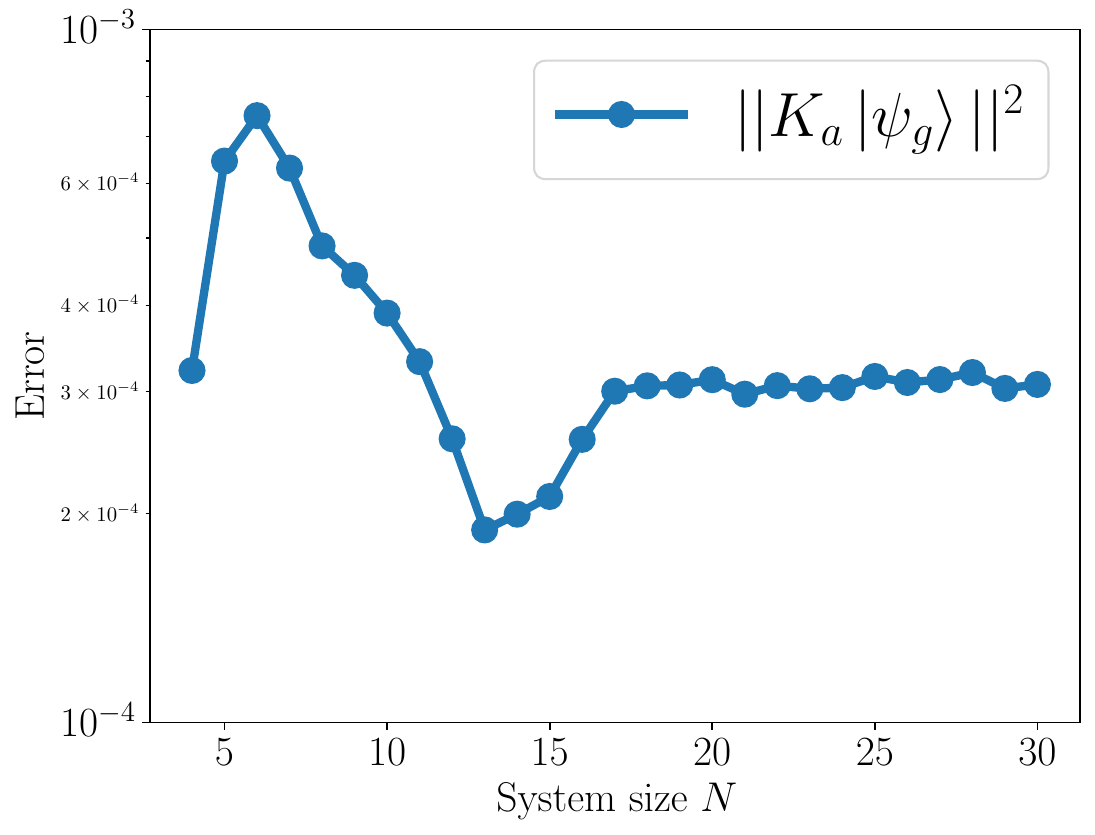}
  \caption{
$\norm{K_a\ket{\psi_g}}^2$ as a function of system size $N$ for the 1D-TFIM model with $a=\lfloor L/2\rfloor $. The bond dimension is set to $D = 50$. }
  \label{fig:jump_check}
\end{figure}
\begin{figure}[ht]
    \includegraphics[width=0.45\textwidth]{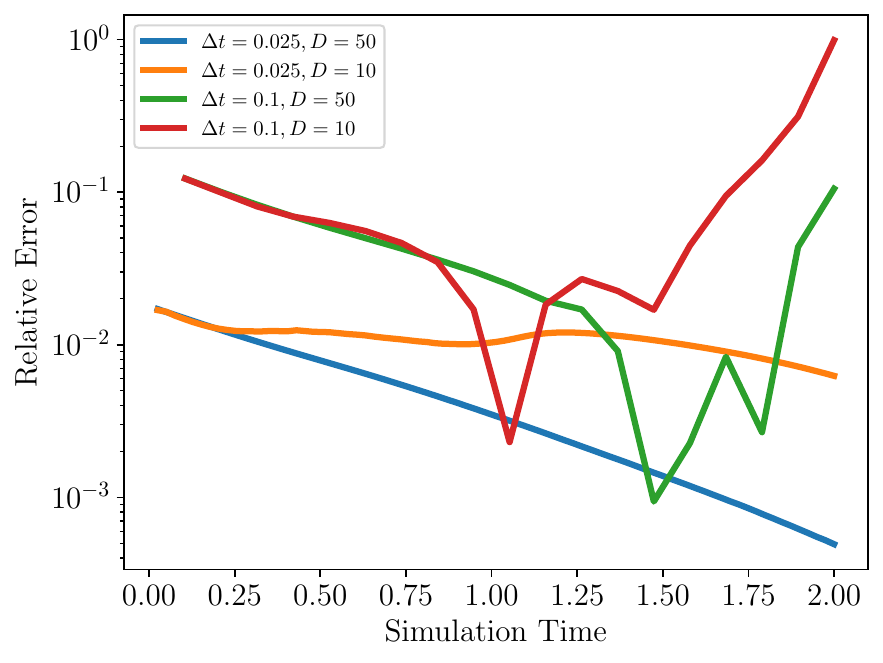}
  \caption{Relative error of energy with different time step sizes and bond dimensions for 1D TFIM model. }
  \label{fig:TNerror}
\end{figure}
Finally, in~\cref{fig:TNerror}, we demonstrate the reliability of the tensor network results by comparing simulations with different bond dimensions. We consider the bulk dissipation of an $N=10$ one-dimensional TFIM model using bond dimensions $D=10, 50$ and a time step size of $\Delta t = 0.025$. The relative error is computed using $D=75$ and $\Delta t = 0.0125$ as the reference (see \cref{fig:TNerror}). For small simulation times, the dominant source of error is the time discretization. A large time step size may also lead to unstable behavior as the simulation time increases. Furthermore, as the system evolves towards a fixed point, the error is no longer primarily dictated by time discretization, and the bond dimension plays a more significant role.

\section{Mixing time of quasi-free systems and proof of Theorem \ref{thm:sharpbound_quasifree}}\label{sec:proof_sharpbound_quasifree}

In order to characterize the convergence rate of Lindblad dynamics, one standard approach is to evaluate the Liouvillian gap. When the dissipative dynamics satisfies the quantum detailed balance condition (DBC), the Lindbladian may be transformed into a Hamiltonian under a similarity transformation, and the Liouvillian gap can be bounded using techniques for bounding spectral gaps for quantum many-body Hamiltonians~\cite{rouz2024,tong2024fast}. This strategy, however, cannot be applied to Lindblad dynamics with a coherent term, which breaks the DBC.

A more general formulation for bounding the spectral gap may be captured by the hypocoercivity theory, which was originally formulated in the context of classical kinetic theory~\cite{Villani2007} and has recently been applied in the context of open quantum systems described by Lindblad equations~\cite{fang2024mixingtimeopenquantum}. However, the formulation in~\cite{fang2024mixingtimeopenquantum} can only be used to prove the existence of a spectral gap. For quasi-free dynamics, the spectral gap can also be derived explicitly from the rapidity spectrum~\cite{Prosen2010} related to the equation of motion for the covariance matrix.

Even with spectral gap estimates, the problem remains how to bound convergence of the density matrix in trace distance. This is because spectral gap estimates only imply convergence in $\chi^2$-distance~\cite{TemmeKastoryanoRuskaiEtAl2010}, and the conversion from convergence in $\chi^2$-distance to the trace distance involves a factor that blows up exponentially as temperature decreases, making it inapplicable for ground state preparation when the temperature is zero. A special technique called hypercontractivity can be applied to quasi-free quantum groups~\cite{TemmePastawskiKastoryano2014} but also requires the stationary state to be invertible.

\rev{Theorem \ref{thm:sharpbound_quasifree} provides an explicit bound on the convergence in trace distance to the ground state for general quasi-free dynamics, and its proof is given below.}

After the canonical unitary transformation and the particle hole transformation, we express the quadratic Hamiltonian $H$ in the canonical form of \eqref{eqn:canonical_quadratic}. 
Let $\hat{n}_k=b_k^{\dag} b_k$ be the number operator of the $k$-th mode, and $\hat{N}=\sum_{k} \hat{n}_k=\vb^{\dag} \vb$ be the total number operator.
Note that all creation operators $b_k^{\dag}$ increase the energy, while all annihilation operators $b_k$ decrease the energy. As a result, the jump operator must be a linear combination of annihilation operators alone:
\begin{equation}
K_a=\sum_{p=1}^{N} \Phi^*_{pa} b_p
\end{equation}
for some coefficient matrix $\Phi$, so that $K_a \ket{\mathrm{vac}}=0$.

Define $O=\sum_{k,l=1}^{N} \Xi_{kl} b^{\dag}_{k} b_l$ for some positive definite matrix $\Xi$ to be determined. Then
\begin{equation}
i[H,O]=2i\sum_{k} \lambda_k [n_k, O]=2i \sum_{kl} (\lambda_k-\lambda_l) \Xi_{kl} b_k^{\dag} b_l.
\end{equation}
We may also directly compute
\begin{equation}
\begin{split}
&\sum_a K_a^{\dag} O K_a - \frac{1}{2} \{K^{\dag}_a K_a, O\} = \frac{1}{2} \sum_a (K^{\dag}_a [O, K_a] - [O, K^{\dag}_a] K_a)\\
=& -\frac12 \sum_a \left(\sum_{p} \Phi_{pa} b^{\dag}_p \sum_{kl} \Xi_{kl} \Phi_{ka}^* b_l + \sum_{kl} \Xi_{kl} \Phi_{la} b_k^{\dag} \sum_{p} \Phi_{pa}^* b_p\right)\\
=& -\frac12 \sum_{kl} b_k^{\dag} \left(\sum_{ap} \Phi_{ka} \Phi_{pa}^* \Xi_{pl} + \Xi_{kp} \Phi_{pa} \Phi_{la}^*\right) b_l.
\end{split}
\end{equation}
Let $h=\diag(\{2 \lambda_k\})$, and define a non-Hermitian Hamiltonian (the superscript $f$ means the non-Hermitian matrix is defined with respect to fermionic creation and annihilation operators instead of Majorana operators):
\begin{equation}
H_{\mathrm{nh}}=iH-\frac12 \sum_a  K_a^{\dag}K_a=\vb^{\dag} h^f_{\mathrm{nh}} \vb, \quad h^{f}_{\mathrm{nh}}=ih-\frac12 \Phi\Phi^{\dag},
\end{equation}
then
\begin{equation}\label{eqn:Ldag_O_quasifree}
\mc{L}^{\dag}[O]=\vb^{\dag}\left[\left(ih-\frac12 \Phi\Phi^{\dag}\right)\Xi+\Xi\left(-ih-\frac12 \Phi\Phi^{\dag}\right)\right]\vb=\vb^{\dag}(h^{f}_{\mathrm{nh}}\Xi+\Xi (h^{f}_{\mathrm{nh}})^{\dag})\vb.
\end{equation}
Here the coefficient matrix $h^{f}_{\mathrm{nh}}$ is related to that in \cref{eqn:nonhermitian_ham} by a similarity transformation.

Under the assumption that $h^{f}_{\mathrm{nh}}$ can be diagonalized as $VDV^{-1}$ for some invertible matrix $V$ and diagonal matrix $D$, we now make a choice of the Hermitian matrix $\Xi=VV^{\dag}\succ 0$. Let $\Delta=-\max _i \operatorname{Re} D_{i i}$ be the non-Hermitian gap. Then
\begin{equation}
\mc{L}^{\dag}[O]=\vb^{\dag}V(D+D^*)V^{\dag}\vb\preceq -2\Delta \vb^{\dag} V V^{\dag} \vb= -2\Delta O.
\end{equation}
This immediately yields the exponential convergence of the observable $O$ as
\begin{equation}
\Tr[O \rho(t)]\le \Tr[O \rho(0)] e^{-2\Delta t}.
\end{equation}

From the definition of $\Xi$ we have
\begin{equation}
\lambda_{\min}(VV^{\dag}) \hat{N} \preceq O\preceq
\lambda_{\max}(VV^{\dag}) \hat{N}.
\end{equation}
Then the infidelity can be bounded as
\begin{equation}
\begin{split}
1-\braket{\mathrm{vac}|\rho|\mathrm{vac}}\le \Tr[\hat{N}\rho]\le& \frac{1}{\lambda_{\min}(VV^{\dag})} \Tr[O \rho]\le \frac{1}{\lambda_{\min}(VV^{\dag})} \Tr[O \rho(0)]e^{-2\Delta t}\\
\le & \frac{\lambda_{\max}(VV^{\dag})}{\lambda_{\min}(VV^{\dag})} \Tr[\revvv{\vb^{\dag}  \vb}\rho(0)]e^{-2\Delta t}=\kappa^2(V) \Tr[\hat{N} \rho(0)] e^{-2\Delta t}.
\end{split}
\end{equation}
Finally, by the Fuchs--van de Graaf inequality,
\begin{equation}
D(\rho, \ketbra{\mathrm{vac}}{\mathrm{vac}})\le \sqrt{1-\braket{\mathrm{vac}|\rho|\mathrm{vac}}}\le \kappa(V) \sqrt{\Tr[\hat{N} \rho(0)]} e^{-\Delta t}.
\end{equation}
Finally, we use $\Tr[\hat{N} \rho(0)]\le N$ for any initial state $\rho(0)$ and finish the proof of Theorem \ref{thm:sharpbound_quasifree}.

\section{Mixing time of 1D TFIM with boundary dissipation}\label{sec:tfim_boundary_proof}

To simplify the analysis we adopt the ``c-cyclic'' approximation in \cite{LiebSchultzMattis1961,Pfeuty1970},
and consider the following periodized version of the TFIM Hamiltonian expressed in fermionic operators
\begin{equation}\label{eqn:H_TFIM_fermion_per}
H_{\mathrm{per}}=-J\sum_{j=1}^{N-1} (c_j-c^{\dag}_j)(c_{j+1}^{\dag}+c_{j+1}) -J\left(c_N-c_N^{\dagger}\right)\left(c_1^{\dagger}+c_1\right)-2g\sum_{j=1}^N c_j^{\dag} c_j+gN.
\end{equation}
Compared to \cref{eqn:H_TFIM_fermion}, this Hamiltonian introduces a periodic term. Define the ratio $\xi=J/g$. When $\xi\ne 1$, the modified Hamiltonian is gapped, and can be exactly diagonalized as
\begin{equation}
H_{\rm per,c}=2g \sum_k \Lambda_k b_k^{\dag} b_k-g \sum_k \Lambda_k.
\end{equation}

Here for convenience we assume \revvv{$g=1$,} $N$ is even, and we label the eigenvalues
from $-\frac{N}{2}$ to $\frac{N}{2}-1$, with
\begin{equation}
\Lambda_k=\sqrt{1+\xi^2+2 \xi \cos \left(\frac{2\pi k}{N}\right)}, \quad k=-\frac{N}{2},\ldots,\frac{N}{2}-1,
\end{equation}
which is even with respect to the index $k$.
The annihilation operators $\{b_k\}$ are given in the form of a Nambu spinor \begin{equation}\label{eqn:nambu_spinor}
b_k=\sum_j\left\{\left(\frac{\varphi_{k j}+\psi_{k j}}{2}\right) c_j+\left(\frac{\varphi_{k j}-\psi_{k j}}{2}\right) c_j^{\dag}\right\},
\end{equation}
with coefficients
\begin{equation}
\varphi_{k j}=\begin{cases}
(2 / N)^{1 / 2} \sin \left(\frac{2\pi jk}{N}\right) & k=1,\ldots,\frac{N}{2}-1 \\
(2 / N)^{1 / 2} \cos \left(\frac{2\pi jk}{N}\right) & k=-\frac{N}{2},\ldots,0
\end{cases}
\end{equation}
and
\begin{equation}
\psi_{k j}=-\Lambda_k^{-1}\left[\left(1+\xi \cos \left(\frac{2\pi k}{N}\right)\right) \varphi_{k j}+\xi \sin\left(\frac{2\pi k}{N}\right) \varphi_{-k, j}\right].
\end{equation}
Direct calculation shows
\begin{equation}
X_1=c_1+c_1^{\dag}=\sum_k \varphi_{k1} (b_k+b_k^{\dag}), \quad Y_1=i(c_1-c_1^{\dag})=\sum_k \psi_{k\revvv{1}} i(b_k-b_k^{\dag}).
\end{equation}
For ground state preparation, the corresponding jump operators simply filter out the energy-increasing $b_k^{\dag}$ components:
\begin{equation}
K_{X_1}=\int f(s) e^{i H s} X_1 e^{-i H s} \ud s=\sum_k \varphi_{k1} b_k, \quad K_{Y_1}=\int f(s) e^{i H s} Y_1 e^{-i H s} \ud s=\sum_k \psi_{k1} b_k.
\end{equation}
The non-Hermitian Hamiltonian in Theorem \ref{thm:sharpbound_quasifree} can be written as
\begin{equation}
H_{\rm nh} = \vb^{\dag} h_{\mathrm{nh}}^f \vb, \quad
h^{f}_{\mathrm{nh}}=i\Lambda-\frac12 \varphi\varphi^{\dag}-\frac12 \psi \psi^{\dag}.
\end{equation}
Note that the magnitudes of $\abs{\varphi_{k j}},\abs{\psi_{k j}}$ vanish at least as $\Or(N^{-\frac12})$ for large system sizes. This allows us to use  first order perturbation theory to estimate the non-Hermitian gap. For $k=0$, we have $\lambda_0=1+\xi$,
\begin{equation}
\frac12 \left(\varphi_{0,1}^2+\psi_{0,1}^2\right)=\frac{2}{N},
\end{equation}
which is  large compared to $\Or(N^{-3})$. Due to the modification to periodic boundary conditions, every eigenvalue $\Lambda_k$ with $k> 0$ is doubly degenerate with $\Lambda_k=\Lambda_{-k}$. This relation also approximately holds for the original TFIM problem with open boundary conditions. So we apply the perturbation theory to each two dimensional space spanned by the eigenvectors corresponding to the eigenvalue $\Lambda_k=\Lambda_{-k}$.
Then
\begin{equation}
\Delta \approx \frac12 \min_{k>0} \lambda_{\min}(M_k), \quad M_k=\begin{pmatrix}
\varphi_{k,1}^2+\psi_{k,1}^2 & \varphi_{k,1} \varphi_{-k,1} + \psi_{k,1} \psi_{-k,1}\\
\varphi_{k,1} \varphi_{-k,1} + \psi_{k,1} \psi_{-k,1} & \varphi_{-k,1}^2+\psi_{-k,1}^2
\end{pmatrix}.
\end{equation}
For each $0<k<N/2$, we have
\begin{equation}
\det M_k=(\varphi_{k,1}\psi_{-k,1}-\varphi_{-k,1}\psi_{k,1})^2=\frac{\xi^2 \sin^2\left(\frac{2\pi k}{N}\right)}{\Lambda_k^2}\left(\varphi_{k,1}^2+\varphi^2_{-k, 1}\right)^2=\frac{4\xi^2 \sin^2\left(\frac{2\pi k}{N}\right)}{N^2\Lambda_k^2}=\Omega(N^{-4}).
\end{equation}
Therefore each $M_k$ is invertible and the spectral gap is positive. In particular, when $k=1$, $\det M_k=\Theta(N^{-4})$, and the magnitude of the entries are
\begin{equation}
M_1=\begin{pmatrix}
\Theta(N^{-3}) & \Theta(N^{-2})\\
\Theta(N^{-2}) & \Theta(N^{-1}).
\end{pmatrix}
\end{equation}
Therefore one of the eigenvalues must be $\Theta(N^{-1})$. To obtain $\det M_k=\Theta(N^{-4})$, the other eigenvalue must be $\Theta(N^{-3})$.

\section{Proof of rapid ground state preparation of weakly interacting spin systems}\label{proof_rapid_thm}
In this section, we show the rapid ground state preparation for the perturbed Hamiltonian. First, we introduce the assumptions of the filter function $f$, which is similar to that in~\cite[Assumption 12]{DingChenLin2024}.
Our analysis employs the Gevrey function, a subclass of smooth functions characterized by well-controlled decay of the Fourier coefficients. This characteristic plays a crucial role in the quadrature analysis.
\begin{defn}[Gevrey function] \label{def:gevrey}
Let $\Omega\subseteq \RR^d$ be a domain. A complex-valued $C^\infty$ function $h: \Omega\to \CC$ is a \emph{Gevrey function} of order $s\ge 0$, if there exist constants $C_1,C_2>0$ such that for every $d$-tuple of nonnegative integers $\alpha = (\alpha_1,\alpha_2,\ldots,\alpha_d)$,
\begin{equation}
\left\|\partial^\alpha h\right\|_{L^\infty(\Omega)}\leq C_1C^{|\alpha|}_2|\alpha|^{|\alpha|s}\,,
\end{equation}
where $|\alpha|=\sum^d_{i=1} |\alpha_i|$. For fixed constants $C_1,C_2,s$, the set of Gevrey functions is denoted by $\mathcal{G}^{s}_{C_1,C_2}(\Omega)$. Furthermore, $\mathcal{G}^s=\bigcup_{C_1,C_2>0}\mathcal{G}^s_{C_1,C_2}$.
\end{defn}
We refer readers to \cite{AdwanHoepfnerRaich2017,Gus_2019} for background on the Gevrey class. We also note that the use of Gevrey class functions is mainly for simplifying the discretization error analysis and not essential for the design of the Lindbladian.
\begin{assumption}[Filter function in the frequency domain]\label{assum:f_freq}
Assume $\hat{f}$ in the Fourier domain takes the form:
  \begin{equation}\label{eqn:hat_f}
  \hat f(\omega)=\hat{u}(\omega/8)\hat{v}(2\omega)\,.
  \end{equation}
  Here, $\hat{u}$ is a positive function and belongs to Gevrey class $\mathcal{G}^{\alpha}_{A_{1,u},A_{2,u}}(\mathbb{R})$ for some $A_{1,u},A_{2,u}>0$ and $\alpha>1$, meaning that
      \[
      \sup_{\omega\in\mathbb{R}}\left|\frac{\mathrm{d}^n}{\mathrm{d}\omega^n}\left(\hat{u}(\omega)\right)\right|\leq A_{1,u} A^n_{2,u}n^{n\alpha}
      \]
      for any $n\in\mathbb{N}$.
  Also, $\mathrm{supp}(\hat{u})\subset[-1,1]$,  $\hat{u}(\omega)=\Omega(1)$ when $\omega\in [-1/2,1/2]$, and $\hat{u}(-1/4)=1$. In addition, we assume $\hat{v}\in \mathcal{G}^{\alpha}_{A_{1,v},A_{2,v}}(\mathbb{R})$, $\left\|\frac{\mathrm{d}}{\mathrm{d}\omega}\hat{v}\right\|_{L^1}=\mathcal{O}(1)$, $\mathrm{supp}(\hat{v})\subset (-\infty,0]$, and $\hat{v}(-4)=1$.
\end{assumption}

We define the perturbed Hamiltonian as
\begin{equation}\label{eqn:H_eps}
  H_{\varepsilon}=-\sum_{i} Z_i+ \varepsilon \sum_{j} h_j\,.
\end{equation}

Compared to~\cite[Assumption 12]{DingChenLin2024}, the above assumption sets $\Delta=1/2$ and $S_\omega=4$, which is sufficient to ensure~\eqref{eqn:K_0}.
 First, setting $\Delta=1/2$ is adequate for our analysis since the spectral gap of $H_0$ is one, implying that the spectral gap of $H_\varepsilon$ remains greater than $1/2$ when $\varepsilon$ is sufficiently small. Second, choosing $S_\omega=4$ is sufficient because the coupling operator $A_i = X_i$ modifies the energy of $H_0$ by at most one, ensuring that the energy decay of $H_\varepsilon$ can also be bounded by 4 when $\varepsilon$ is sufficiently small.

Now, we are ready to present the rigorous version of Theorem \ref{thm:rapid_mixing_2D_TFIM}:
\begin{thm}[Rigorous version of~Theorem \ref{thm:rapid_mixing_2D_TFIM}]\label{thm:rapid_mixing_2D_TFIM_rigo}
Assume $H$ is a $(r_0,l)$-local Hamiltonian that takes the form of \eqref{eqn:H}, choose the coupling operators $\{A_a\}=\{X_i\}_{i\in \Lambda}$, and the filter function $f$ satisfies Assumption \ref{assum:f_freq}. There exists a constant $\varepsilon^*$ only depends on $k,l,D$ such that when $\varepsilon<\varepsilon^*$, we have
\[
\left\|\exp(\mathcal{L}t)\rho-\ket{\psi_0}\bra{\psi_0}\right\|_1\leq \eta,\quad \forall t=\Omega(\log(N/\eta))\,,\ \rho
\]
where $N=(L+1)^D$ is the system size. Here, $\varepsilon^*=\widetilde{\mathcal{O}}\left(\left(r_0l\right)^{-\Theta(D)}\right)$.
\end{thm}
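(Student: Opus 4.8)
The plan is to pass to the Heisenberg picture, prove an exponential contraction of a suitably modified local oscillator norm under the adjoint Lindbladian $\mathcal{L}^\dagger$, and convert this into a bound on the trace distance. Since $\Tr[\rho-\sigma]=0$ for the rank-one stationary state $\sigma=\ketbra{\psi_0}{\psi_0}$, for every observable $O$ with $\norm{O}\le 1$ we have
\begin{equation}
\Tr\!\left[\bigl(e^{\mathcal{L}t}\rho-\sigma\bigr)O\right]=\Tr\!\left[(\rho-\sigma)\bigl(e^{\mathcal{L}^\dagger t}O-\langle O\rangle_\sigma I\bigr)\right],
\end{equation}
so $\norm{e^{\mathcal{L}t}\rho-\sigma}_1\le 2\sup_{\norm{O}\le 1}\norm{e^{\mathcal{L}^\dagger t}O-\langle O\rangle_\sigma I}$. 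Because $\sigma$ is pure, the finite-temperature construction of~\cite{rouze2024optimal}, which conditions observables against the marginals of the stationary state, does not apply directly; instead I would split every observable $O=O_{\mathrm d}+O_{\mathrm{od}}$ into its diagonal and off-diagonal parts in the $Z$-eigenbasis of $H_0$, measure the off-diagonal part by its local deviations from zero (conditioning each site against the maximally mixed state) and the diagonal part by its local deviations from the classical all-up ground distribution, and define $\tnorm{O}$ to be the sum of these contributions over all sites and the two sectors. One checks that $\norm{O-\langle O\rangle_\sigma I}\lesssim\tnorm{O-\langle O\rangle_\sigma I}$ and that $\tnorm{O-\langle O\rangle_\sigma I}=\Or(N)$ whenever $\norm{O}\le 1$, so it suffices to prove a uniform contraction $\tnorm{e^{\mathcal{L}^\dagger t}(O-\langle O\rangle_\sigma I)}\le e^{-\gamma t}\tnorm{O-\langle O\rangle_\sigma I}$ with a rate $\gamma=\Omega(1)$ independent of $N$.

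I would first solve the noninteracting case $\varepsilon=0$ exactly. Since $H_0=-\sum_i Z_i$ in \eqref{eqn:H} is a sum of commuting single-site terms and $A_i=X_i$ changes the $H_0$-energy only on site $i$ by a fixed $\Or(1)$ amount, the normalization constants $\hat u(-1/4)=1$ and $\hat v(-4)=1$ in \cref{assum:f_freq} are chosen precisely so that the corresponding jump operator defined via \cref{eqn:jump_time} reduces to $K_i^{(0)}=\sigma_i^-$. Thus $\mathcal{L}^\dagger_0=\sum_i\mathcal{L}^\dagger_{0,i}$ is a sum of commuting single-site amplitude-damping generators, which I diagonalize explicitly: $\mathcal{L}^\dagger_{0,i}(I_i)=0$, $\mathcal{L}^\dagger_{0,i}(Z_i-I_i)=-(Z_i-I_i)$, and $\mathcal{L}^\dagger_{0,i}(X_i)=-\tfrac12 X_i$, $\mathcal{L}^\dagger_{0,i}(Y_i)=-\tfrac12 Y_i$. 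Since the dynamics factorizes over sites and $\tnorm{\cdot}$ is additive over sites and sectors, this gives the contraction with $\gamma_0=\tfrac12$, and already at $\varepsilon=0$ one obtains $\tau_{\mathrm{mix}}(\eta)=\Theta(\log(N/\eta))$: the lower bound follows by taking $\rho_0=\mathbb{I}/2^N$, computing $\langle\psi_0|e^{\mathcal{L}t}\rho_0|\psi_0\rangle=(1-\tfrac12 e^{-t})^N$, and applying \cref{thm:fuchs}.

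The core of the argument, and the step I expect to be the main obstacle, is showing that this contraction is stable for $\varepsilon>0$. I would carry this out in two stages. First, control the perturbed jump operators: by Duhamel's formula, $e^{iHs}A_a e^{-iHs}-e^{iH_0s}A_a e^{-iH_0s}$ equals an $\Or(\varepsilon)$ integral of commutators of $A_a$ with the Heisenberg-evolved perturbation $e^{iHu}H_1e^{-iHu}$; a Lieb--Robinson bound adapted to the gapped ground-state setting localizes these commutators in a ball about $a$ whose radius grows at most polynomially in $\abs{s}$, while the stretched-exponential decay $\abs{f(s)}\lesssim e^{-c\abs{s}^{1/\alpha}}$ forced by the Gevrey hypothesis in \cref{assum:f_freq} makes the $s$-integral converge and preserves quasilocality. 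This yields $K_a=K_a^{(0)}+\varepsilon\,\delta K_a$ with $\delta K_a$ quasilocal about $a$ (decay constants controlled by $r_0$ and $l$), and the perturbed $K_a$ still annihilates $\ket{\psi_0}$ by \cref{eq:jump_operators}. Second, feed this into a differential inequality for $\tnorm{\cdot}$: because $\delta K_a$ is quasilocal and $\tnorm{\cdot}$ only registers local deviations in each sector, the generator difference $\mathcal{L}^\dagger-\mathcal{L}^\dagger_0$ contributes at most $C'\varepsilon\,\tnorm{O_t}$ to $\tfrac{\mathrm d}{\mathrm dt}\tnorm{O_t}$ for $O_t:=e^{\mathcal{L}^\dagger t}(O-\langle O\rangle_\sigma I)$, where $C'$ collects the Lieb--Robinson velocity and the lattice-geometry factors, so that
\begin{equation}
\frac{\mathrm d}{\mathrm dt}\tnorm{O_t}\ \le\ -(\gamma_0-C'\varepsilon)\,\tnorm{O_t}.
\end{equation}
Choosing $\varepsilon^\ast$ to be a small constant multiple of $\gamma_0/C'$ keeps $\gamma:=\gamma_0-C'\varepsilon$ bounded away from zero; tracking the geometric factors — a radius-$r_0$ ball in $\ZZ^D$ contains $\Theta(r_0^D)$ sites, each site meets at most $l$ perturbation terms, and the Gevrey (rather than analytic) decay of $f$ costs an extra polylogarithmic factor — gives the stated $\varepsilon^\ast=\wt{\Or}((r_0l)^{-\Theta(D)})$.

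Combining the two, the oscillator-norm contraction with rate $\gamma=\Omega(1)$ together with the reduction in the first paragraph yields $\norm{e^{\mathcal{L}t}\rho-\sigma}_1\le\Or(N)\,e^{-\gamma t}$ uniformly in $\rho$, hence $\tau_{\mathrm{mix}}(\eta)=\Or(\log(N/\eta))$; the matching $\Omega(\log(N/\eta))$ lower bound persists from the $\varepsilon=0$ computation by a Lieb--Robinson/stability argument, giving $\tau_{\mathrm{mix}}(\eta)=\Theta(\log(N/\eta))$. The two technical cruxes I anticipate are (i) checking that the modified oscillator norm still closes a differential inequality once the $\Or(\varepsilon)$ perturbation couples the diagonal and off-diagonal sectors, and (ii) making the combined Lieb--Robinson/Gevrey quasilocality estimate quantitative enough to pin down the $D$-dependence of $\varepsilon^\ast$.
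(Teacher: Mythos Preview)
Your plan coincides with the paper's: Heisenberg picture, a site-by-site oscillator norm split into diagonal and off-diagonal channels, exact solution at $\varepsilon=0$ (where Assumption~\ref{assum:f_freq} is tuned so that $K_i^{(0)}=\ket{0_i}\bra{1_i}$ with rates $1$ and $\tfrac12$ on the two channels), and then a Gronwall/Duhamel argument in which the perturbation $\mathcal{L}^\dagger_\varepsilon-\mathcal{L}^\dagger_0$ is controlled by combining a standard Lieb--Robinson bound for $e^{iH_\varepsilon s}X_je^{-iH_\varepsilon s}$ with the stretched-exponential decay of $f$. Your two flagged cruxes are exactly Proposition~\ref{prop:rapid_mixing} and Lemma~\ref{lem:difference}.

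The one design choice that differs is the reference for the local conditioning in the diagonal sector. You condition against the all-up distribution; the paper conditions \emph{both} sectors against the maximally mixed state, taking $\delta_i(O)=O-\tfrac12 I_i\otimes\Tr_i(O)$ and $\tnorm{O}=\sum_i\bigl(\|\delta_i\circ P_i(O)\|+\|\delta_i\circ Q_i(O)\|\bigr)$. This buys two simplifications. First, $\delta_i(I)=0$, so the $I_i$ generated by $\mathcal{L}^\dagger_{i,0}(Z_i)=I_i-Z_i$ is killed and one gets the clean identity $\delta_i\circ P_i(\mathcal{L}^\dagger_{i,0}(O))=-\delta_i\circ P_i(O)$ without any pre-centering. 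Second, the telescoping bound $\|O-\Tr(O)/2^N\cdot I\|\le\tnorm{O}$ holds for every $O$, and since any scalar may be subtracted in the duality $\Tr[(\rho-\sigma)O(t)]=\Tr[(\rho-\sigma)(O(t)-cI)]$, one never needs to track $\langle O\rangle_\sigma$. With an all-up reference the telescoping instead lands on $\bra{0^N}O\ket{0^N}$, which for $\varepsilon>0$ differs from $\langle O\rangle_\sigma$ by a quantity whose decay is exactly what you are trying to establish; this is not unfixable, but the maximally-mixed choice sidesteps the circularity entirely. (A minor point: the Lieb--Robinson input is the ordinary one for Hamiltonian evolution; no ground-state-specific adaptation is required.)
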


\rev{The proof of Theorem \ref{thm:rapid_mixing_2D_TFIM_rigo} is based on the analysis of the convergence of observables in the Heisenberg picture, which is inspired by \cite{rouze2024optimal}.} Specifically, the evolution of any observable $O$ in the Heisenberg picture follows the dynamics $O(t)=e^{\mathcal{L}^\dagger t}(O)$. Since $\mc{L}^{\dag}(I)=0$ for any Lindbladian, if the Lindblad dynamics $\rho(t)=\exp(\mathcal{L}t)\rho$ has a unique fixed point, then the identity operator $I$ is also the unique fixed point of the dynamics $e^{\mathcal{L}^\dagger t}$. In other words, $\lim_{t\to \infty} O(t)= \chi_O I$ for some constant $\chi_O$. For a index set $\mc{A}\subseteq \Lambda$, define the \emph{local oscillation operator}
\begin{equation}\label{eqn:delta_X}
  \delta_{\mc{A}}(O):=O-\frac{1}{2^{|\mc{A}|}}I_{\mc{A}}\otimes \Tr_\mc{A}(O)\,,
\end{equation}
where $\Tr_\mc{A}(O)$ is the partial trace of $O$ with respect to the indices in $\mc{A}$. Then $\delta_{\mc{A}}(O)$ measures the local deviation of $O$ from the identity. Furthermore, we expect that $\lim_{t\to \infty} \delta_{\mc{A}}(O(t))=0$ for any nonempty $\mc{A}$ and observable $O$.
For a given index $i\in \Lambda$, for simplicity we identify $i$ with its singleton set $\{i\}$. Then Ref.~\cite{rouze2024optimal} quantifies the convergence of the Lindblad dynamics by means of the convergence of the oscillator norm $\sum_{i\in \Lambda} \left\|\delta_i(O)\right\|$.

For ground state preparation, we first modify the definition of the oscillator norm as follows:
\begin{equation}\label{eqn:oscillator}
\tnorm{O}:=\sum_{i\in \Lambda} \left\|\delta_i\circ P_i(O)\right\|+\left\|\delta_i\circ Q_i(O)\right\|\,.
\end{equation}
Here
\begin{equation}\label{eqn:X_1}
  P_i(O)=\ket{0_i}\bra{0_i}\bra{0_i}O\ket{0_i}+\ket{1_i}\bra{1_i}\bra{1_i}O\ket{1_i}\,,
  \end{equation}
  and
  \begin{equation}\label{eqn:X_2}
  Q_i(O)=\ket{0_i}\bra{1_i}\bra{0_i}O\ket{1_i}+\ket{1_i}\bra{0_i}\bra{1_i}O\ket{0_i}\,,
\end{equation}
which will be used to measure the progress of the dynamics along the diagonal and off-diagonal directions, respectively.

Then using the characterization of the trace distance via observables, the trace distance between $\rho(t)$ and $\sigma=|\psi_0\rangle \langle\psi_0|$ can be bounded as
\begin{equation}\label{eqn:observation_convergence}
\begin{aligned}
  &\left\|\rho(t)-\sigma\right\|_1=\sup_{\|O\|\leq 1} \Tr\left(O\left(\rho(t)-\sigma\right)\right)\\
  \leq  &\sup_{\|O\|\leq 1}\left\|O(t)-\Tr(O(t))/2^N\right\|\left\|\rho(0)-\sigma\right\|_1\\
  \leq &\sup_{\|O\|\leq 1}\tnorm{O(t)}\left\|\rho(0)-\sigma\right\|_1.
\end{aligned}
\end{equation}

Now, to prove Theorem \ref{thm:rapid_mixing_2D_TFIM_rigo}, it suffices to prove the following proposition. We will prove this proposition after giving the proof of Theorem \ref{thm:rapid_mixing_2D_TFIM_rigo}.
\begin{prop}\label{prop:rapid_mixing}
Under the conditions of Theorem \ref{thm:rapid_mixing_2D_TFIM_rigo}, for any observable $O$ such that $\|O\|\leq 1$, we have
  \begin{equation}\label{eqn:O}
  \tnorm{O(t)}\leq \tnorm{O(0)}\exp(-t/4).
  \end{equation}
\end{prop}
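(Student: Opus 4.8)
\emph{Proof idea.} The plan is to establish the contraction estimate $\frac{d}{dt}\tnorm{O(t)}\le-\tfrac14\tnorm{O(t)}$ and conclude \eqref{eqn:O} by Gr\"onwall. I would split $\mathcal{L}^\dagger=\mathcal{L}_0^\dagger+\mathcal{E}^\dagger$, where $\mathcal{L}_0^\dagger=\sum_a\mathcal{L}_{a,0}^\dagger$ is the adjoint Lindbladian built from the unperturbed jump operators $K_a^0$ (those obtained from $H_0=-\sum_i Z_i$), and $\mathcal{E}^\dagger=\sum_a\mathcal{E}_a^\dagger$ with $\mathcal{E}_a^\dagger:=\mathcal{L}_{a,\varepsilon}^\dagger-\mathcal{L}_{a,0}^\dagger$, and control the two pieces separately. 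For the unperturbed dynamics, Thouless's theorem gives $e^{iH_0 s}X_ae^{-iH_0 s}=\cos(2s)X_a+\sin(2s)Y_a$, so the normalization $\hat f(-2)=\hat u(-1/4)\hat v(-4)=1$ from \cref{assum:f_freq} forces $K_a^0=\ket{0_a}\!\bra{1_a}$; hence $\mathcal{L}_0^\dagger$ is a sum of commuting single-site amplitude-damping adjoints. A one-qubit computation shows $\mathcal{L}_{i,0}^\dagger$ sends $I_i\mapsto0$, $Z_i\mapsto I_i-Z_i$, $X_i\mapsto-\tfrac12X_i$, $Y_i\mapsto-\tfrac12Y_i$. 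Since the range of $\delta_i\circ P_i$ consists of operators of the form $Z_i\otimes(\cdot)$ and that of $\delta_i\circ Q_i$ of operators $X_i\otimes(\cdot)+Y_i\otimes(\cdot)$, this yields the superoperator identities $\delta_i P_i\circ\mathcal{L}_{i,0}^\dagger=-\delta_i P_i$ and $\delta_i Q_i\circ\mathcal{L}_{i,0}^\dagger=-\tfrac12\delta_i Q_i$, while the remaining generators $\mathcal{M}_i^\dagger:=\sum_{a\ne i}\mathcal{L}_{a,0}^\dagger$ act on sites disjoint from $i$, hence commute with $\delta_i P_i,\delta_i Q_i$ and generate a unital completely positive (so operator-norm contractive) semigroup. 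Differentiating $\|\delta_i P_i(O(t))\|$ and using the elementary bound $\|e^{h(\mathcal{M}_i^\dagger-\lambda I)}v\|\le e^{-\lambda h}\|v\|$ already gives, for $\varepsilon=0$, the rates $1$ and $\tfrac12$, i.e. $\tnorm{O(t)}\le e^{-t/2}\tnorm{O(0)}$.

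\textbf{Quasilocality of the perturbation.} Expanding $e^{iH_\varepsilon s}X_ae^{-iH_\varepsilon s}$ around $H_0$ by the Duhamel identity, and using that $e^{iH_0 u}X_ae^{-iH_0 u}$ remains supported on site $a$ so that $[\sum_j h_j,\cdot]$ contributes at most $l$ terms, one gets $\|K_a^\varepsilon-K_a^0\|=\mathcal{O}\!\big(\varepsilon l\int_{\RR}|s|\,|f(s)|\,ds\big)=\mathcal{O}(\varepsilon l)$, the integral being finite because the Gevrey assumption makes $f$ decay stretched-exponentially. A Lieb--Robinson bound for $H_\varepsilon$ (whose velocity is itself $\mathcal{O}(\varepsilon\,\mathrm{poly}(r_0,l))$ since $H_0$ is single-site), combined with the rapid decay of $f$, shows $K_a^\varepsilon$ is quasilocal around $a$: truncating it to $\mathcal{B}_a(r)$ costs only $\mathcal{O}(e^{-cr^{1/\alpha}})$ in operator norm. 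Since $\|K_a^\varepsilon\|=\mathcal{O}(1)$ and $\mathcal{E}_a^\dagger(I)=0$ exactly (a difference of adjoint Lindbladians annihilates the identity), $\mathcal{E}_a^\dagger$ is an identity-annihilating superoperator quasilocal around $a$, of norm $\mathcal{O}(\varepsilon\,\mathrm{poly}(r_0,l))$ with stretched-exponential tails.

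\textbf{Closing the loop.} Combining the two previous points, $\frac{d}{dt}\|\delta_i P_i(O(t))\|\le-\|\delta_i P_i(O(t))\|+\|\delta_i P_i(\mathcal{E}^\dagger O(t))\|$ and likewise for $Q_i$ with rate $\tfrac12$. To bound $\|\delta_i P_i(\mathcal{E}_a^\dagger O(t))\|$ I would truncate $\mathcal{E}_a^\dagger$ into radial shells around $a$; each shell is supported in some $\mathcal{B}_a(r)$ and annihilates $I_{\mathcal{B}_a(r)}$, hence sees only $\delta_{\mathcal{B}_a(r)}(O(t))$, and the telescoping bound $\|\delta_R(O)\|\le\sum_{k\in R}\|\delta_k(O)\|\le\sum_{k\in R}(\|\delta_k P_k(O)\|+\|\delta_k Q_k(O)\|)$ (using $P_k+Q_k=\mathrm{id}$ on site $k$, and that $\delta_i P_i,\delta_i Q_i$ commute with everything supported off site $i$) converts this into oscillator-norm contributions localized near $a$; the super-exponential shell weights absorb the $\Theta(r^D)$ growth of $|\mathcal{B}_a(r)|$, which yields $\sum_a\|\delta_i P_i(\mathcal{E}_a^\dagger O(t))\|\le C\varepsilon\sum_k c_{ik}\big(\|\delta_k P_k(O(t))\|+\|\delta_k Q_k(O(t))\|\big)$ with $\sup_i\sum_k c_{ik}\le1$, $\sup_k\sum_i c_{ik}\le1$, and $C=\widetilde{\mathcal{O}}\big((r_0l)^{\Theta(D)}\big)$ independent of $N$. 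Summing over $i$ and the two directions, and using $\|\delta_i P_i(\cdot)\|\ge\tfrac12\|\delta_i P_i(\cdot)\|$,
\[
\frac{d}{dt}\tnorm{O(t)}\ \le\ -\tfrac12\tnorm{O(t)}+2C\varepsilon\,\tnorm{O(t)}.
\]
Choosing $\varepsilon^*:=(8C)^{-1}=\widetilde{\mathcal{O}}\big((r_0l)^{-\Theta(D)}\big)$ makes the right-hand side at most $-\tfrac14\tnorm{O(t)}$ whenever $\varepsilon<\varepsilon^*$, and Gr\"onwall yields \eqref{eqn:O}. (The rate $\tfrac14$ is deliberately generous: the unperturbed rate is $\tfrac12$, leaving slack for the perturbation.)

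\textbf{Main obstacle.} The delicate step is the last one: turning ``quasilocal $+$ identity-annihilating $+$ $\mathcal{O}(\varepsilon)$'' into an influence kernel $c_{ik}$ that is simultaneously uniformly $\ell^1$-bounded and carries $N$-independent constants. This requires matching the Lieb--Robinson velocity with the Gevrey parameters so that the radial truncation error of $K_a^\varepsilon$ beats the volume growth $|\mathcal{B}_a(r)|$, and careful bookkeeping of the nested projectors $\delta_i P_i$, $\delta_i Q_i$ so that no constant depending on the (large) norm of $\mathcal{M}_i^\dagger$ survives the differential inequality. Everything else reduces to an explicit single-qubit calculation or a standard Duhamel/Lieb--Robinson estimate.
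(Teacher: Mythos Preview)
Your $\varepsilon=0$ analysis---the identities $\delta_iP_i\circ\mathcal{L}_{i,0}^\dagger=-\delta_iP_i$, $\delta_iQ_i\circ\mathcal{L}_{i,0}^\dagger=-\tfrac12\delta_iQ_i$, and the contractivity of the remaining single-site generators---is correct and matches the paper. The gap is in the perturbative step, specifically in your choice of ``contractive part''. By taking $\mathcal{M}_i^\dagger=\sum_{a\ne i}\mathcal{L}_{a,0}^\dagger$ as the drift and pushing all of $\mathcal{E}^\dagger=\sum_a(\mathcal{L}_{a,\varepsilon}^\dagger-\mathcal{L}_{a,0}^\dagger)$ into the error, you must control $\sum_a\|\delta_iP_i(\mathcal{E}_a^\dagger O)\|$. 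For $a$ far from $i$, quasilocality of $\mathcal{E}_a^\dagger$ only gives $\delta_iP_i\mathcal{E}_a^\dagger O\approx\mathcal{E}_a^\dagger\,\delta_iP_iO$, and $\|\mathcal{E}_a^\dagger\,\delta_iP_iO\|$ is of order $\eta(\varepsilon)\|\delta_iP_iO\|$ with \emph{no decay} in $d(i,a)$: $\delta_iP_iO$ is a global operator and $\mathcal{E}_a^\dagger$ acts on it nontrivially wherever it sits. Your ``annihilates identity'' shell argument does not rescue this---it localizes the bound to sites $k\in\mathcal{B}_a(r)$, but that bound is then independent of $i$, so summing over the $\Theta(N)$ sites $i$ outside $\mathcal{B}_a(r)$ still produces a factor $N$. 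Hence the influence kernel cannot satisfy $\sup_k\sum_ic_{ik}\le1$, and the threshold you obtain is $\varepsilon^*=\mathcal{O}(1/N)$ rather than $N$-independent.

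The paper avoids this by keeping the \emph{full perturbed} generator $\sum_{j\ne i}\mathcal{L}_{j,\varepsilon}^\dagger$ as the contractive part (it is still an adjoint Lindbladian, so its semigroup is unital CP and operator-norm contractive). The error at site $i$ then consists only of the single local term $\delta_iP_i(\mathcal{L}_{i,\varepsilon}^\dagger-\mathcal{L}_{i,0}^\dagger)O$ together with the \emph{commutator} $[\delta_iP_i,\sum_{j\ne i}\mathcal{L}_{j,\varepsilon}^\dagger]O$. This commutator structure is exactly what you are missing: for the truncation $\mathcal{L}_{j,\varepsilon}^{(r)\dagger}$ supported in $\mathcal{B}_j(r)\not\ni i$ one has $[\delta_iP_i,\mathcal{L}_{j,\varepsilon}^{(r)\dagger}]=0$ identically, so only the quasilocal tail of $\mathcal{L}_{j,\varepsilon}^\dagger$ that actually reaches site $i$ contributes, giving genuine decay in $d(i,j)$ and a kernel summable in both indices. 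In effect, your splitting discards the piece $\sum_{a\ne i}\mathcal{E}_a^\dagger\,\delta_iP_iO$ that should have been absorbed back into the contractive drift.
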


\vspace{1em}

\begin{proof}[Proof of Theorem \ref{thm:rapid_mixing_2D_TFIM_rigo}]
Using the relation between the $1$-norm and the trace with observables,
\begin{equation}
\left\|\rho(t)-\sigma\right\|_1=\sup_{\|O\|\leq 1} \Tr\left(O\left(\rho(t)-\sigma\right)\right)\leq  \sup_{\|O\|\leq 1}\left\|O(t)-\Tr(O(t))/2^N\right\|\left\|\rho(0)-\sigma\right\|_1\,.
\end{equation}
We then notice
\[
O(t)-\Tr(O(t))/2^N=\delta_{1}(O(t))+\sum^N_{i=2} \delta_{i}\circ \left(\frac{I_{\{1,\cdots,i-1\}}}{2^{i-1}}\otimes \mathrm{Tr}_{\{1,\cdots,i-1\}}(O(t))\right)\,.
\]
Thus, we have
\[
\begin{aligned}
&\left\|O(t)-\Tr(O(t))/2^N\right\|\\
\leq &\left\|\delta_{1}(O(t))\right\|+\sum^N_{i=2} \left\|\delta_{i}\circ \left(\frac{I_{\{1,\cdots,i-1\}}}{2^{i-1}}\otimes \mathrm{Tr}_{\{1,\cdots,i-1\}}(O(t))\right)\right\|\\
=&\left\|\delta_{1}(O(t))\right\|+\sum^N_{i=2} \left\|\left(\frac{I_{\{1,\cdots,i-1\}}}{2^{i-1}}\otimes \mathrm{Tr}_{\{1,\cdots,i-1\}}\right)\circ \delta_{i}(O(t))\right\|\\
\leq &\sum^N_{i=1}\left\|\delta_{i}(O(t))\right\|\leq \sum^N_{i=1}\left\|\delta_{i}\circ P_i(O(t))\right\|+\left\|\delta_{i}\circ Q_i(O(t))\right\|=\tnorm{O(t)}\,.
\end{aligned}
\]
This provides a proof of \eqref{eqn:observation_convergence}. Next, according to Proposition \ref{prop:rapid_mixing} and $\left\|\rho-\ket{\psi_0}\bra{\psi_0}\right\|_1\leq \revvv{2}$, we have
\[
\left\|\exp(\mathcal{L}t)\rho-\ket{\psi_0}\bra{\psi_0}\right\|_1\leq \revvv{2}\tnorm{O(0)}\exp(-t/4)\leq 8\revvv{N}\exp(-t/4)\,,
\]
where we use $\tnorm{O(0)}\leq 4N$ in the last inequality. This concludes the proof.
\end{proof}

In the following part of the section, we focus on the proof of Proposition \ref{prop:rapid_mixing}. First, the jump operator is
\begin{equation}\label{eqn:K_j_eps}
K_{j,\varepsilon}=\int^\infty_{-\infty}f(t)\exp(iH_\varepsilon t)A_j\exp(-iH_\varepsilon t)\mathrm{d}t=\int^\infty_{-\infty}f(t)\exp(iH_\varepsilon t)X_j\exp(-iH_\varepsilon t)\mathrm{d}t\,,
\end{equation}
and the corresponding dissipative term in the Lindbladian is
\begin{equation}\label{eqn:L_j_eps}
\mathcal{L}_{j,\varepsilon}(\rho)=K_{j,\varepsilon}\rho
K^\dagger_{j,\varepsilon}-\frac{1}{2}\left\{K^\dagger_{j,\varepsilon}K_{j,\varepsilon},\rho\right\}\,.
\end{equation}

When $\varepsilon=0$,
\begin{equation}\label{eqn:K_0}
K_{j,0}=\int^\infty_{-\infty}f(t)\exp(iH_0 t)X_j\exp(-iH_0 t)\mathrm{d}t=\hat{f}(-2)\ket{0_j}\bra{1_j}=\ket{0_j}\bra{1_j}\,.
\end{equation}

We start with the evolution of the observable $P_i(O(t))$:
\[
\begin{aligned}
\partial_t P_i(O(t))=P_i\left(\mathcal{L}^\dagger_{i,\varepsilon}(O(t))\right)+P_i\left(\sum_{j\neq i}\mathcal{L}^\dagger_{j,\varepsilon}(O(t))\right)
\end{aligned}
\]
which implies
\[
\begin{aligned}
&\partial_t \delta_{i}\circ P_i(O(t))=\delta_i\circ P_i \left(\mathcal{L}^\dagger_{i,\varepsilon}(O(t))\right)+\delta_i\circ P_i \left(\sum_{j\neq i}\mathcal{L}^\dagger_{j,\varepsilon}(O(t))\right)\\
=&\delta_i\circ P_i \left(\mathcal{L}^\dagger_{i,0}(O(t))\right)+\delta_i\circ P_i \left(\mathcal{L}^\dagger_{i,\varepsilon}(O(t))-\mathcal{L}^\dagger_{i,0}(O(t))\right)+\sum_{j\neq i}\mathcal{L}^\dagger_{j,\varepsilon}\left(\delta_i\circ P_i \left(O(t)\right)\right)+\left[\delta_i\circ P_i,\sum_{j\neq i}\mathcal{L}^\dagger_{j,\varepsilon}\right]\left(O(t)\right)\\
=&\underbrace{-\delta_i\circ
P_i(O(t))}_{\text{decaying part}}+\underbrace{\sum_{j\neq i}\mathcal{L}^\dagger_{j,\varepsilon}\left(\delta_i\circ P_i \left(O(t)\right)\right)}_{\text{contractive part}}+\delta_i\circ P_i \left(\mathcal{L}^\dagger_{i,\varepsilon}(O(t))-\mathcal{L}^\dagger_{i,0}(O(t))\right)+\left[\delta_i\circ P_i,\sum_{j\neq i}\mathcal{L}^\dagger_{j,\varepsilon}\right]\left(O(t)\right)\,.
\end{aligned}
\]

Here, the second term is contractive in the sense that $\left\|\exp\left(\sum_{j\neq i}\mathcal{L}^\dagger_{j,\varepsilon} t\right)\right\|_{\infty\to\infty}\leq 1$ for any $t>0$, which ensures that the perturbation error does not grow exponentially with time $t$. In the last equality, we use the fact that $\delta_i\circ P_i \left(\mathcal{L}^\dagger_{i,0}(O(t))\right)=-\delta_i\circ P_i(O(t))$ by direct calculation. Following the similar calculations in \cite[Appendix A. Section I (A6-A8)]{rouze2024optimal}, we obtain
\begin{equation}\label{eqn:P_bound}
\begin{aligned}
&\left\|\delta_{i}\circ P_i(O(t))\right\|\\
&\leq \exp(-t)\left\|\delta_{i}\circ P_i(O)\right\|+\int^t_{0}\exp(s-t)\left\|\delta_i\circ P_i \left(\mathcal{L}^\dagger_{i,\varepsilon}(O(s))-\mathcal{L}^\dagger_{i,0}(O(s))\right)+\left[\delta_i\circ P_i,\sum_{j\neq i}\mathcal{L}^\dagger_{j,\varepsilon}\right]\left(O(s)\right)\right\|_{\infty}\mathrm{d}s
\end{aligned}
\end{equation}
Similar to the above calculation, for $Q_i(O(t))$, we also have
\[
\begin{aligned}
&\partial_t \delta_{i}\circ Q_i(O(t))=\delta_i\circ Q_i \left(\mathcal{L}^\dagger_{i,\varepsilon}(O(t))\right)+\delta_i\circ Q_i \left(\sum_{j\neq i}\mathcal{L}^\dagger_{j,\varepsilon}(O(t))\right)\\
=&\delta_i\circ Q_i \left(\mathcal{L}^\dagger_{i,0}(O(t))\right)+\delta_i\circ Q_i \left(\mathcal{L}^\dagger_{i,\varepsilon}(O(t))-\mathcal{L}^\dagger_{i,0}(O(t))\right)+\sum_{j\neq i}\mathcal{L}^\dagger_{j,\varepsilon}\left(\delta_i\circ Q_i \left(O(t)\right)\right)+\left[\delta_i\circ Q_i,\sum_{j\neq i}\mathcal{L}^\dagger_{j,\varepsilon}\right]\left(O(t)\right)\\
=&\underbrace{-\frac{1}{2}\delta_i\left(Q_i(O(t))\right)}_{\text{decaying part}}+\underbrace{\sum_{j\neq i}\mathcal{L}^\dagger_{j,\varepsilon}\left(\delta_i\circ Q_i \left(O(t)\right)\right)}_{\text{contractive part}}+\delta_i\circ Q_i \left(\mathcal{L}^\dagger_{i,\varepsilon}(O(t))-\mathcal{L}^\dagger_{i,0}(O(t))\right)+\left[\delta_i\circ Q_i,\sum_{j\neq i}\mathcal{L}^\dagger_{j,\varepsilon}\right]\left(O(t)\right)\,.
\end{aligned}
\]
This implies
\begin{equation}\label{eqn:Q_bound}
\begin{aligned}
&\left\|\delta_i\left(Q_i(O(t))\right)\right\|\\
&\leq \exp(-\revvv{t/2})\left\|\delta_i\left(Q_i(O)\right)\right\|+\int^t_{0}\exp(\revvv{(s-t)/2})\left\|\delta_i\circ Q_i \left(\mathcal{L}^\dagger_{i,\varepsilon}(O(s))-\mathcal{L}^\dagger_{i,0}(O(s))\right)+\left[\delta_i\circ Q_i,\sum_{j\neq i}\mathcal{L}^\dagger_{j,\varepsilon}\right]\left(O(s)\right)\right\|_{\infty}\mathrm{d}s\,.
\end{aligned}
\end{equation}
Combining \eqref{eqn:P_bound} and \eqref{eqn:Q_bound}, we obtain 
\begin{equation}\label{eqn:PQ_bound}
\begin{aligned}
&\left\|\delta_i\left(P_i(O(t))\right)\right\|+\left\|\delta_i\left(Q_i(O(t))\right)\right\|\\
\leq &\exp(-t/2)\left(\left\|\delta_i\left(P_i(O)\right)\right\|+\left\|\delta_i\left(Q_i(O)\right)\right\|\right)\\
&+\int^t_{0}\exp(s-t)\left\|\delta_i\circ P_i \left(\mathcal{L}^\dagger_{i,\varepsilon}(O(s))-\mathcal{L}^\dagger_{i,0}(O(s))\right)+\left[\delta_i\circ P_i,\sum_{j\neq i}\mathcal{L}^\dagger_{j,\varepsilon}\right]\left(O(s)\right)\right\|_{\infty}\mathrm{d}s\\
&+\int^t_{0}\exp(\revvv{(s-t)/2})\left\|\delta_i\circ Q_i \left(\mathcal{L}^\dagger_{i,\varepsilon}(O(s))-\mathcal{L}^\dagger_{i,0}(O(s))\right)+\left[\delta_i\circ Q_i,\sum_{j\neq i}\mathcal{L}^\dagger_{j,\varepsilon}\right]\left(O(s)\right)\right\|_{\infty}\mathrm{d}s.
\end{aligned}
\end{equation}

To bound the last two terms in \cref{eqn:PQ_bound}. We introduce a lemma to bound the second and third terms in \eqref{eqn:PQ_bound}. First, given lattice $i$ and radius $r>0$, we define $H^{(i,r)}_{\varepsilon}$
as the Hamiltonian that consists of
the Hamiltonian terms of $H_{\varepsilon}$ on a ball of radius $r$ centered at lattice $i$. $K^{r}_{i,\varepsilon}$, $\mathcal{L}^{r}_{i,\varepsilon}$ are defined according to \eqref{eqn:K_j_eps} and \eqref{eqn:L_j_eps} with $H^{(i,r)}_{\varepsilon}$. Then we have the following lemma:
\begin{lem}\label{lem:difference}
Under conditions of Theorem \ref{thm:rapid_mixing_2D_TFIM_rigo} and let $J=r_0^Dl$, for any $r\geq 1$, we have
\begin{equation}
\begin{aligned}
& \left\|\mathcal{L}_{i,\varepsilon}^{r \dagger}-\mathcal{L}_{i,\varepsilon}^{r-1 \dagger}\right\|_{\infty \rightarrow \infty} \leq \xi(r)=\mathcal{O}\left(\frac{1}{2^r}+\exp(-C_{2,f} (r/(4Je))^{1/\alpha}/2)\right), \\
& \left\|\mathcal{L}_{i,\varepsilon}^{\dagger}-\mathcal{L}_{i,0}^{\dagger}\right\|_{\infty \rightarrow \infty} \leq \eta(\varepsilon)=\Or\left(\varepsilon (J\log^\alpha(1/\varepsilon)+1)^{D}l\right)\,.
\end{aligned}
\end{equation}
\end{lem}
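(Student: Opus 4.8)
The plan is to reduce both inequalities to estimates on the jump operators themselves, and then to control those by combining a Lieb--Robinson bound for the (truncated, local) Hamiltonians with the super-exponential decay of the filter $f$ that is forced by the Gevrey hypothesis on $\hat f$ in \cref{assum:f_freq}. First I would record the elementary reduction: for any two jump operators $K_a,\widetilde K_a$ one has $\|\mc{L}_a^\dagger-\widetilde{\mc{L}}_a^\dagger\|_{\infty\to\infty}\le 2(\|K_a\|+\|\widetilde K_a\|)\,\|K_a-\widetilde K_a\|$, obtained by telescoping the products $K^\dagger OK$ and the anticommutator $\{K^\dagger K,O\}$. Since $\|X_i\|=1$ and the Gevrey assumption gives $\|f\|_{L^1(\RR)}=\Or(1)$, every jump operator in sight has norm $\Or(1)$, so it remains only to bound $\|K_{i,\varepsilon}^r-K_{i,\varepsilon}^{r-1}\|$ and $\|K_{i,\varepsilon}-K_{i,0}\|$.

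For the first (telescoping) bound I would set $V_r:=H_\varepsilon^{(i,r)}-H_\varepsilon^{(i,r-1)}$, the sum of the Hamiltonian terms of $H_\varepsilon$ supported in the shell $\mc{B}_i(r)\setminus\mc{B}_i(r-1)$, so that $\|V_r\|=\Or(r^{D-1})$ by $(r_0,l)$-locality, and then use a Duhamel identity to write $K_{i,\varepsilon}^r-K_{i,\varepsilon}^{r-1}$ as $\int f(t)$ times a time integral of $e^{iH_\varepsilon^{(i,r)}(t-s)}[V_r,X_i^{(r-1)}(s)]e^{-iH_\varepsilon^{(i,r)}(t-s)}$, where $X_i^{(r-1)}(s)$ denotes Heisenberg evolution of $X_i$ under $H_\varepsilon^{(i,r-1)}$. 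I would split the $t$-integral at $|t|=T_r:=r/(4Je)$, chosen so that the Lieb--Robinson light cone of $X_i^{(r-1)}(s)$ stays inside radius $r/2$ throughout $|t|\le T_r$: on $|t|\le T_r$ the light cone never reaches $\operatorname{supp}(V_r)$, which forces $\|[V_r,X_i^{(r-1)}(s)]\|\lesssim\|V_r\|\,e^{-\mu r/2}$ and, after integrating and absorbing the polynomial prefactor, produces the $\Or(2^{-r})$ term; on $|t|>T_r$ I would use the trivial bound $\|X_i^{(r)}(t)-X_i^{(r-1)}(t)\|\le 2$ together with the Paley--Wiener/Gevrey tail estimate $\int_{|t|>T}|f(t)|\,\mathrm{d}t=\Or(\exp(-C_{2,f}T^{1/\alpha}/2))$ (a consequence of $\hat f\in\mc{G}^\alpha$; see \cite{AdwanHoepfnerRaich2017,Gus_2019}), which with $T=T_r$ gives the second term of $\xi(r)$.

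For the second (perturbative) bound I would pick the truncation radius $R=\Theta(J\log^\alpha(1/\varepsilon))$, chosen exactly so that $\sum_{r>R}\xi(r)=\Or(\varepsilon)$ (the sum is dominated, up to a polynomial factor, by its first term, using $1/\alpha<1$). Telescoping, $\|K_{i,\varepsilon}-K_{i,0}\|\le\|K_{i,\varepsilon}-K_{i,\varepsilon}^R\|+\|K_{i,\varepsilon}^R-K_{i,0}^R\|+\|K_{i,0}^R-K_{i,0}\|$, where the first term is $\le\sum_{r>R}\xi(r)=\Or(\varepsilon)$ and the last term vanishes because for $H_0=-\sum_i Z_i$ the Heisenberg evolution of $X_i$ only rotates it on site $i$, so $K_{i,0}^R=K_{i,0}=\ket{0_i}\bra{1_i}$. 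For the middle term, $K_{i,\varepsilon}^R$ and $K_{i,0}^R$ are built from Hamiltonians that differ by $\varepsilon H_1^{(i,R)}$, which contains $\Or(R^Dl)$ terms of norm $\le 1$; a one-line Duhamel estimate then gives $\|K_{i,\varepsilon}^R-K_{i,0}^R\|\le 2\varepsilon\|H_1^{(i,R)}\|\int_{\RR}|f(t)|\,|t|\,\mathrm{d}t=\Or(\varepsilon R^Dl)=\Or(\varepsilon(J\log^\alpha(1/\varepsilon)+1)^Dl)$. Combining with the reduction of the first paragraph yields the claimed $\xi(r)$ and $\eta(\varepsilon)$.

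The main obstacle I anticipate is the interplay in the second paragraph between the Lieb--Robinson exponential tail (in $v|t|-r$) and the faster-than-exponential decay $\exp(-c|t|^{1/\alpha})$ of $f$: one must choose the splitting time $T_r\asymp r/J$ and track all the $J$-, $l$-, $r_0$-, $\alpha$- and $C_{2,f}$-dependent constants carefully enough that the two contributions really combine into the stated $\xi(r)$ --- in particular that the short-time Lieb--Robinson residual can be dominated by $2^{-r}$ after absorbing $\|V_r\|=\Or(r^{D-1})$, which requires the Lieb--Robinson decay constant to be at least of order $\log 2$ in the relevant regime. Verifying that the resulting $\xi(r)$ sums to $\Or(\varepsilon)$ at the prescribed $R=\Theta(J\log^\alpha(1/\varepsilon))$ is then the only remaining bookkeeping; everything else (the Duhamel expansions, the counting of shell and ball terms, the $L^1$ bounds on $f$) is routine.
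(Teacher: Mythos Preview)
Your proposal is correct and follows essentially the same route as the paper: reduce to $\|K-K'\|$ via the telescoping bound on $\mc{L}^\dagger$, split the time integral at $T_r=r/(4Je)$, control the long-time piece by the Gevrey tail of $f$, and for $\eta(\varepsilon)$ pass through a truncated Hamiltonian of radius $R=\Theta(J\log^\alpha(1/\varepsilon))$ and apply a one-line Duhamel bound $\|K_{i,\varepsilon}^R-K_{i,0}^R\|\le\varepsilon\|H_1^{(i,R)}\|\int|t f(t)|\,\mathrm{d}t=\Or(\varepsilon R^Dl)$.

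The only substantive difference---and it is exactly what dissolves your stated ``main obstacle''---is that the paper does \emph{not} use your Duhamel-plus-exponential-commutator estimate $\|[V_r,X_i^{(r-1)}(s)]\|\lesssim\|V_r\|e^{-\mu r/2}$ for the short-time piece. Instead it invokes the factorial (small-time) form of the light-cone bound directly on the difference of Heisenberg evolutions,
\[
\bigl\|e^{iH_\varepsilon^{(i,r)}t}X_ie^{-iH_\varepsilon^{(i,r)}t}-e^{iH_\varepsilon^{(i,r-1)}t}X_ie^{-iH_\varepsilon^{(i,r-1)}t}\bigr\|\le\min\Bigl\{\tfrac{(2J|t|)^r}{r!},\,2\Bigr\},
\]
and then integrates against $|f(t)|$. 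On $|t|\le r/(4Je)$ one has $(2J|t|)^r/r!\le (r/(2e))^r/r!=\Or(2^{-r})$ by Stirling, so the $2^{-r}$ falls out with no reference to a Lieb--Robinson decay rate $\mu$, no polynomial prefactor $\|V_r\|=\Or(r^{D-1})$ to absorb, and no extra $s$-integration. This makes the bookkeeping you were worried about disappear entirely; the rest of your plan matches the paper.
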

With Lemma \ref{lem:difference}, we are ready to provide the proof of Proposition \ref{prop:rapid_mixing}.
\begin{proof}[Proof of Proposition \ref{prop:rapid_mixing}]
The proof follows a similar strategy to that in \cite[Appendix A. Section I]{rouze2024optimal}. We first claim that, there exists $\kappa^c_{i}$ and $\gamma^c_i$ such that
\begin{equation}\label{L_eps_commute_difference}
  \left\|\left[\delta_i\circ P_i,\mathcal{L}^\dagger_{j,\varepsilon}\right]\left(O\right)\right\|,\quad \left\|\left[\delta_i\circ Q_i,\mathcal{L}^\dagger_{j,\varepsilon}\right]\left(O\right)\right\|\leq\sum_k \kappa^k_{i,j}\left(\|\delta_k\circ P_k(O)\|+\|\delta_k\circ Q_k(O)\|\right)
  \end{equation}
  and
\begin{equation}\label{L_eps_difference}
\left\|\delta_i\circ P_i \left(\mathcal{L}^\dagger_{i,\varepsilon}(O)-\mathcal{L}^\dagger_{i,0}(O)\right)\right\|, \left\|\delta_i\circ Q_i \left(\mathcal{L}^\dagger_{i,\varepsilon}(O)-\mathcal{L}^\dagger_{i,0}(O)\right)\right\|\leq \sum_k \gamma^k_{i}\left(\|\delta_k\circ P_k(O)\|+\|\delta_k\circ Q_k(O)\|\right)
\end{equation}
with $\sum_{i,j\neq i}\kappa^k_{i,j}+\sum_i\gamma^k_i$ is smaller than a constant that is independent of the system size.

Denote by $d(i,j)$ the Manhattan distance  between sites $i,j\in \Lambda$ and
$\Gamma(r_0)=\sum_{r\geq r_0}\xi(r)=\mathcal{O}\left(2^{-r_0}\right)$ when $r_0\geq 4kl$. We first show~\eqref{L_eps_commute_difference} and calculate $\kappa$. Following the calculations in \cite[Appendix B]{rouze2024optimal} and letting $B_i(r)=\{j|d(i,j)\leq r\}$, we get
\begin{equation}
\begin{aligned}
&\left\|\left[\delta_i\circ P_i, \mathcal{L}_{j,\varepsilon}^{ \dagger}\right](O)\right\|_{\infty}  =\left\|\left[\delta_i\circ P_i,\left(\mathcal{L}_{j,\varepsilon}^{\dagger}-\mathcal{L}_{j,\varepsilon}^{d(i,j) \dagger}\right)\right](O)\right\|_{\infty} \\
 \leq &\left\|\mathcal{L}_{j,\varepsilon}^{\dagger}-\mathcal{L}_{j,\varepsilon}^{d(i,j) \dagger}\right\|_{\infty \rightarrow \infty}\left\|\delta_i\circ P_i(O))\right\|_{\infty}+2\left\|\left(\mathcal{L}_{j,\varepsilon}^{\dagger}-\mathcal{L}_{j,\varepsilon}^{d(i,j) \dagger}\right)(O)\right\|_{\infty} \\
 \leq &\sum_{r>d(i,j)} \xi(r)\left\|\delta_i\circ P_i(O))\right\|_{\infty}+2 \sum_{r>d(i,j)}\left\|\left((\mathcal{L}^r_{j,\varepsilon})^\dagger-(\mathcal{L}^{r-1}_{j,\varepsilon})^\dagger\right) \delta_{B_j(r)}(O)\right\|_{\infty} \\
 \leq &\Gamma(d(i,j))\left\|\delta_i\circ P_i(O))\right\|_{\infty}+2 \sum_{r>d(i,j)} \zeta(r) \sum_{d(j,k) \leq r}(\left\|\delta_k\circ P_k(O))\right\|_{\infty}+\left\|\delta_k\circ Q_k(O))\right\|_{\infty}) \\
 =&\Gamma(d(i,j))\left\|\delta_i\circ P_i(O))\right\|_{\infty}+2 \sum_k(\left\|\delta_k\circ P_k(O))\right\|_{\infty}+\left\|\delta_k\circ Q_k(O))\right\|_{\infty})\Gamma(\max (d(i,j), d(j,k))) .
\end{aligned}
\end{equation}
Let $r_0>0$, to be determined later. For $d(i,j)\geq r_0$, we have
\begin{equation}\label{eqn:k_1}
\kappa^k_{i,j}=\left\{
  \begin{aligned}
   &\Gamma(d(i,j)),\quad k=i\\
    &2\Gamma(\max (d(i,j), d(j,k))),\quad k\neq i.
  \end{aligned}
\right.
\end{equation}
For $d(i,j)<r_0$, we can bound the commutator as follows:
\begin{equation}
\begin{aligned}
& \left\|\left[\delta_i\circ P_i, \mathcal{L}_{j,\varepsilon}^{ \dagger}\right](O)\right\|_{\infty} \leq\left\|\left[\delta_i\circ P_i,\left(\mathcal{L}_{j,\varepsilon}^{ \dagger}-\mathcal{L}_{j,\varepsilon}^{r_0,\dagger}\right)\right](O)\right\|_{\infty}+\left\|\left[\delta_i\circ P_i,\left(\mathcal{L}_{j,\varepsilon}^{r_0,\dagger}-\mathcal{L}_{j,0}^{r_0,\dagger}\right)\right](O)\right\|_{\infty} \\
\quad \leq&\left(\eta(\eps)+\Gamma\left(r_0\right)\right)\left\|\delta_i\circ P_i(O))\right\|_{\infty}+2 \sum_{r>r_0} \zeta(r) \sum_{d(k, j) \leq r}(\left\|\delta_k\circ P_k(O))\right\|_{\infty}+\left\|\delta_k\circ Q_k(O))\right\|_{\infty})\\
&+2 \eta(\eps) \sum_{d(k, j) \leq r_0}(\left\|\delta_k\circ P_k(O))\right\|_{\infty}+\left\|\delta_k\circ Q_k(O))\right\|_{\infty}) \\
\quad \leq&\left(\eta(\eps)+\Gamma\left(r_0\right)\right)\left\|\delta_i\circ P_i(O))\right\|_{\infty}+2 \sum_k \Gamma\left(\max \left(r_0, d(k, j)\right)\right)(\left\|\delta_k\circ P_k(O))\right\|_{\infty}+\left\|\delta_k\circ Q_k(O))\right\|_{\infty})\\
&+2 \eta(\eps) \sum_{d(k, j) \leq r_0}(\left\|\delta_k\circ P_k(O))\right\|_{\infty}+\left\|\delta_k\circ Q_k(O))\right\|_{\infty})\,.
\end{aligned}
\end{equation}
Therefore, for $d(i,j)<r_0$ we have
\begin{equation}\label{eqn:k_2}
\kappa^k_{i,j}=\left\{
  \begin{aligned}
   &3\eta(\eps)+3\Gamma\left(r_0\right),\quad d(k,j)\leq r_0\\
    &2\Gamma\left(d(k,j)\right),\quad d(k,j)> r_0
\end{aligned}\right.\,.
\end{equation}

Similarly, we can also show~\eqref{L_eps_difference} and calculate $\gamma$. We have
\begin{equation}
\begin{aligned}
\left\|\delta_i\circ P_i \left(\mathcal{L}^\dagger_{i,\varepsilon}-\mathcal{L}^\dagger_{i,0}\right)(O)\right\|_{\infty}& \leq \left\|\delta_i\circ P_i \left(\mathcal{L}^\dagger_{i,\varepsilon}-\mathcal{L}^{r_0,\dagger}_{i,\varepsilon}\right)(O)\right\|_{\infty}+\left\|\delta_i\circ P_i \left(\mathcal{L}^{r_0,\dagger}_{i,\varepsilon}-\mathcal{L}^{\dagger}_{i,0}\right)(O)\right\|_{\infty} \\
& \leq \eta(\varepsilon) \sum_{d(i, k) \leq r_0}\left\|\delta_k\circ P_k(O))\right\|_{\infty}+2 \sum_{r \geq r_0} \zeta(r) \sum_{d(i,k) \leq r} (\left\|\delta_k\circ P_k(O))\right\|_{\infty}+\left\|\delta_k\circ Q_k(O))\right\|_{\infty}) \\
& \leq \eta(\varepsilon) \sum_{d(i,k) \leq r_0}\left\|\delta_k\circ P_k(O))\right\|_{\infty}\\
&+2 \sum_j \Gamma\left(\max \left(r_0, d(i,k)\right)\right)(\left\|\delta_k\circ P_k(O))\right\|_{\infty}+\left\|\delta_k\circ Q_k(O))\right\|_{\infty}).
\end{aligned}\,.
\end{equation}
This implies
\begin{equation}
  \gamma^k_{i}=\left\{
    \begin{aligned}
     &\eta(\eps)+2\Gamma\left(r_0\right),\quad d(i,k)\leq r_0\\
      &2\Gamma\left(d(i,k)\right),\quad d(i,k)> r_0.
  \end{aligned}\right.
  \end{equation}
Similar to the calculations in \cite[Appendix A.2]{rouze2024optimal}, we get
\begin{equation}
\kappa = \sum_{i,j\neq i} \kappa_{i, j}^k+\sum_i \gamma_i^k \leq 4\left(2 r_0+1\right)^{2 D} \eta(\eps)+20\sum_{m'\geq r_0}\sum_{m\geq m'}(2m+1)^{2D+1}\Gamma(m)\,.
\end{equation}
Choosing $r_0=\Theta(\max\{J,D^2\})$ sufficiently large so that the second term is smaller than $1/8$, we then set $\varepsilon$ small enough so that
\[
\varepsilon=\mathcal{O}\left((2r_0+1)^{-2D}(J\log^\alpha(1/\varepsilon)+1)^{-D}l^{-1}\right)\,,
\]
we have $\kappa<1/4$.

Plugging \eqref{L_eps_commute_difference} and \eqref{L_eps_difference} into \eqref{eqn:PQ_bound}, we have
\begin{equation}
\begin{aligned}
&\left\|\delta_i\left(P_i(O(t))\right)\right\|+\left\|\delta_i\left(Q_i(O(t))\right)\right\|\\
&\leq \exp(-t/2)(\left\|\delta_i\left(P_i(O)\right)\right\|+\left\|\delta_i\left(Q_i(O)\right)\right\|)\\
&+\kappa\int^t_{0}\exp(\revvv{(s-t)/2})(\left\|\delta_i\left(P_i(O(s))\right)\right\|+\left\|\delta_i\left(Q_i(O(s))\right)\right\|)\mathrm{d}s
\end{aligned}
\end{equation}
Because $\kappa<1/4$, we can bound $\left\|\delta_i\left(P_i(O(t))\right)\right\|+\left\|\delta_i\left(Q_i(O(t))\right)\right\|$:
\begin{equation}
\begin{aligned}
\tnorm{O(t)}=\sum_i  \left(\left\|\delta_i\left(P_i(O(t))\right)\right\|+\left\|\delta_i\left(Q_i(O(t))\right)\right\| \right)\leq e^{-t/4}\sum_i\left(\left\|\delta_i\left(P_i(O)\right)\right\|+\left\|\delta_i\left(Q_i(O)\right)\right\|\right)
\end{aligned}
\end{equation}
This concludes the proof.
\end{proof}

Finally, we provide the proof of Lemma \ref{lem:difference}.
\begin{proof}[Proof of Lemma \ref{lem:difference}]
  According to \cite[Lemma 16]{DingChenLin2024} with $\Delta=1/2$ and $S_\omega=4$ (see the detailed explanation about this choice under Assumption~\ref{assum:f_freq}), the filter function in the time domain satisfies $|f(s)|=\mathcal{O}( C_{1,f}\exp(-C_{2,f}|s|^{1/\alpha})$, where $C_{1,f},C_{2,f}$ are constants that only depend on $A_{1,u},A_{2,u},A_{1,v},A_{2,v}$ in Assumption~\ref{assum:f_freq}. This directly implies
   \begin{equation}\label{eqn:L_1_f_bound}
  \|(|s|+1)f(s)\|_{L^1}\leq C\,.
  \end{equation}
  where $C$ only depends on $A_{1,u},A_{2,u},A_{1,v},A_{2,v}$, and $\alpha$.

  We first show the bound of $\left\|\mathcal{L}^\dagger_{i,\eps}-\mathcal{L}^\dagger_{i,0}\right\|_{\infty\rightarrow\infty}$. Note
  \begin{equation}\label{eqn:L_diff}
  \left\|\mathcal{L}^\dagger_{i,\eps}-\mathcal{L}^\dagger_{i,0}\right\|_{\infty\rightarrow\infty}\leq 2\left(\left\|K_{j,\varepsilon}\right\|+\left\|K_{j,0}\right\|_{\infty}\right)\left\|K_{j,\varepsilon}-K_{j,0}\right\|_{\infty}\leq 4\|f(t)\|_{L^1}\left\|K_{j,\varepsilon}-K_{j,0}\right\|\,.
  \end{equation}
  Let $J=r_0^D l$. Next,
  \[
  \begin{aligned}
  &\left\|K_{j,\varepsilon}-K_{j,0}\right\|_{\infty}=\left\|\int^\infty_{-\infty} f(t)\left[\exp(iH_\eps t)X_j\exp(-iH_\eps t)-\exp(iH_0 t)X_j\exp(-iH_0 t)\right]\mathrm{d} t\right\|_{\infty}\\
  =&\left\|\int^\infty_{-\infty} f(t)\underbrace{\left[\exp(iH_\eps t)X_j\exp(-iH_\eps t)-\exp(iH^{(j,r)}_\varepsilon t)X_j\exp(-iH^{(j,r)}_\eps t)\right]}_{\left\|\cdot\right\|\leq \|X_j\|_\infty\min\left\{\frac{(2J|t|)^r}{r!},2\right\}}\mathrm{d} t\right.\\
  &+\int^\infty_{-\infty} f(t)\left[\exp(iH^{(j,r)}_\varepsilon t)X_j\exp(-iH^{(j,r)}_\eps t)-\exp(iH^{(j,r)}_0 t)X_j\exp(-iH^{(j,r)}_0 t)\right]\mathrm{d} t\\
  &\left.+\int^\infty_{-\infty} f(t)\underbrace{\left[\exp(iH^{(j,r)}_0 t)X_j\exp(-iH^{(j,r)}_0 t)-\exp(iH_0 t)X_j\exp(-iH_0 t)\right]}_{=0\ \forall r>1}\mathrm{d} t\right\|_{\infty}\\
  \leq &\int^\infty_{-\infty}f(t)\left(\min\left\{\frac{(2J|t|)^r}{r!},2\right\}+\left\|\exp(iH^{(j,r)}_\varepsilon t)-\exp(iH^{(j,r)}_0 t)\right\|\right)\mathrm{d}t\\
  \leq &\int^\infty_{-\infty}f(t)\left(\min\left\{\frac{(2J|t|)^r}{r!},2\right\}+\left\|H^{(j,r)}_\varepsilon-H^{(j,r)}_0 \right\||t|\right)\mathrm{d}t\\
  \leq & \int^\infty_{-\infty}f(t)\left(\min\left\{\frac{(2J|t|)^r}{r!},2\right\}+\varepsilon (r+1)^Dl|t|\right)\mathrm{d}t\\
  = &\mathcal{O}\left(\int_{|t|\leq \frac{r}{4Je}}\frac{(2J)^r|t|^r}{r!}|f(t)| \mathrm{d}t+\int_{|t|>\frac{r}{4Je}} |f(t)|\mathrm{d}t+\varepsilon (r+1)^Dl\right)\\
  = &\mathcal{O}\left(\int_{|t|\leq \frac{r}{4Je}}\frac{(2J)^r|t|^{r-1}}{r!}\mathrm{d}t+\int_{|t|>\frac{r}{4Je}} \exp(-C_{2,f}|t|^{1/\alpha})\mathrm{d}t+\varepsilon (r+1)^Dl\right)\\
  = &\mathcal{O}\left(\frac{1}{2^r}+\int_{|t|>\frac{r}{4Je}} \exp(-C_{2,f}|t|^{1/\alpha})\mathrm{d}t+\varepsilon (r+1)^Dl\right)
  \end{aligned}
  \]
  In the last third equality, we use \eqref{eqn:L_1_f_bound}.

  Next, to bound the second term, we let $C_{\alpha,C_{2,f}}$ be the constant that depends on $\alpha$ and $C_{2,f}$ such that $\alpha\left|u^{\alpha-1}\right|<\exp(C_{2,f}u/2)$ for any $|u|>C_{\alpha,C_{2,f}}$. When $r>4Je C_{\alpha,C_{2,f}}^\alpha$,
  \[
\begin{aligned}
    &\int_{|t|>\frac{r}{4Je}} \exp(-C_{2,f}|t|^{1/\alpha})\mathrm{d}t\\
    =&2\int_{u>(\frac{r}{4Je})^{1/\alpha}} \alpha u^{\alpha-1}\exp(-C_{2,f}|u|)\mathrm{d}u\\
    \leq &2\int_{u>(\frac{r}{4Je})^{1/\alpha}} \exp(-C_{2,f}|u|/2)\mathrm{d}u= \frac{4}{C_{2,f}}\exp\left(-C_{2,f}\left(\frac{r}{4Je}\right)^{1/\alpha}/2\right)
\end{aligned}
  \]
where we let $u=t^{1/\alpha}$ in the first equality. In the inequality, we use $\alpha \left|u\right|^{\alpha-1}<\exp(C_{2,f}u/2)$ for $|u|>C_{\alpha,C_{2,f}}$.

In conclusion, we can set $r=\Theta\left(\max\left\{J\log^\alpha(1/\varepsilon),J C_{\alpha,C_{2,f}}^\alpha\right\}\right)$ to obtain
  \[
  \begin{aligned}
    &\left\|K_{j,\varepsilon}-K_{j,0}\right\|_{\infty}\\
    =&\mathcal{O}\left(\frac{1}{2^r}+\int_{|t|>\frac{r}{4Je}} \exp(-C_{2,f}|s|^{1/\alpha})\mathrm{d}t+\varepsilon (r+1)^Dl\right)\\
    =&\mathcal{O}\left(\frac{1}{2^r}+\exp(-C_{2,f} (r/(4Je))^{1/\alpha}/2)+\varepsilon (r+1)^Dl\right)\,.
  \end{aligned}
  \]
  Plugging this into \eqref{eqn:L_diff}, we obtain
  \begin{equation}\label{eqn:ker_L_bound}
  \eta(\varepsilon)=O\left(\varepsilon (J\log^\alpha(1/\varepsilon)+1)^{D}l\right)\,.
  \end{equation}

  Next we calculate the function $f(r)$
  \[
  \left\|\mathcal{L}^{r,\dagger}_{i,\eps}-\mathcal{L}^{r-1,\dagger}_{i,\eps}\right\|_{\infty\rightarrow\infty}\leq 2\left(\left\|K^r_{j,\varepsilon}\right\|_\infty+\left\|K^{r-1}_{j,\eps}\right\|_{\infty}\right)\left\|K^{r}_{j,\varepsilon}-K^{r-1}_{j,\eps}\right\|_{\infty}\leq 4\|f\|_{L^1}\left\|K^r_{j,\varepsilon}-K_{j,\eps}^{r-1}\right\|_\infty\,.
  \]
  The term $\left\|K^r_{j,\varepsilon}-K_{j,\eps}^{r-1}\right\|$ can be calculated using Lieb-Robinson bound:
  \begin{equation}
  \begin{aligned}
  \left\|K^r_{j,\varepsilon}-K^{r-1}_{j,\varepsilon}\right\|_{\infty}&=\left\|\int^\infty_{-\infty} f(t)\left[\exp(iH_\eps^{(j,r)} t)X_j\exp(-iH_\eps^{(j,r)} t)-\exp(iH_{\eps}^{(j,r-1)} t)X_j\exp(-iH_{\eps}^{(j,r-1)} t)\right]\mathrm{d} t\right\|_{\infty}\\
  &\leq\int^\infty_{-\infty} f(t)\left\|\exp(iH_\eps^{(j,r)} t)X_j\exp(-iH_\eps^{(j,r)} t)-\exp(iH_{\eps}^{(j,r-1)} t)X_j\exp(-iH_{\eps}^{(j,r-1)} t)\right\|_{\infty}\mathrm{d} t\\
  &\leq\int^\infty_{-\infty} f(t) \min\left\{\frac{(2J|t|)^r}{r!},2\right\} \mathrm{d} t\\
  &=\mathcal{O}\left(\frac{1}{2^r}+\int_{|t|>\frac{r}{4Je}} \exp(-C_{2,f}|s|^{1/\alpha})\mathrm{d}t\right)\,.
  \end{aligned}
  \end{equation}

  As a result, we get
  \[
  \xi(r)=\mathcal{O}\left(\frac{1}{2^r}+\exp(-C_{2,f} (r/(4Je))^{1/\alpha}/2)\right)\,.
  \]
  This concludes the proof.
\end{proof}

\section{Proof of rapid ground state preparation of weakly interacting fermionic systems}\label{proof_rapid_thm_fermion}

We first present the rigorous version of Theorem~\ref{thm:rapid_mixing_fermion} for weakly interacting fermionic systems.
\begin{thm}[Rigorous version of Theorem~\ref{thm:rapid_mixing_fermion}]
\label{thm:rapid_mixing_fermion_rigorous}
Consider a gapped fermionic Hamiltonian $H$ in the form of \eqref{eqn:H_fermion} defined on a $D$-dimensional lattice $\Lambda= [0,L]^D$, and $N=(L+1)^D$ is the system size. Let $\{A_a\}=\{c^\dagger_i,c_i\}_{i\in\Lambda}$ be a set of coupling operators and $\{K_a\}$ be the corresponding jump operators defined via \cref{eqn:jump_time}. Assume the following conditions hold:
\begin{itemize}
    \item $M\succ \Delta$ for some $\Delta>0$.
    \item The filter functions $f$ is chosen properly to satisfy
\begin{equation}\label{eqn:L_1_bound}
\begin{aligned}
  &\hat{f}(\nu)=1,\quad \forall \nu\in [-\|M\|,-\Delta],\quad  \text{and}\quad \hat{f}(\nu)=0,\quad \forall \nu>0\,,\\
  &\max\{|f(t)|\}\leq C_1\|M\|\exp(-C_2|t/\Delta|^{1/\alpha})\,,
\end{aligned}
\end{equation}
where $C_1,C_2,\alpha>0$ are  constants independent of $N$.
\end{itemize}
 Consider the Lindblad dynamics in~\cref{eq:lindblad_no_coherent}. Then there exists a constant $\varepsilon^*$ independent of $L$ and $N$ such that when $\varepsilon<\varepsilon^*$, we have
\begin{equation}
\tau_{\operatorname{mix}}(\eta)=\Theta(\log(N/\eta))\,,
\end{equation}
where $\tau_{\operatorname{mix}}(\eta)$ is defined in~\eqref{eqn:mixing_tracedistance}. Here, $\varepsilon^*=\widetilde{\mathcal{O}}\left(\left(\left(\max_{i,j}|M_{i,j}|\right)r_0l/\Delta\right)^{-\Theta(D)}\|M\|^{-1}\right)$.
\end{thm}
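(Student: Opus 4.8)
The plan is to mirror the spin-system argument of \cref{proof_rapid_thm} — tracking the decay of observables in the Heisenberg picture through an oscillator norm — but with the ordinary partial trace replaced by the fermionic one. The first step is to make the unperturbed dynamics explicit. Since $M\succ\Delta>0$, the free Hamiltonian $H_0=\sum_{ij}M_{ij}c_i^\dagger c_j$ is gapped with the Fock vacuum as its unique ground state, and the single-particle propagator obeys $e^{iH_0 s}c_i e^{-iH_0 s}=\sum_q(e^{-iMs})_{iq}c_q$. Hence
\[
K_{c_i,0}=\int f(s)\,e^{iH_0 s}c_i e^{-iH_0 s}\,\mathrm{d}s=\sum_q\bigl[\widehat f(-M)\bigr]_{iq}c_q=c_i,
\]
because $\widehat f\equiv 1$ on $\operatorname{spec}(M)\subset[\Delta,\|M\|]$ by \eqref{eqn:L_1_bound}; similarly $K_{c_i^\dagger,0}=0$ since $\widehat f$ vanishes on $(0,\infty)$. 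Thus $\mc{L}_0^\dagger=\sum_i\mc{L}_{c_i,0}^\dagger$ is a parity-preserving, strictly local amplitude-damping map, and a one-line computation gives $\mc{L}_{c_i,0}^\dagger(n_i)=-n_i$ on the diagonal sector and $\mc{L}_{c_i,0}^\dagger(c_i)=-\tfrac12 c_i$ on the off-diagonal sector — the same decay rates $1$ and $1/2$ as in the spin case, which is exactly what the $e^{-t/4}$ contraction requires.

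Second, I would introduce the fermionic oscillator norm. Using the fermionic partial trace $\mathrm{Tr}^{f}_{\{i\}}$ of \eqref{eqn:fermion_partial_trace_simplified}--\eqref{eqn:fermion_partial_trace} (the unique CAR-compatible, parity-respecting analogue of the partial trace), set $\delta^f_i(O)=O-\tfrac12\,(I_{\{i\}}\text{-embedding of }\mathrm{Tr}^{f}_{\{i\}}(O))$, and split $O$ at mode $i$ into its diagonal part $P_i(O)$ and off-diagonal part $Q_i(O)$ — the fermionic versions of \eqref{eqn:X_1}--\eqref{eqn:X_2} — so that $\tnorm{O}_{f}:=\sum_{i}\bigl(\|\delta^f_i\circ P_i(O)\|+\|\delta^f_i\circ Q_i(O)\|\bigr)$. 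One checks, exactly as in \cref{proof_rapid_thm}, that $\|O(t)-\mathrm{Tr}(O(t))/2^N\|\le\tnorm{O(t)}_{f}$ and $\tnorm{O(0)}_{f}\le 4N$, reducing the theorem to the contraction $\tnorm{O(t)}_{f}\le\tnorm{O(0)}_{f}\,e^{-t/4}$. The equations of motion for $\delta^f_i\circ P_i(O(t))$ and $\delta^f_i\circ Q_i(O(t))$ decompose into a decaying part (rate $1$, resp. $1/2$, coming from the identities above), a contractive part $\sum_{j\ne i}\mc{L}_{j,\varepsilon}^\dagger$ which is $\infty\to\infty$ nonexpanding, and perturbation terms $\delta^f_i\circ P_i\bigl(\mc{L}_{i,\varepsilon}^\dagger-\mc{L}_{i,0}^\dagger\bigr)(O(t))$ and $[\delta^f_i\circ P_i,\sum_{j\ne i}\mc{L}_{j,\varepsilon}^\dagger](O(t))$ (and likewise with $Q_i$); a Duhamel step then produces the same integral inequality as \eqref{eqn:PQ_bound}.

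Third, I would prove the fermionic analogue of \cref{lem:difference}. The essential input is a Lieb--Robinson bound for $H=H_0+\varepsilon\sum_j h_j$: since $H_0$ is free-fermionic and the $h_j$ are geometrically local and parity preserving, such a bound holds with velocity $\Or(\max_{ij}|M_{ij}|)$ (cf.~\cite{Hastings_2019} and the fermionic formulations referenced in \cref{sec:notation}). Combined with the decay $|f(t)|\le C_1\|M\|\exp(-C_2|t/\Delta|^{1/\alpha})$ from \eqref{eqn:L_1_bound}, this shows $K_{c_i,\varepsilon}$ is quasilocal and that the truncated jump operators $K^{r}_{c_i,\varepsilon}$ (built from $H$ restricted to the ball of radius $r$ about $i$) satisfy $\|K^{r}_{c_i,\varepsilon}-K^{r-1}_{c_i,\varepsilon}\|=\Or\!\bigl(2^{-r}+\exp(-cC_2 r^{1/\alpha})\bigr)$ and $\|K_{c_i,\varepsilon}-K_{c_i,0}\|=\Or\!\bigl(\varepsilon\cdot\mathrm{poly}(\max_{ij}|M_{ij}|,r_0,l,\log(1/\varepsilon))^{D}/\Delta\bigr)$ — here the exact identity $K_{c_i,0}=c_i$ is what makes the unperturbed comparison clean. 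These give $\|\mc{L}^{r\dagger}_{i,\varepsilon}-\mc{L}^{r-1\dagger}_{i,\varepsilon}\|_{\infty\to\infty}\le\xi(r)$ and $\|\mc{L}^\dagger_{i,\varepsilon}-\mc{L}^\dagger_{i,0}\|_{\infty\to\infty}\le\eta(\varepsilon)$ (with an extra $\|f\|_{L^1}=\Or(\|M\|\Delta)$ prefactor), and feeding them into the commutator estimates and summing over $\Lambda$ as in the proof of \cref{prop:rapid_mixing} — choosing $r_0=\Theta(\max\{J,D^2\})$ with $J=r_0^Dl$, then $\varepsilon$ small enough that the aggregated coefficient $\kappa<1/4$ — yields the contraction and hence $\tau_{\operatorname{mix}}(\eta)=\Theta(\log(N/\eta))$ with $\varepsilon^*=\widetilde{\Or}\bigl(((\max_{ij}|M_{ij}|)r_0l/\Delta)^{-\Theta(D)}\|M\|^{-1}\bigr)$.

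The main obstacle — the genuinely new work beyond the spin proof — is the fermionic bookkeeping. One must (a) define $\mathrm{Tr}^f_{\{i\}}$, $\delta^f_i$, $P_i$, $Q_i$ so that they are well-defined CAR-compatible maps and $\delta^f_i\circ P_i$, $\delta^f_i\circ Q_i$ are genuinely supported at site $i$; (b) verify that the unperturbed identities $\mc{L}_{i,0}^\dagger\circ\delta^f_i\circ P_i=-\,\delta^f_i\circ P_i$ and $\mc{L}_{i,0}^\dagger\circ\delta^f_i\circ Q_i=-\tfrac12\,\delta^f_i\circ Q_i$ survive the replacement of the ordinary partial trace by the fermionic one; and (c) show $[\delta^f_i\circ P_i,\mc{L}^\dagger_{j,\varepsilon}]$ is controlled by the quasilocal tail of $K_{c_j,\varepsilon}$ at distance $d(i,j)$ — this relies on the fact that the dissipator superoperators $\mc{L}_{j,\varepsilon}$ are parity-even and therefore \emph{commute as superoperators} when their supports are disjoint, even though the underlying fermionic operators only anticommute. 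Once these structural facts are in place, the analytic estimates carry over essentially verbatim from \cref{proof_rapid_thm}.
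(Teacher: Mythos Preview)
Your proposal is correct and follows essentially the same approach as the paper: compute the unperturbed jump operators exactly ($K_{c_i,0}=c_i$, $K_{c_i^\dagger,0}=0$), introduce a fermionic oscillator norm built from the CAR-compatible partial trace $\mathrm{Tr}^f_i$ and the diagonal/off-diagonal projectors $P_i,Q_i$, derive the Duhamel inequality with decay rates $1$ and $1/2$, and close the argument via a fermionic Lieb--Robinson bound (the paper cites \cite[Lemma 3]{tong2024fast}) combined with the stretched-exponential decay of $f$. The one quantitative adjustment is that the effective Lieb--Robinson parameter is $J=\bigl(\max_{i,j}|M_{i,j}|\bigr)r_0^{D}l$ rather than $r_0^{D}l$, and the truncation radius must be $\Theta\bigl(\max\{J/\Delta,D^2,\log\|M\|^{-1}\}\bigr)$ to absorb the extra $\|M\|$ prefactor in $\|f\|_{L^1}$; with that correction your $\varepsilon^*$ matches the stated bound.
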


The existence of $f$ follows from~\cite{AdwanHoepfnerRaich2017}. According to the proof of~\cite[Corollary 2.8]{AdwanHoepfnerRaich2017}, there exists a function $g$ belongs to $\mathcal{G}^{\alpha}_{A_{1,u},A_{2,u}}(\mathbb{R})$ for some $A_{1,u},A_{2,u}>0$ such that $g(x)=1$ when $x\geq 1$, $0\leq g(x)\leq 1$ when $x\in [1/2,1]$, and $g(x)=0$ when $x\leq 1/2$. According to~\cite[Lemmas 14, 15]{DingChenLin2024}, $\hat{f}(\nu)=g(-\nu/\Delta)g((\|M\|+1+\nu)/\Delta)$ satisfies the condition~\eqref{eqn:L_1_bound}.

\rev{Before proving Theorem~\ref{thm:rapid_mixing_fermion_rigorous}, we first introduce the fermionic partial trace and the fermionic local oscillation operator, which are essential for analyzing the fermionic systems.}

We introduce a shorthand notation $\ket{1_i} = c^\dagger_i \ket{\mathrm{vac}}$ and $\ket{0_i} = c_i c_i^{\dag} \ket{\mathrm{vac}}$ for all $i$. Then with some abuse of notation, we require that $c_j,c_j^{\dag}$ ``anticommutes'' with $\ket{1_i},\bra{1_i}$ for all $i\ne j$, and ``commutes'' with $\ket{0_i},\bra{0_i}$ for all $i\ne j$.
For instance, consider $O=c_1^{\dag} c_2^{\dag} c_2 c_3$, then
\begin{equation}
\braket{1_2|O|1_2}=(-1)^2 c_1^{\dag} \braket{1_2|c_2^{\dag} c_2|1_2}c_{3}=c_1^{\dag} c_3.
\end{equation}
Here the factor $(-1)^2$ is due to the convention that $\bra{1_2}$ is required to anticommute with $c_1^{\dag}$, and $\ket{1_2}$ is required to anticommute with $c_3$. This is analogous to operations in spin systems, where commuting a state past spin operators acting on different sites follows a similar rule. For example, in a two-site system, we can rewrite
$X_1 X_2 \ket{1_1} = X_1 \ket{1_1} X_2$,
treating the operators sequentially while preserving their site-specific action.

More generally, for any observable $O$, the fermionic partial trace on the $i$-th lattice, denoted by $\mathrm{Tr}^f_i$, can be defined as:
\begin{equation}\label{eqn:fermion_partial_trace_simplified}
\mathrm{Tr}^f_i(O) = \bra{0_i} O \ket{0_i} + \bra{1_i} O \ket{1_i}.
\end{equation}
The tensor product with the identity matrix in~\eqref{eqn:delta_X} now takes the form
\begin{equation}\label{eqn:T_i}
\mathsf{T}_i\left(O\right)=\frac{c^\dagger_ic_i+c_ic^\dagger_i}{2}\mathrm{Tr}^f_i(O)\,,
\end{equation}
where we have used that $c^\dagger_ic_i+c_ic^\dagger_i=1$ is even in the fermionic operators and commutes with all $c_j,c_j^{\dag}$ when $i\ne j$. The fermionic local oscillation operator is then defined as
\begin{equation}\label{eqn:fermionic_oscillator}
  \delta^f_i(O)=O-\mathsf{T}_{i}(O)\,.
\end{equation}
\begin{rem}\label{rem:fermion} \rev{To simplify the calculation and definiton of the oscillator norm, we also introduce a more explicit definition of the fermionic partial trace that does not depend on the convention of commuting states past operators. This definition is equivalent to that in ~\cref{eqn:T_i} for operators of even parity.}

For any observable $O$, we express it in fermionic form with increasing order, meaning
\begin{equation}\label{eqn:O_form}
  O=\sum_{\substack{\bvec{a},\bvec{b}}} g_{\bvec{a},\bvec{b}} (c^\dagger_1)^{a_1}(c_1)^{b_1}\cdots (c^\dagger_n)^{a_n}(c_n)^{b_n}\,.
\end{equation}
where $\bvec{a},\bvec{b}\in \{0,1\}^N$ and $c_i,c^\dagger_i$ are creation and annihilation operators on $i$-th site.

We will only consider operators of even parity.
Define the fermionic partial trace as
  \begin{equation}\label{eqn:fermion_partial_trace}
    \begin{aligned}
\mathrm{Tr}^f_j\left(O\right)=&\sum_{\substack{\sum_i a_i+b_i\,\mathrm{mod}\,2=0\\a_j=b_j=1}} g_{\bvec{a},\bvec{b}} (c^\dagger_1)^{a_1}(c_1)^{b_1}\cdots (c^\dagger_jc_j+c_jc^\dagger_j)\cdots(c^\dagger_n)^{a_n}(c_n)^{b_n}\\
  +&2\sum_{\substack{\sum_i a_i+b_i\,\mathrm{mod}\,2=0\\a_j=b_j=0}} g_{\bvec{a},\bvec{b}} (c^\dagger_1)^{a_1}(c_1)^{b_1}\cdots (c^\dagger_jc_j+c_jc^\dagger_j)\cdots(c^\dagger_n)^{a_n}(c_n)^{b_n}.
  \end{aligned}
\end{equation}
Here we note $c^\dagger_jc_j+c_jc^\dagger_j=1=(c^\dagger_j)^0(c_j)^0$, and partial trace does not change the parity of the operator.

Using the notational convention $\ket{1_j}=c^\dagger_j\ket{\mathrm{vac}}$, $c_k\ket{1_j}=-\ket{1_j}c_k$ and $c^\dagger_k\ket{1_j}=-\ket{1_j}c^\dagger_k$, we may check that
\begin{equation}\label{eqn:fermionic_tr}
\mathrm{Tr}^f_j(O)=\sum_{\substack{\bvec{a},\bvec{b}}} g_{\bvec{a},\bvec{b}} (c^\dagger_1)^{a_1}(c_1)^{b_1}\cdots\left(\bra{0_j}(c^\dagger_j)^{a_j}(c_j)^{b_j}\ket{0_j}+(-1)^{\sum_{i\neq j} a_i+b_i}\bra{1_j}(c^\dagger_j)^{a_j}(c_j)^{b_j}\ket{1_j}\right)\cdots (c^\dagger_n)^{a_n}(c_n)^{b_n}.
\end{equation}
When $O$ has even parity, $(-1)^{\sum_{i\neq j} a_i+b_i}=1$ for all nonvanishing terms. This proves the equivalence between the partial trace defined in \cref{eqn:fermion_partial_trace} and~\eqref{eqn:T_i}.
\end{rem}

\rev{Following the fermionic notation convention \rev{and Remark~\ref{rem:fermion}} above, we may generalize the definition of $P_i,Q_i$ in \eqref{eqn:X_1} and~\eqref{eqn:X_2} to fermionic systems.} We first rewrite $O$ as $O = \sum_{a_i, b_i=0}^{1} \left(c^\dagger_i\right)^{a_i} \left(c_i\right)^{b_i} \otimes_F O^i_{a,b}$, where $\otimes_F$ denotes the fermionic tensor product, which follows the increasing order convention as in~\eqref{eqn:O_form}. Specifically, $\{O^i_{a,b}\}$ are defined in the following
\begin{equation}\label{eqn:O_ab}
\begin{aligned}
&c^\dagger_i c_i \otimes_F O^i_{1,1}=c^\dagger_i c_i O c^\dagger_i c_i-c_i^\dagger O c_i,\\
&(c^\dagger_i)^0(c_i)^0 \otimes_F O^i_{0,0}=c_i c^\dagger_i O c_i c^\dagger_i+c^\dagger_i O c_i\\
&c_i \otimes_F O^i_{0,1}=c_i c^\dagger_i O c^\dagger_i c_i,\ c^\dagger_i \otimes_F O^i_{1,0}=c^\dagger_i c_i O c_i c^\dagger_i\,.
\end{aligned}
\end{equation}
Then we have two projection operators can be expressed as follows:
  \begin{equation}\label{eqn:O_1_fermion}
  P_i(O)=\sum_{a_i=b_i} \left(c^\dagger_i\right)^{a_i}\left(c_i\right)^{b_i}\otimes_F O^i_{a,b}\,,
  \end{equation}
  and
  \begin{equation}\label{eqn:O_2_fermion}
  Q_i(O)=\sum_{a_i\neq b_i} \left(c^\dagger_i\right)^{a_i}\left(c_i\right)^{b_i}\otimes_F O^i_{a,b}\,.
  \end{equation}

Because of the fact the even parity of the density operator $\rho$, which implies that $\mathrm{Tr}(O\rho)=0$ if $O$ has odd parity, we only consider even parity observables in our analysis.
We can define the \emph{fermionic oscillator norm} as
\begin{equation}
\tnorm{O}:=\sum^N_{i=1}\left\|\delta^f_i\circ P_{i}(O)\right\|_\infty+\left\|\delta^f_i\circ Q_{i}(O)\right\|_\infty\,.
\end{equation}

We note that several alternative expressions for the fermionic partial trace exist in the literature (see, e.g., \cite{PhysRevB.105.035121,PhysRevA.104.032411}). However, we emphasize that the fermionic partial trace operation is uniquely defined once the ordering of the sites is fixed and the fermionic states satisfy the parity superselection rule (SSR), i.e., a fermionic state should not involve a coherent superposition between states with an even and an odd number of particles.

Next, we summarize the properties of the fermionic partial trace~\eqref{eqn:T_i} (or~\eqref{eqn:fermion_partial_trace}) and the fermionic local oscillation operator~\eqref{eqn:fermionic_oscillator} in the following.

\begin{enumerate}
  \item The fermionic partial trace operators $\{\mathsf{T}_i\}$ commute with each other, meaning $
    \mathsf{T}_i\circ \mathsf{T}_j\left(O\right)=\mathsf{T}_j\circ \mathsf{T}_i\left(O\right)$.

\item The fermionic partial trace is contractive in operator norm, meaning $\|\mathsf{T}_i\|_{\infty\rightarrow\infty}\leq 1$.

  \item The fermionic local oscillation operator can control the convergence of observables.
  \begin{lem}\label{lem:bound_of_O_norm}
    \leftskip=0.9cm
    For any observable $O$ that takes the form of~\eqref{eqn:O_form}, we have
  \[
  \left\|O-I/2^N\mathrm{Tr}(O)\right\|_\infty\leq \sum^N_{i} \|\delta^f_i(O)\|_{\infty}\,.
  \]
  \end{lem}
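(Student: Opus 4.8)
The plan is to mirror the spin-system computation carried out in the proof of Theorem~\ref{thm:rapid_mixing_2D_TFIM_rigo}, but with the ordinary partial trace replaced by the fermionic partial trace $\mathsf{T}_i$ of~\eqref{eqn:T_i}, leaning on Properties~1 and~2 of the list above. I introduce the composite operator $\mathsf{T}_{<i} := \mathsf{T}_1\circ\cdots\circ\mathsf{T}_{i-1}$ for $1\le i\le N+1$, with the convention $\mathsf{T}_{<1} := \mathrm{id}$. By Property~1 the single-site fermionic partial traces commute, so $\mathsf{T}_{<i}$ is well defined irrespective of the order of composition, and it obeys $\mathsf{T}_{<i}\circ\mathsf{T}_i=\mathsf{T}_{<(i+1)}$.

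First I would establish the telescoping identity
\[
O-\mathsf{T}_{<(N+1)}(O)=\sum_{i=1}^N \bigl(\mathsf{T}_{<i}(O)-\mathsf{T}_{<(i+1)}(O)\bigr)=\sum_{i=1}^N \mathsf{T}_{<i}\bigl(\delta^f_i(O)\bigr),
\]
where the last equality uses $\delta^f_i=\mathrm{id}-\mathsf{T}_i$ from~\eqref{eqn:fermionic_oscillator} together with $\mathsf{T}_{<i}\circ\mathsf{T}_i=\mathsf{T}_{<(i+1)}$. Next I would check that the full composite reduces to the normalized trace, i.e.\ $\mathsf{T}_{<(N+1)}(O)=\tfrac{1}{2^N}\mathrm{Tr}(O)\,I$. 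Since each application of $\mathsf{T}_i$ inserts the CAR element $\tfrac12(c_i^\dagger c_i+c_ic_i^\dagger)=\tfrac12 I$ times the iterated fermionic partial trace of~\eqref{eqn:fermion_partial_trace}, it is enough to verify that $\mathrm{Tr}^f_1\circ\cdots\circ\mathrm{Tr}^f_N(O)=\mathrm{Tr}(O)$ for every even-parity $O$. This is a direct computation on the Fock-space basis, most transparently via the state-sandwich form~\eqref{eqn:fermionic_tr}, where the even-parity hypothesis makes the Jordan--Wigner sign $(-1)^{\sum_{i\neq j}a_i+b_i}$ equal to $1$ on every nonvanishing term at each step of the iteration.

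Finally, applying the triangle inequality and Property~2 — each $\mathsf{T}_j$, hence the composition $\mathsf{T}_{<i}$, is a contraction in operator norm — yields
\[
\bigl\|O-\tfrac{1}{2^N}\mathrm{Tr}(O)\,I\bigr\|_\infty\le \sum_{i=1}^N \bigl\|\mathsf{T}_{<i}\bigl(\delta^f_i(O)\bigr)\bigr\|_\infty\le \sum_{i=1}^N \bigl\|\delta^f_i(O)\bigr\|_\infty,
\]
which is the claim. The one genuinely nontrivial point is the identity $\mathsf{T}_{<(N+1)}(O)=\tfrac{1}{2^N}\mathrm{Tr}(O)\,I$: because the fermionic partial traces carry Jordan--Wigner sign factors, one must invoke the parity superselection restriction to even-parity $O$ (already in force since $\mathrm{Tr}(O\rho)=0$ for odd $O$) to guarantee that iterating the site-wise traces genuinely reproduces the ordinary trace rather than a signed variant. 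Everything else is a formal telescoping argument identical in structure to the spin-system case.
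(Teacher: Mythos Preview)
Your proposal is correct and follows essentially the same approach as the paper's proof: both telescope $O-\mathsf{T}_N\circ\cdots\circ\mathsf{T}_1(O)$ into a sum of single-site oscillations, identify the full composite with $\tfrac{1}{2^N}\mathrm{Tr}(O)I$, and conclude via contractivity of the $\mathsf{T}_j$. The only cosmetic difference is that the paper writes the telescoping summands as $\delta^f_i\bigl(\mathsf{T}_{<i}(O)\bigr)$ and then invokes $[\delta^f_i,\mathsf{T}_j]=0$ to commute $\delta^f_i$ inward before applying contractivity, whereas you arrange the telescoping as $\mathsf{T}_{<i}\bigl(\delta^f_i(O)\bigr)$ from the outset and apply contractivity directly---a slightly cleaner bookkeeping that sidesteps the explicit commutation step.
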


\item The fermionic partial trace and local fermionic oscillation operator  commute with operators that act on different sites:
  \begin{lem}\label{lem:commute}
    \leftskip=0.9cm
    Given any superoperator $\mathcal{F}$:
    \begin{equation}\label{eqn:F_form}
    \mathcal{F}(O)=p_1p_2\cdots p_lOq_1q_2\cdots q_r\,,
    \end{equation}
    where $p_i,q_i\in \{c_i,c^\dagger_i\}_{i\in I}$. If $l+r$ is an even number, $j\notin I$, and $O$ has even parity, $[\mathcal{F},\mathsf{T}_j](O)=0$ and $\left[\mathcal{F},\delta^f_j\right](O)=0$.
  \end{lem}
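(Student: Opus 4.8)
The statement is a purely algebraic identity expressing that the fermionic partial trace on site $j$ factors through operators supported away from $j$. I would first reduce the two claims to one. Since $\delta^f_j=\mathrm{Id}-\mathsf{T}_j$ and $\mathcal{F}$ commutes trivially with the identity superoperator, $[\mathcal{F},\delta^f_j]=-[\mathcal{F},\mathsf{T}_j]$, so it suffices to show $[\mathcal{F},\mathsf{T}_j](O)=0$. Next, because $c^\dagger_jc_j+c_jc^\dagger_j=1$, definition \eqref{eqn:T_i} reads $\mathsf{T}_j(O)=\tfrac12\,\mathrm{Tr}^f_j(O)$, and since $\mathcal{F}$ is linear and acts only on sites in $I\not\ni j$, the required identity $\mathcal{F}(\mathsf{T}_j(O))=\mathsf{T}_j(\mathcal{F}(O))$ is equivalent to
\[
\mathrm{Tr}^f_j\!\bigl(p_1\cdots p_l\,O\,q_1\cdots q_r\bigr)=p_1\cdots p_l\,\mathrm{Tr}^f_j(O)\,q_1\cdots q_r\,.
\]

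To prove this I would work with the sandwich form \eqref{eqn:fermion_partial_trace_simplified} of $\mathrm{Tr}^f_j$. The key point is that $\mathcal{F}(O)=p_1\cdots p_l\,O\,q_1\cdots q_r$ has even parity, because $O$ does and $l+r$ is even; hence, by Remark~\ref{rem:fermion}, both $\mathrm{Tr}^f_j(O)$ and $\mathrm{Tr}^f_j(\mathcal{F}(O))$ coincide with $\bra{0_j}(\cdot)\ket{0_j}+\bra{1_j}(\cdot)\ket{1_j}$, so the argument stays in the regime where the two equivalent definitions of the fermionic partial trace agree. Each $p_i,q_i$ is supported on a site in $I$, and $j\notin I$, so the fermionic commutation convention gives $\bra{0_j}p_1\cdots p_l=p_1\cdots p_l\,\bra{0_j}$, $q_1\cdots q_r\ket{0_j}=\ket{0_j}\,q_1\cdots q_r$ with no sign, and $\bra{1_j}p_1\cdots p_l=(-1)^l p_1\cdots p_l\,\bra{1_j}$, $q_1\cdots q_r\ket{1_j}=(-1)^r\ket{1_j}\,q_1\cdots q_r$. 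Therefore, for $s\in\{0,1\}$,
\[
\bra{s_j}\,p_1\cdots p_l\,O\,q_1\cdots q_r\,\ket{s_j}=(-1)^{s(l+r)}\,p_1\cdots p_l\,\bra{s_j}O\ket{s_j}\,q_1\cdots q_r=p_1\cdots p_l\,\bra{s_j}O\ket{s_j}\,q_1\cdots q_r\,,
\]
where the last equality uses that $l+r$ is even; summing over $s$ yields the displayed identity, and hence $[\mathcal{F},\mathsf{T}_j](O)=[\mathcal{F},\delta^f_j](O)=0$.

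The only delicate point, and the one I would be most careful about, is the legitimacy of moving the formal symbols $\ket{1_j},\bra{1_j}$ past the operators $p_i,q_i$. This is exactly what Remark~\ref{rem:fermion} licenses: on even-parity operators the intrinsic definition \eqref{eqn:fermion_partial_trace} agrees with the sandwich form, and the hypothesis that $l+r$ be even does double duty — it keeps $\mathcal{F}(O)$ even (so $\mathrm{Tr}^f_j$ is given by that sandwich formula) and it makes the accumulated fermionic sign $(-1)^{l+r}$ trivial. To avoid any reference to the state-commutation bookkeeping, one could instead expand $O$ in the increasing-order normal form \eqref{eqn:O_form}, normal-order $p_1\cdots p_l\,O\,q_1\cdots q_r$ back into that form, and check \eqref{eqn:fermion_partial_trace} term by term: the site-$j$ factors come entirely from $O$ and are untouched by $\mathrm{Tr}^f_j$ apart from the reweighting in \eqref{eqn:fermion_partial_trace}, and the only sign produced when commuting them past the $p_i,q_i$ is $(-1)^{(l+r)(a_j+b_j)}=1$, once more using $l+r$ even. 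Either route is short; there is no genuine obstacle beyond the sign accounting, which the parity hypotheses are precisely designed to control.
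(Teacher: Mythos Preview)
Your proof is correct. The paper's own argument takes the alternative route you sketch at the end: it expands $O$ in the increasing-order normal form \eqref{eqn:O_form} and applies the explicit definition \eqref{eqn:fermion_partial_trace} of the fermionic partial trace, splitting into three cases according to the site-$j$ content $(a_j,b_j)$ of each monomial---$a_j\neq b_j$ (both sides vanish), $a_j=b_j=0$ ($\mathsf{T}_j$ acts as identity), and $a_j=b_j=1$ (the replacement $c^\dagger_jc_j\mapsto\tfrac12(c^\dagger_jc_j+c_jc^\dagger_j)$ commutes with the $p_i,q_i$ up to matching signs). Your primary argument instead works directly with the sandwich form \eqref{eqn:fermion_partial_trace_simplified} and the state-commutation convention, which collapses the case analysis into the single sign $(-1)^{s(l+r)}$; this is a bit more streamlined, and you were careful to check that both $O$ and $\mathcal{F}(O)$ stay even so that Remark~\ref{rem:fermion} justifies using that form throughout. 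Either route is fine---the content is the same sign bookkeeping---and your observation that $l+r$ even serves double duty (preserving parity and killing the sign) is exactly the point.
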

  According to the above lemma, it is straightforward to see that $\mathcal{L}^\dagger_{i,\varepsilon}$ commutes with $\delta^f_j$ if the site \(j\) is not within the fermionic support of $\mathcal{L}^\dagger_{i,\varepsilon}$.

  \item The fermionic partial trace generates the local fixed point for local Lindbladian operators:
  \begin{lem}\label{lem:local_fix}
    \leftskip=0.9cm
    Assume $\mathcal{L}^\dagger$ can be written into the summations of ~\eqref{eqn:F_form} such that every term satisfies the conditions of Lemma~\ref{lem:commute}. Given a subset $J\in \{1,\dots,N\}$ and an observable $O$ that has even parity, if $I\subset J$, we have $\mathcal{L}(\mathsf{T}_J(O))=0$, where $\mathsf{T}_J(O)=\Pi_{i\in J}\mathsf{T}_{i}(O)$.
  \end{lem}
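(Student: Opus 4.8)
The plan is to show that, because $I\subseteq J$, the operator $\mathsf{T}_J(O)$ is a scalar multiple of the identity on the \emph{entire} support $I$ of the generator, so that a generator that acts only on $I$ and fixes the identity must send it to zero. Throughout, let $I$ denote the fermionic support of the generator, i.e. the set of sites carrying the operators $p_i,q_i\in\{c_j,c_j^\dagger\}_{j\in I}$ in the decomposition \eqref{eqn:F_form}, and recall that $O$ (hence every operator built from it below) has even parity.

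First I would collapse $\mathsf{T}_J(O)$ onto the identity over $I$. Since the partial traces $\{\mathsf{T}_i\}_i$ pairwise commute, I factor $\mathsf{T}_J=\mathsf{T}_I\circ\mathsf{T}_{J\setminus I}$ and set $O'=\mathsf{T}_{J\setminus I}(O)$, which remains of even parity because the fermionic partial trace \eqref{eqn:fermion_partial_trace} preserves parity. Using the definition \eqref{eqn:T_i} together with $c_i^\dagger c_i+c_ic_i^\dagger=1$, I obtain $\mathsf{T}_I(O')=2^{-|I|}\,\mathrm{Tr}^f_I(O')=:2^{-|I|}W$, where $W$ contains no $c_j,c_j^\dagger$ with $j\in I$ and is again even. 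Thus, as a full-space operator, $\mathsf{T}_J(O)=2^{-|I|}W$ is the identity on $I$ times an even operator $W$ supported on $I^c$.

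Next I would factor the generator's action through the sites of $I$. Writing the generator as $\sum_k\mathcal{F}_k$ with each $\mathcal{F}_k(X)=p_1\cdots p_l\,X\,q_1\cdots q_r$ obeying the hypotheses of Lemma \ref{lem:commute} (operators on $I$, with $l+r$ even), the crucial observation is that an even operator supported on $I^c$ commutes with each individual $p_i$ and $q_i$ (two fermionic operators on disjoint sites would incur a sign, but an even string incurs none). Hence $\mathcal{F}_k(W)=W\,\mathcal{F}_k(I_I)$, which is exactly the factorization furnished by Lemma \ref{lem:commute}; summing over $k$ gives $\mathcal{L}(\mathsf{T}_J(O))=2^{-|I|}\,\mathcal{L}(I_I)\,W$. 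The proof then closes with the identity-annihilation property: applied to an observable, the dissipator takes the form $\mathcal{L}(X)=\sum_a K_a^\dagger X K_a-\tfrac12\{K_a^\dagger K_a,X\}$ (the sum of \eqref{eqn:F_form} terms posited in the hypothesis), so $\mathcal{L}(I_I)=\sum_a\bigl(K_a^\dagger K_a-K_a^\dagger K_a\bigr)=0$, and therefore $\mathcal{L}(\mathsf{T}_J(O))=0$.

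The main obstacle is the fermionic sign bookkeeping in the factorization step. Unlike in the spin case, one cannot simply tensor $W$ out of the product $p_1\cdots p_l\,W\,q_1\cdots q_r$, since fermionic operators on distinct sites anticommute; the even-parity hypothesis on $O$ — inherited by $W$ through the parity-preserving partial trace — is precisely what forces $W$ to commute with the local strings so that $\mathcal{F}_k(W)=W\,\mathcal{F}_k(I_I)$ holds with no residual sign. I would therefore invoke Lemma \ref{lem:commute} only on even-parity operands and verify parity preservation at each partial-trace step. The condition $I\subseteq J$ is equally essential: it guarantees that $\mathsf{T}_J(O)$ is the identity on \emph{all} of $I$, so the generator, which lives on $I$, never acts on any nontrivial operator content.
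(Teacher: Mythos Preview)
Your proposal is correct and takes essentially the same approach as the paper: both arguments reduce $\mathsf{T}_J(O)$ to a scalar multiple of the identity on the generator's support tensored (fermionically) with an even-parity operator on the complement, then use that the adjoint Lindbladian annihilates the identity. The paper works directly with $J$ rather than splitting $\mathsf{T}_J=\mathsf{T}_I\circ\mathsf{T}_{J\setminus I}$, and it simply asserts the factorization $\mathcal{L}^\dagger(\mathsf{T}_J(O))=\mathcal{L}^\dagger(I_J/2^{|J|})\otimes_F O_{\{1,\dots,N\}\setminus J}$ where you spell out the even-parity commutation argument explicitly; your reference to Lemma~\ref{lem:commute} for the step $\mathcal{F}_k(W)=W\,\mathcal{F}_k(I)$ is slightly off (that lemma is about commutation with $\mathsf{T}_j$, not this product factorization), but the underlying sign reasoning is the same and the argument is sound.
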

\end{enumerate}

\rev{In the following part of this section, we will first prove Theorem~\ref{thm:rapid_mixing_fermion_rigorous} using the above properties. In~\cref{sec:new_oscilla}, we first handle the  noninteracting case using the new fermionic oscillator norm. Then, in~\cref{sec:pf_thm_fermion_perturb}, we extend the proof to the perturbative regime. The proof of above properties of the fermionic partial trace and the fermionic local oscillation operator will be given in \cref{sec:pf_fermion_properties} for completeness.}



\subsection{\texorpdfstring{Noninteracting case}{}}\label{sec:new_oscilla}

In this section, we consider the noninteracting case with $\varepsilon=0$ and $H=H_0$. We first calculate $K_i$:
\begin{itemize}
\item When $A_i=c_i$,
\[
\begin{aligned}
  K_{i}=&\int^\infty_{-\infty}f(t)\exp(iH_0 t)c_i\exp(-iH_0 t)\mathrm{d}t=\sum_q c_q\int^\infty_{-\infty}f(s)(e^{-iMs})_{iq}\mathrm{d}s\\
  =&\sum_q c_q\left(\hat{f}(-M)\right)_{iq}=c_i\,.
\end{aligned}
\]
where we use $\hat{f}(-M)$ as an identity according to the conditions of $f$.

\item When $A_i=c^\dagger_i$,
\[
\begin{aligned}
  K_{i}=&\int^\infty_{-\infty}f(t)\exp(iH_0 t)c^\dagger_i\exp(-iH_0 t)\mathrm{d}t=\sum_q c^\dagger_q\int^\infty_{-\infty}f(s)(e^{-iMs})_{qi}\mathrm{d}s\\
  =&\sum_q c^\dagger_q\left(\hat{f}(M)\right)_{qi}=0\,,
\end{aligned}
\]
where we use $\hat{f}(M)=0$ according to the conditions of $f$.
\end{itemize}
Thus, we have the Lindbladian dynamics:
\[
  \frac{\mathrm{d} \rho}{\mathrm{d}t}=\mathcal{L}[\rho]=\sum_i \underbrace{c_i \rho c_i^{\dagger}-\frac{1}{2}\left\{c_i^{\dagger} c_i, \rho\right\}}_{:=\mathcal{L}^\dagger_i(\rho)}\,.
\]

Now, we first prove Theorem~\ref{thm:rapid_mixing_fermion_rigorous} for the simplest case $\varepsilon=0$.
\begin{proof}[Proof of Theorem~\ref{thm:rapid_mixing_fermion_rigorous} when $\varepsilon=0$]

  Let $O=\sum^1_{a_i,b_i=0} \left(c^\dagger_i\right)^{a_i}\left(c_i\right)^{b_i}\otimes_F O^i_{a,b}$, where $\otimes_F$ is the fermionic tensor product that follows the increasing order as in the form of~\eqref{eqn:O_form}. Here, \[
    c^\dagger_i c_i \otimes_F O^i_{1,1}=c^\dagger_i c_i O c^\dagger_i c_i-c_i^\dagger O c_i,\ c^\dagger_i \otimes_F O^i_{1,0}=c^\dagger_i c_i O c_i c^\dagger_i,\ c_i \otimes_F O^i_{0,1}=c_i c^\dagger_i O c^\dagger_i c_i,\  (c^\dagger_i)^0(c_i)^0 \otimes_F O^i_{0,0}=c_i c^\dagger_i O c_i c^\dagger_i+c^\dagger_i O c_i\,.
    \]

  We notice
  \[
  c^\dagger_i O c_i-\frac{1}{2}\left\{c^\dagger_ic_i,O\right\}=-c^\dagger_ic_i\otimes_F O^i_{1,1}-\frac{1}{2}c_i\otimes_F O^i_{0,1}-\frac{1}{2}c^\dagger_i\otimes_F O^i_{1,0}\,.
  \]
  Here, we did not generate parity sign in the first term because $O$ has even parity. This implies
  \[
  \begin{aligned}
  \delta^f_i\left(c^\dagger_iO c_i-\frac{1}{2}\left\{c^\dagger_ic_i,O\right\}\right)=\frac{1}{2}(c_ic^\dagger_i-c^\dagger_ic_i)\otimes_F O^i_{1,1}-\frac{1}{2}c_i\otimes_F O^i_{0,1}-\frac{1}{2}c^\dagger_i\otimes_F O^i_{1,0}
  \end{aligned}
  \]
  At the same time, we notice
  \[
  \delta^f_i(O)=\frac{1}{2}(c^\dagger_ic_i-c_ic^\dagger_i)\otimes_F O^i_{1,1}+c_i\otimes_F O^i_{0,1} +c^\dagger_i\otimes_F O^i_{1,0}\,.
  \]
  According to the commuting property in Lemma \ref{lem:commute}, let
  \[
  O_{1,i}=\sum_{a_i=b_i} \left(c^\dagger_i\right)^{a_i}\left(c_i\right)^{b_i}\otimes_F O^i_{a,b},\quad O_{2,i}=\sum_{a_i\neq b_i} \left(c^\dagger_i\right)^{a_i}\left(c_i\right)^{b_i}\otimes_F O^i_{a,b}\,,
  \]
  we have
  \begin{equation}\label{eqn:decay_O_1}
  \frac{\mathrm{d}\delta^f_i(O_{1,i})}{\mathrm{d}t }=-\delta^f_i(O_{1,i})+\sum_{j\neq i}\delta^f_i(\mathcal{L}^\dagger_j(O_{1,i}))=-\delta^f_i(O_{1,i})+\sum_{j\neq i}\mathcal{L}^\dagger_j(\delta^f_i(O_{1,i}))
  \end{equation}
  and
  \begin{equation}\label{eqn:decay_O_2}
  \frac{\mathrm{d}\delta^f_i(O_{2,i})}{\mathrm{d}t }=-\frac{1}{2}\delta^f_i(O_{2,i})+\sum_{j\neq i}\delta^f_i(\mathcal{L}^\dagger_j(O_{2,i}))=-\frac{1}{2}\delta^f_i(O_{2,i})+\sum_{j\neq i}\mathcal{L}^\dagger_j(\delta^f_i(O_{2,i})).
  \end{equation}

  Similar to the case of spin systems, from~\eqref{eqn:decay_O_1} and~\eqref{eqn:decay_O_2}, we obtain
  \[
    \left\|\delta^f_i\circ P_{i}(O(t))\right\|_\infty+\left\|\delta^f_i\circ Q_{i}(O(t))\right\|_\infty\leq \exp\left(-t/2\right) \left(\left\|\delta^f_i\circ P_{i}(O(0))\right\|_\infty+\left\|\delta^f_i\circ Q_{i}(O(0))\right\|_\infty\right).
  \]
  Thus, we have
  \[
    |||O|||\leq \exp\left(-t/2\right) |||O(0)|||\leq 4N\exp\left(-t/2\right)\,.
  \]
  for any $\|O(0)\|=1$. The remaining step is the same as the proof of Theorem \ref{thm:rapid_mixing_2D_TFIM_rigo} so we omit it.
\end{proof}

\subsection{Proof of Theorem~\ref{thm:rapid_mixing_fermion_rigorous}}\label{sec:pf_thm_fermion_perturb}

The following part of the proof follows a similar strategy to that used in proving Theorem~\ref{thm:rapid_mixing_2D_TFIM} in Appendix~\ref{proof_rapid_thm}. For completeness, we still write down the initial steps in the following. It suffices to show the decay rate of the following quantity that is similar to that in Proposition~\ref{prop:rapid_mixing}:
\[
\tnorm{O(t)}=\sum^N_{i=1}\left\|\delta^f_i\circ P_{i}(O(t))\right\|_\infty+\left\|\delta^f_i\circ Q_{i}(O(t))\right\|_\infty\,.
\]
Here $P_i,Q_i$ are defined in~\eqref{eqn:O_1_fermion} and~\eqref{eqn:O_2_fermion}. We assume the observable $O(0)$ takes the form of~\eqref{eqn:O_form}, we will bound each term separately as follows.
\begin{itemize}
\item $\delta^f_i\circ P_i\left(O(t)\right)$: We notice
\[
\begin{aligned}
&\partial_t \delta^f_i\circ P_i\left(O(t)\right)=\delta^f_i\circ P_i \left(\mathcal{L}^\dagger_{i,\varepsilon}(O(t))\right)+\delta^f_i\circ P_i \left(\sum_{j\neq i}\mathcal{L}^\dagger_{j,\varepsilon}(O(t))\right)\\
=&\delta^f_i\circ P_i \left(\mathcal{L}^\dagger_{i,0}(O(t))\right)+\delta^f_i\circ P_i \left(\mathcal{L}^\dagger_{i,\varepsilon}(O(t))-\mathcal{L}^\dagger_{i,0}(O(t))\right)\\
&+\sum_{j\neq i}\mathcal{L}^\dagger_{j,\varepsilon}\left(\delta^f_i\circ P_i \left(O(t)\right)\right)+\left[\delta^f_i\circ P_i,\sum_{j\neq i}\mathcal{L}^\dagger_{j,\varepsilon}\right]\left(O(t)\right)\\
=&\underbrace{-\delta^f_i\circ P_i(O(t))}_{\text{decaying part}}+\underbrace{\sum_{j\neq i}\mathcal{L}^\dagger_{j,\varepsilon}\left(\delta^f_i\circ P_i \left(O(t)\right)\right)}_{\text{contractive part}}\\
&+\delta^f_i\circ P_i \left(\mathcal{L}^\dagger_{i,\varepsilon}(O(t))-\mathcal{L}^\dagger_{i,0}(O(t))\right)+\left[\delta^f_i\circ P_i,\sum_{j\neq i}\mathcal{L}^\dagger_{j,\varepsilon}\right]\left(O(t)\right),
\end{aligned}
\]
where we use the calculation in the above section to derive the last equality.

\item $\delta^f_i\circ Q_i\left(O(t)\right)$: We notice
\[
\begin{aligned}
&\partial_t \delta^f_i\circ Q_i\left(O(t)\right)=\delta^f_i\circ Q_i \left(\mathcal{L}^\dagger_{i,\varepsilon}(O(t))\right)+\delta^f_i\circ Q_i \left(\sum_{j\neq i}\mathcal{L}^\dagger_{j,\varepsilon}(O(t))\right)\\
=&\delta^f_i\circ Q_i \left(\mathcal{L}^\dagger_{i,0}(O(t))\right)+\delta^f_i\circ Q_i \left(\mathcal{L}^\dagger_{i,\varepsilon}(O(t))-\mathcal{L}^\dagger_{i,0}(O(t))\right)\\
&+\sum_{j\neq i}\mathcal{L}^\dagger_{j,\varepsilon}\left(\delta^f_i\circ Q_i \left(O(t)\right)\right)+\left[\delta^f_i\circ Q_i,\sum_{j\neq i}\mathcal{L}^\dagger_{j,\varepsilon}\right]\left(O(t)\right)\\
=&\underbrace{-\frac{1}{2}\delta^f_i\circ Q_i(O(t))}_{\text{decaying part}}+\underbrace{\sum_{j\neq i}\mathcal{L}^\dagger_{j,\varepsilon}\left(\delta^f_i\circ Q_i \left(O(t)\right)\right)}_{\text{contractive part}}\\
&+\delta^f_i\circ Q_i \left(\mathcal{L}^\dagger_{i,\varepsilon}(O(t))-\mathcal{L}^\dagger_{i,0}(O(t))\right)+\left[\delta^f_i\circ Q_i,\sum_{j\neq i}\mathcal{L}^\dagger_{j,\varepsilon}\right]\left(O(t)\right).
\end{aligned}
\]
\end{itemize}

Similar to~\eqref{eqn:P_bound} and~\eqref{eqn:Q_bound}, according to the above calculation, we obtain
\begin{equation}\label{eqn:P_bound_fermion}
\begin{aligned}
&\left\|\delta^f_i\circ P_i\left(O(t)\right)\right\|_\infty\\
\leq &\exp(-t)\left\|\delta^f_i\circ P_i\left(O(0)\right)\right\|_\infty+\int^t_{0}\exp(s-t)\left\|\delta^f_i\circ P_i \left(\mathcal{L}^\dagger_{i,\varepsilon}(O(s))-\mathcal{L}^\dagger_{i,0}(O(s))\right)+\left[\delta^f_i\circ P_i,\sum_{j\neq i}\mathcal{L}^\dagger_{j,\varepsilon}\right]\left(O(s)\right)\right\|_{\infty}\mathrm{d}s,\\
\end{aligned}
\end{equation}
\begin{equation}\label{eqn:Q_bound_fermion}
\begin{aligned}
&\left\|\delta^f_i\circ Q_i\left(O(t)\right)\right\|_\infty\\
\leq &\exp(-t/2)\left\|\delta^f_i\circ Q_i\left(O(0)\right)\right\|_\infty+\int^t_{0}\exp((s-t)/2)\left\|\delta^f_i\circ Q_i \left(\mathcal{L}^\dagger_{i,\varepsilon}(O(s))-\mathcal{L}^\dagger_{i,0}(O(s))\right)+\left[\delta^f_i\circ Q_i,\sum_{j\neq i}\mathcal{L}^\dagger_{j,\varepsilon}\right]\left(O(s)\right)\right\|_{\infty}\mathrm{d}s\\
\end{aligned}
\end{equation}
These inequalities imply
\begin{equation}\label{eqn:PQ_bound_fermion}
  \begin{aligned}
  &\left\|\delta_i^f\left(P_i(O(t))\right)\right\|+\left\|\delta_i^f\left(Q_i(O(t))\right)\right\|\\
  \leq &\exp(-t/2)\left(\left\|\delta_i^f\left(P_i(O(0))\right)\right\|+\left\|\delta_i^f\left(Q_i(O(0))\right)\right\|\right)\\
  &+\int^t_{0}\exp(s-t)\left\|\delta_i^f\circ P_i \left(\mathcal{L}^\dagger_{i,\varepsilon}(O(s))-\mathcal{L}^\dagger_{i,0}(O(s))\right)+\left[\delta_i^f\circ P_i,\sum_{j\neq i}\mathcal{L}^\dagger_{j,\varepsilon}\right]\left(O(s)\right)\right\|_{\infty}\mathrm{d}s\\
  &+\int^t_{0}\exp((s-t)/2)\left\|\delta_i^f\circ Q_i \left(\mathcal{L}^\dagger_{i,\varepsilon}(O(s))-\mathcal{L}^\dagger_{i,0}(O(s))\right)+\left[\delta_i^f\circ Q_i,\sum_{j\neq i}\mathcal{L}^\dagger_{j,\varepsilon}\right]\left(O(s)\right)\right\|_{\infty}\mathrm{d}s
  \end{aligned}\,,
  \end{equation}
which is the same as~\eqref{eqn:PQ_bound}. Following the idea of proving Proposition~\ref{prop:rapid_mixing}, the next step is to show the inequalities
\begin{equation}\label{L_eps_commute_difference_fermion}
\left\|\left[\delta^f_i\circ P_i,\sum_{j\neq i}\mathcal{L}^\dagger_{j,\varepsilon}\right]\left(O\right)\right\|_\infty,\quad \left\|\left[\delta^f_i\circ Q_i,\sum_{j\neq i}\mathcal{L}^\dagger_{j,\varepsilon}\right]\left(O\right)\right\|_\infty\leq \sum_c \kappa^c_{i}\left(\|\delta^f_c\circ P_c(O)\|_{\infty}+\|\delta^f_c\circ Q_c(O)\|_{\infty}\right)
\end{equation}
and
\begin{equation}\label{L_eps_difference_fermion}
\begin{aligned}
\left\|\delta^f_i\circ P_i \left(\mathcal{L}^\dagger_{i,\varepsilon}(O)-\mathcal{L}^\dagger_{i,0}(O)\right)\right\|_\infty,\quad \left\|\delta^f_i\circ Q_i \left(\mathcal{L}^\dagger_{i,\varepsilon}(O)-\mathcal{L}^\dagger_{i,0}(O)\right)\right\|_\infty\leq \sum_c \gamma^c_{i}\left(\|\delta^f_c\circ P_c(O)\|_{\infty}+\|\delta^f_c\circ Q_c(O)\|_{\infty}\right)
\end{aligned}
\end{equation}
with $\sum_{i}\kappa^c_{i}+\sum_{i}\gamma^c_i$  smaller than a constant that is independent of the system size. The value of $\kappa^c_{i}$ and $\gamma^c_{i}$ can be directly calculated by the following lemma.
\begin{lem}\label{lem:difference_fermion}
Define $J=\left(\max_{i,j}|M_{i,j}|\right)r^D_0l$. Under conditions of Theorem~\ref{thm:rapid_mixing_fermion_rigorous}, for any $r\geq 1$, we have
\begin{equation}
\begin{aligned}
& \left\|\mathcal{L}_{i,\varepsilon}^{r \dagger}-\mathcal{L}_{i,\varepsilon}^{r-1 \dagger}\right\|_{\infty \rightarrow \infty} \leq \xi(r)=\mathcal{O}\left(\left\|M\right\|\left(\frac{1}{2^r}+\exp(-C_{2,f} (r\Delta/(4Je))^{1/\alpha}/2)\right)\right), \\
& \left\|\mathcal{L}_{i,\varepsilon}^{\dagger}-\mathcal{L}_{i,0}^{\dagger}\right\|_{\infty \rightarrow \infty} \leq \eta(\varepsilon)=\Or\left(\left\|M\right\|\left(\varepsilon (J\log^\alpha(1/\varepsilon)/\Delta+1)^{D}l\right)\right)\,.
\end{aligned}
\end{equation}
\end{lem}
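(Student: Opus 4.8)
\emph{Proof plan for Lemma~\ref{lem:difference_fermion}.} The argument mirrors the proof of Lemma~\ref{lem:difference} for spin systems, with two structural changes forced by the fermionic setting. First, the noninteracting term $H_0=\sum_{ij}M_{i,j}c^\dagger_ic_j$ is only $(1,l)$-geometrically local rather than strictly single-site, so the effective hopping strength becomes $J=(\max_{i,j}|M_{i,j}|)r_0^Dl$ and the Lieb--Robinson estimates acquire prefactors depending on $\|M\|$. Second, the natural timescale of the filter function is set by $\Delta^{-1}$ rather than an $\mathcal{O}(1)$ constant, so the truncation radii in the quadrature and Lieb--Robinson splittings are rescaled accordingly; this is the source of the $(r\Delta/(4Je))^{1/\alpha}$ in $\xi(r)$ and the $\log^\alpha(1/\varepsilon)/\Delta$ in $\eta(\varepsilon)$, and the $\|M\|$ prefactor in both comes from the $L^1$-type norms of $f$ and from $\|K_{i,\varepsilon}\|,\|K_{i,0}\|\le\|f\|_{L^1}$ (using $\|c_i\|=\|c^\dagger_i\|=1$).

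I would begin by extracting the quadrature estimates: from the hypothesis on $|f(t)|$ in \eqref{eqn:L_1_bound}, the usual change of variables gives control of $\|f\|_{L^1}$ and $\|tf(t)\|_{L^1}$ together with a tail estimate of the form $\int_{|t|>R}|f(t)|\,\mathrm{d}t\lesssim\|M\|\,\Delta\exp(-C_{2,f}(R\Delta)^{1/\alpha}/2)$ once $R\Delta$ exceeds an $\alpha$-dependent constant; these are the fermionic analogue of \eqref{eqn:L_1_f_bound}. Next I would record the exact locality of the unperturbed jump operator: since $\hat f(-M)=I$ on $\mathrm{spec}(M)\subset[\Delta,\|M\|]$ while $\hat f(M)=0$, one gets $K_{i,0}=c_i$ when $A_i=c_i$ and $K_{i,0}=0$ when $A_i=c^\dagger_i$, hence $K^r_{i,0}=K_{i,0}$ for all $r\ge1$, so that the ``radius-$r$ versus full'' unperturbed comparison contributes nothing, exactly as in the spin case.

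The two claimed bounds then follow the spin template. For $\|\mathcal{L}^\dagger_{i,\varepsilon}-\mathcal{L}^\dagger_{i,0}\|_{\infty\to\infty}$ I would use $\|\mathcal{L}^\dagger_{i,\varepsilon}-\mathcal{L}^\dagger_{i,0}\|_{\infty\to\infty}\le2(\|K_{i,\varepsilon}\|+\|K_{i,0}\|)\|K_{i,\varepsilon}-K_{i,0}\|$ and bound $\|K_{i,\varepsilon}-K_{i,0}\|$ by inserting the ball-truncated Hamiltonian $H^{(i,r)}_\varepsilon$ and decomposing the difference of Heisenberg evolutions into three pieces: full versus radius-$r$ perturbed evolution, controlled by a fermionic Lieb--Robinson bound by $\min\{(2J|t|)^r/r!,\,2\}$; radius-$r$ perturbed versus unperturbed evolution, controlled by $\|H^{(i,r)}_\varepsilon-H^{(i,r)}_0\|\,|t|\le\varepsilon(r+1)^Dl\,|t|$; and radius-$r$ versus full unperturbed evolution, which vanishes for $r\ge1$ by the locality observation above. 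Integrating against $|f(t)|$, splitting the $t$-integral at $|t|\sim r/(4Je)$, applying Stirling to the first piece, and optimizing over $r=\Theta(\max\{J\log^\alpha(1/\varepsilon)/\Delta,\,JC_{\alpha,C_{2,f}}^\alpha\})$ produces $\eta(\varepsilon)$. The bound on $\xi(r)$ is the same mechanism applied to $\|\mathcal{L}_{i,\varepsilon}^{r\dagger}-\mathcal{L}_{i,\varepsilon}^{r-1\dagger}\|_{\infty\to\infty}\le2(\|K^r_{i,\varepsilon}\|+\|K^{r-1}_{i,\varepsilon}\|)\|K^r_{i,\varepsilon}-K^{r-1}_{i,\varepsilon}\|$, where $\|K^r_{i,\varepsilon}-K^{r-1}_{i,\varepsilon}\|\le\int|f(t)|\min\{(2J|t|)^r/r!,\,2\}\,\mathrm{d}t$ by comparing the Heisenberg evolutions generated by $H^{(i,r)}_\varepsilon$ and $H^{(i,r-1)}_\varepsilon$; the same $t$-split gives $\xi(r)$.

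I expect the main obstacle to be the genuinely fermionic bookkeeping rather than any new idea. Two points need care. One must invoke a Lieb--Robinson bound for fermionic Heisenberg dynamics generated by the non-single-site $H_0+\varepsilon H_1$, with ``support'' understood in the fermionic sense (cf.\ \cite{Hastings_2019} and \cite[Definition 7]{tong2024fast}) rather than the spin-chain version used for Lemma~\ref{lem:difference}. And one must check that the $\infty\to\infty$ operator-norm estimates for $\mathcal{L}^\dagger_{i,\varepsilon}$ and its ball truncations are not spoiled by the nonlocal Jordan--Wigner strings; this is where the commutation conventions of \cref{rem:fermion} and the even parity of the observables enter, ensuring that each $\mathcal{L}_{i,\varepsilon}^{r\dagger}$ acts as a genuinely finite-range superoperator so that the telescoping $\sum_r(\mathcal{L}_{i,\varepsilon}^{r\dagger}-\mathcal{L}_{i,\varepsilon}^{r-1\dagger})$ converges to $\mathcal{L}^\dagger_{i,\varepsilon}$ in the relevant norm. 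Propagating the $\|M\|$- and $\Delta$-dependence consistently through all of the above is the remaining routine-but-delicate part of the proof.
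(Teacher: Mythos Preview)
Your proposal is correct and follows essentially the same approach as the paper, which simply cites the fermionic Lieb--Robinson bound from \cite[Lemma~3]{tong2024fast} to identify $J$ as the Lieb--Robinson velocity and then declares that the argument of Lemma~\ref{lem:difference} carries over verbatim. Your plan is a more detailed unwinding of exactly that strategy, including the $\|M\|$- and $\Delta$-rescalings and the observation $K^r_{i,0}=K_{i,0}$; the paper omits these details entirely.
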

\begin{proof}
According to~\cite[Lemma 3]{tong2024fast} and recall $\|h_j\|\leq 1$, $J$ represents the Lieb-Robinson velocity for a fermionic system. The proof of this lemma follows the same argument as that of Lemma~\ref{lem:difference}, and thus, we omit it.
\end{proof}

Finally, similar to the proof of Proposition~\ref{prop:rapid_mixing}, letting $r^*=\Theta(\max\{J/\Delta,D^2,\log(\|M\|^{-1})\})$, we can show
\[
  \kappa = \sum_{i} \kappa_{i}^k+\sum_i \gamma_i^k \leq 4\left(2 r^*+1\right)^{2 D} \eta(\eps)+20\sum_{m'\geq r^*}\sum_{m\geq m'}(2m+1)^{2D+1}\Gamma(m)\,,
\]
where $\Gamma(r)=\sum_{r\geq r_0}\xi(r)=\mathcal{O}\left(\|M\|2^{-r}\right)$  when $r=\Omega(J/\Delta)$. Because $r^*=\Theta(\max\{J/\Delta,D^2,\log(\|M\|^{-1})\})$, the second term is smaller than $1/8$. Finally, we set
\[
\varepsilon=\mathcal{O}\left((2r^*+1)^{-2D}\left(J\log^{\alpha}(1/\varepsilon)/\Delta+1\right)^{-D}l^{-1}\|M\|^{-1}\right)\,
\]
to ensure $\kappa<1/4$ and conclude the proof.

\subsection{Proof of properties of fermionic partial trace and local oscillation operator}\label{sec:pf_fermion_properties}

\begin{proof}[Proof of $\|\mathsf{T}_i\|_{\infty\rightarrow\infty}\leq 1$]
Here, we only consider even parity observable. Different from the definition of $\{O^i_{a,b}\}$ in \cref{eqn:O_ab}, we rewrite it as  \[
  O=c^\dagger_i c_i \otimes_F O^i_{1,1}+c^\dagger_i \otimes_F O^i_{1,0}+c_i \otimes_F O^i_{0,1}+c_ic^\dagger_i \otimes_F O^i_{-1,-1}.
  \]
where

$$
\begin{aligned}
& c_i^{\dagger} c_i \otimes_F O_{1,1}^i=c_i^{\dagger} c_i O c_i^{\dagger} c_i \\
& c_i c_i^{\dagger} \otimes_F O_{-1,-1}^i=c_i c_i^{\dagger} O c_i c_i^{\dagger} \\
& c_i \otimes_F O_{0,1}^i=c_i c_i^{\dagger} O c_i^{\dagger} c_i, c_i^{\dagger} \otimes_F O_{1,0}^i=c_i^{\dagger} c_i O c_i c_i^{\dagger}
\end{aligned}
$$
  Given any vector $\ket{\psi}=\ket{0_i}\ket{\phi_{0,i}}+\ket{1_i}\ket{\phi_{1,i}}$, we have
\[
\begin{aligned}
\bra{\psi}O\ket{\psi}=&\bra{\phi_{1,i}}\bra{1_i}c^\dagger_i c_i \otimes_F O^i_{1,1}\ket{1_i}\ket{\phi_{1,i}}+\bra{\phi_{0,i}}\bra{0_i}c_i c^\dagger_i \otimes_F O^i_{-1,-1}\ket{0_i}\ket{\phi_{0,i}}\\
&+\bra{\phi_{1,i}}\bra{1_i}c^\dagger_i \otimes_F O^i_{1,0}\ket{0_i}\ket{\phi_{0,i}}+\bra{\phi_{0,i}}\bra{0_i}c_i \otimes_F O^i_{0,1}\ket{1_i}\ket{\phi_{1,i}}
\end{aligned}\,.
\]
This implies
\[
\begin{aligned}
  \left\|O\right\|_{\infty}\geq &\max\left\{\sup_{\left\|\ket{1_i}\ket{\phi_{1,i}}\right\|_2=1}\bra{\phi_{1,i}}\bra{1_i}c^\dagger_i c_i \otimes_F O^i_{1,1}\ket{1_i}\ket{\phi_{1,i}},\sup_{\left\|\ket{0_i}\ket{\phi_{0,i}}\right\|_2=1}\bra{\phi_{0,i}}\bra{0_i}c_i c^\dagger_i \otimes_F O^i_{-1,-1}\ket{0_i}\ket{\phi_{0,i}}\right\}\\
  =&\max\left\{\left\|c^\dagger_i c_i \otimes_F O^i_{1,1}\right\|_\infty,\left\|c_i c^\dagger_i \otimes_F O^i_{-1,-1}\right\|_\infty\right\}
\end{aligned}\,.
\]
We consider the first term
\[
  \bra{\phi_{1,i}}\bra{1_i}c^\dagger_i c_i \otimes_F O^i_{1,1}\ket{1_i}\ket{\phi_{1,i}}=\bra{\phi_{1,i}}\bra{1_i}O^i_{1,1} \ket{1_i}\ket{\phi_{1,i}}\,.
\]

Now, we try to get rid of $\ket{1_i}$ in the above equality. We rewrite $O^i_{1,1}=O^{\rm odd,i}_{1,1}+O^{\rm even,i}_{1,1}$, where $O^{\rm odd,i}_{1,1}$ contains terms satisfying $\sum_{j>i} a_j+b_j\,\rm mod\, 2=1$ and $O^{\rm even,i}_{1,1}$ contains terms satisfying $\sum_{j>i} a_j+b_j\,\rm mod\,2=0$. Then,
\[
  \bra{\phi_{1,i}}\bra{1_i}O^i_{1,1} \ket{1_i}\ket{\phi_{1,i}}=-\bra{\phi_{1,i}}O^{\rm odd,i}_{1,1} \ket{\phi_{1,i}}+\bra{\phi_{1,i}}O^{\rm even,i}_{1,1} \ket{\phi_{1,i}}\,,
\]
where we abuse the notation and let $\ket{\phi_{1,i}}\in \mathbb{C}^{2^{n-1}}$ and $O^{\rm odd,i}_{1,1},O^{\rm even,i}_{1,1}\in \mathbb{C}^{2^{n-1}\times 2^{n-1}}$ act on qubits $1,2,\cdots,i-1,i+1\cdots,n$. Next, we write
\[
\ket{\phi_{1,i}}=\sum_{\bvec{a}\in \{0,1\}^{N-1}}c_{\bvec{a}}\ket{a_1}\ket{a_2}\cdots\ket{a_{i-1}}\ket{a_{i+1}}\cdots\ket{a_n}\,,
\]
where $\bvec{a}=(a_1,a_2,\cdots,a_{i-1},a_{i+1},\cdots,a_n)\in \{0,1\}^{N-1}$. We then define
\[
  \ket{\phi^{\rm odd,i}_{1,i}}=\sum_{\sum_{j>i}a_{j}\,\rm mod\, 2=1}c_{\bvec{a}}\ket{a_1}\ket{a_2}\cdots\ket{a_{i-1}}\ket{a_{i+1}}\cdots\ket{a_n}\,,
\]
\[
  \ket{\phi^{\rm even,i}_{1,i}}=\sum_{\sum_{j>i}a_{j}\,\rm mod\, 2=0}c_{\bvec{a}}\ket{a_1}\ket{a_2}\cdots\ket{a_{i-1}}\ket{a_{i+1}}\cdots\ket{a_n}\,.
\]
We have $\left\langle \phi^{\rm even,i}_{1,i}\right.\ket{\phi^{\rm odd,i}_{1,i}}=0$ and
\[
\begin{aligned}
  &-\bra{\phi_{1,i}}O^{\rm odd,i}_{1,1} \ket{\phi_{1,i}}+\bra{\phi_{1,i}}O^{\rm even,i}_{1,1} \ket{\phi_{1,i}}\\
  =&-\bra{\phi^{\rm odd,i}_{1,i}}O^{\rm odd,i}_{1,1} \ket{\phi^{\rm even,i}_{1,i}}-\bra{\phi^{\rm even,i}_{1,i}}O^{\rm odd,i}_{1,1} \ket{\phi^{\rm odd,i}_{1,i}}\\
  &+\bra{\phi^{\rm even,i}_{1,i}}O^{\rm even,i}_{1,1} \ket{\phi^{\rm even,i}_{1,i}}+\bra{\phi^{\rm odd,i}_{1,i}}O^{\rm even,i}_{1,1} \ket{\phi^{\rm odd,i}_{1,i}}\\
  =  &\bra{\phi^{\rm odd,i}_{1,i}}O^{\rm odd,i}_{1,1} \ket{-\phi^{\rm even,i}_{1,i}}+\bra{-\phi^{\rm even,i}_{1,i}}O^{\rm odd,i}_{1,1} \ket{\phi^{\rm odd,i}_{1,i}}\\
  &+\bra{-\phi^{\rm even,i}_{1,i}}O^{\rm even,i}_{1,1} \ket{-\phi^{\rm even,i}_{1,i}}+\bra{\phi^{\rm odd,i}_{1,i}}O^{\rm even,i}_{1,1} \ket{\phi^{\rm odd,i}_{1,i}}\\
  = &(\bra{\phi^{\rm odd,i}_{1,i}}-\bra{\phi^{\rm even,i}_{1,i}})(O^{\rm odd,i}_{1,1}+O^{\rm even,i}_{1,1})(\ket{\phi^{\rm odd,i}_{1,i}}-\ket{\phi^{\rm even,i}_{1,i}})\,.
\end{aligned}
\]
This implies
\[
  \left\|c^\dagger_i c_i \otimes_F O^i_{1,1}\right\|_\infty=\sup_{\left\|\ket{1_i}\ket{\phi_{1,i}}\right\|_2=1}\left|\bra{\phi_{1,i}}\bra{1_i}O^i_{1,1} \ket{1_i}\ket{\phi_{1,i}}\right|=\left\|O^{\rm odd,i}_{1,1}+O^{\rm even,i}_{1,1}\right\|_\infty=\left\|O^{i}_{1,1}\right\|_\infty\,.
\]
Similarly, we also have
\[
  \left\|c_i c^\dagger_i \otimes_F O^i_{-1,-1}\right\|_\infty=\left\|O^i_{-1,-1}\right\|_\infty,\quad   \left\|O\right\|_{\infty}\geq \max\left\{\left\|O^i_{1,1}\right\|_\infty,\left\|O^i_{-1,-1}\right\|_\infty\right\}
\]

Finally, we notice
\[
  \mathsf{T}_i(O)=\frac{1}{2}\left(c^\dagger_i c_i+c_ic^\dagger_i\right) \otimes_F (O^i_{1,1}+O^{i}_{-1,-1})\,.
\]
Similar to the above calculation, we have
\[
  \left\|\mathsf{T}_i(O)\right\|_{\infty}=\left\|\frac{O^i_{1,1}+O^i_{-1,-1}}{2}\right\|_\infty\leq \max\left\{\left\|O^i_{1,1}\right\|_\infty,\left\|O^i_{-1,-1}\right\|_\infty\right\}\leq\left\|O\right\|_{\infty}\,.
\]
This concludes the proof.
\end{proof}

\begin{proof}[Proof of Lemma~\ref{lem:bound_of_O_norm}]
  We first notice
  \[
  O=\delta^f_1(O)+\mathsf{T}_1(O)=\delta^f_1(O)+\delta^f_2(\mathsf{T}_1(O))+\mathsf{T}_2\circ \mathsf{T}_1(O)\,.
  \]
  Applying the above equality iteratively, we have
  \[
  O-\underbrace{\mathsf{T}_{n}\circ \dots\circ \dots \mathsf{T}_{1}(O)}_{I/2^N\mathrm{Tr}(O)}=\sum^n_{i=1}\delta^f_i\left(\mathsf{T}_{i-1}\circ \dots\circ \dots \mathsf{T}_{1}(O)\right)\,.
  \]
  Next, using the fact that $[\delta^f_i,\mathsf{T}_j]=0$ and $\|\mathsf{T}_i\|_{\infty\rightarrow\infty}\leq 1$, we have
  \[
    \left\|O-\underbrace{\mathsf{T}_{n}\circ \dots\circ \dots \mathsf{T}_{1}(O)}_{I/2^N\mathrm{Tr}(O)}\right\|\leq \sum^n_{i=1}\left\|\delta^f_i\left(O\right)\right\|
  \]
  This concludes the proof.
\end{proof}

\begin{proof}[Proof of Lemma~\ref{lem:commute}] Decompose $O$ as~\eqref{eqn:O_form}, it suffices to prove $[\mathcal{F},\mathsf{T}_j]=0$. There are three cases:
  \begin{itemize}
    \item $a_j\neq b_j$: This case is trivial because $\mathsf{T}_j(O)=\mathsf{T}_j(\mathcal{F}(O))=0$.
    \item $a_j=b_j=0$: This case is also trivial because $\mathsf{T}_j(O)=O$. and $\mathsf{T}_j(\mathcal{F}(O))=\mathcal{F}(O)$.
    \item $a_j=b_j=1$: We note $c_jc^\dagger_j$ generates the same parity as $c^\dagger_jc_j$ when they commute with $c^\dagger_i$ or $c_i$ when $i\neq j$. This implies $\mathsf{T}_j\mathcal{F}(O)=\mathcal{F}\mathsf{T}_j(O)$.
  \end{itemize}

\end{proof}

\begin{proof}[Proof of Lemma~\ref{lem:local_fix}] We only need to consider the case when $a_j=b_j$ for $j\in J$. In this case
  \[
  \mathsf{T}_J(O)=\frac{I_J}{2^{|J|}}\otimes_F O_{\{1,2,\dots,N\}\setminus J}\,,
  \]
  Because $O$ takes the form of~\eqref{eqn:O_form}, if we expand $\mathsf{T}_J(O)$ into the form of~\eqref{eqn:O_form}, we must have $\sum_{i\notin J} a_i+b_i$ is an even number. Then, we have
  \[
  \mathcal{L}^\dagger(\mathsf{T}_J(O))=\mathcal{L}^\dagger\left(\frac{I_J}{2^{|J|}}\right)\otimes_F O_{\{1,2,\dots,N\}\setminus J}=0\,.
  \]
  This concludes the proof.
\end{proof}

\end{document}